   \def\@citecolor{blue}%
   \def\@urlcolor{blue}%
   \def\@linkcolor{blue}%
\def\orcidID#1{\smash{\href{http://orcid.org/#1}{\protect\raisebox{-1.25pt}{\protect\includegraphics{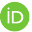}}}}}
\newcommand{\cmark}{{\color{green!70!black}\text{\ding{51}}}}%
\newcommand{\xmark}{{\color{red!70!black}\text{\ding{55}}}}%
\newcommand{\checkmarkt}[1]{%
	\edef\TVALUE{{#1}}%
	\expandafter\ifstrequal\TVALUE{yes}{\cmark}{}%
	\expandafter\ifstrequal\TVALUE{no}{\xmark}{}%
			\expandafter\ifstrequal\TVALUE{no*}{\xmark $^\star$}{}%
}
\spnewtheorem{cremark}{Remark}{\bfseries}{\itshape}
\spnewtheorem{cnotation}[definition]{Notation}{\bfseries}{\itshape}
\Crefname{section}{Section}{Sections}
\crefname{corollary}{\text{Corollary}}{\text{Corollaries}}
\Crefname{corollary}{\text{Corollary}}{\text{Corollaries}}
\crefname{lemma}{\text{Lemma}}{\text{Lemmas}}
\Crefname{lemma}{\text{Lemma}}{\text{Lemmas}}
\crefname{proposition}{\text{Prop.}}{\text{Propositions}}
\Crefname{proposition}{\text{Proposition}}{\text{Propositions}}
\crefname{definition}{\text{Def.}}{\text{Definitions}}
\Crefname{definition}{\text{Definition}}{\text{Definitions}}
\crefname{notation}{\text{Notation}}{\text{Notations}}
\Crefname{notation}{\text{Notation}}{\text{Notations}}
\crefname{theorem}{\text{Thm.}}{\text{Theorems}}
\Crefname{theorem}{\text{Theorem}}{\text{Theorems}}
\crefname{figure}{\text{Fig.}}{\text{Figures}}
\Crefname{figure}{\text{Figure}}{\text{Figures}}
\crefname{example}{\text{Example}}{\text{Examples}}
\Crefname{example}{\text{Example}}{\text{Examples}}
\newcommand{\noop}[1]{}
\newcommand{\kw}[1]{\ensuremath{\mathbf{\mathtt{#1}}}}
\newcommand{\ALT}{\;\;|\;\;}
\newcommand{\ie}{{i.e.,} }
\newcommand{\eg}{{e.g.,} }
\newcommand{\wrt}{w.r.t.~}
\newcommand{\aka}{a.k.a.~}
\newcommand{\inarr}[1]{\begin{array}{@{}l@{}}#1\end{array}}
\newcommand{\set}[1]{\{{#1}\}}
\newcommand{\N}{{\mathbb{N}}}
\newcommand{\dom}[1]{\textit{dom}{({#1})}}
\newcommand{\tup}[1]{{\langle{#1}\rangle}}
\newcommand{\nin}{\not\in}
\newcommand{\suq}{\subseteq}
\newcommand{\size}[1]{{|{#1}|}}
\newcommand{\maketil}[1]{{#1}\ldots{#1}}
\newcommand{\til}{\maketil{,}}
\newcommand{\uplustil}{\maketil{\uplus}}
\newcommand{\cdottil}{\cdot\!\ldots\!\cdot}
\newcommand{\rst}[1]{|_{#1}}
\newcommand{\defeq}{\mathrel{\overset{\makebox[0pt]{\mbox{\normalfont\tiny\sffamily def}}}{=}}}
\newcommand{\powerset}[1]{\mathcal{P}({#1})}%
\newcommand{\raisemath}[1]{\mathpalette{\raisem@th{#1}}}
\newcommand{\raisem@th}[3]{\raisebox{#1}{$#2#3$}}
\newcommandx{\yaHelper}[2][1=\empty]{%
\ifthenelse{\equal{#1}{\empty}}%
  { \ensuremath{ \scriptstyle{ #2 } } } %
  { \raisebox{ #1 }[0pt][0pt]{ \ensuremath{ \scriptstyle{ #2 } } } }  %
}
\newcommandx{\yrightarrow}[4][1=\empty, 2=\empty, 4=\empty, usedefault=@]{%
  \ifthenelse{\equal{#2}{\empty}}
  { \xrightarrow{ \protect{ \yaHelper[ #4 ]{ #3 } } } } %
  { \xrightarrow[ \protect{ \yaHelper[ #2 ]{ #1 } } ]{ \protect{ \yaHelper[ #4 ]{ #3 } } } } %
}
\newcommand{\astep}[1]{\mathrel{\raisebox{-0.8pt}{\ensuremath{\xrightarrow{#1}}}}}
\newcommand{\asteplab}[2]{{}\mathrel{\raisebox{-0.8pt}{\ensuremath{\xrightarrow{#1}}}_{#2}}{}}
\newcommand{\xRightarrow}[2][]{\ext@arrow 0359\Rightarrowfill@{#1}{#2}}
\colorlet{colorPO}{gray!60!black}
\colorlet{colorRF}{green!60!black}
\colorlet{colorMO}{orange}
\colorlet{colorFR}{purple}
\colorlet{colorECO}{red!80!black}
\colorlet{colorSYN}{green!40!black}
\colorlet{colorHB}{blue}
\colorlet{colorPPO}{magenta}
\colorlet{colorPB}{olive}
\colorlet{colorSBRF}{olive}
\colorlet{colorRMW}{olive!70!black}
\colorlet{colorRSEQ}{blue}
\colorlet{colorSC}{violet}
\colorlet{colorPSC}{violet}
\colorlet{colorREL}{olive}
\colorlet{colorCONFLICT}{olive}
\colorlet{colorRACE}{olive}
\colorlet{colorWB}{orange!70!black}
\colorlet{colorPSC}{violet}
\colorlet{colorSCB}{violet}
\colorlet{colorDEPS}{violet}
\colorlet{colorS}{orange!70!black}
\colorlet{colorTPO}{olive}
\colorlet{colorDTPO}{violet!80!black}
\tikzset{
   every path/.style={>=stealth},
   po/.style={->,color=colorPO,thin,shorten >=-0.5mm,shorten <=-0.5mm},
   sw/.style={->,color=colorSYN,shorten >=-0.5mm,shorten <=-0.5mm},
   rf/.style={->,color=colorRF,dashed,,shorten >=-0.5mm,shorten <=-0.5mm},
   hb/.style={->,color=colorHB,thick,shorten >=-0.5mm,shorten <=-0.5mm},
   mo/.style={->,color=colorMO,dotted,very thick,shorten >=-0.5mm,shorten <=-0.5mm},
   no/.style={->,dotted,thick,shorten >=-0.5mm,shorten <=-0.5mm},
   fr/.style={->,color=colorFR,dotted,thick,shorten >=-0.5mm,shorten <=-0.5mm},
   deps/.style={->,color=colorDEPS,dotted,thick,shorten >=-0.5mm,shorten <=-0.5mm},
   rmw/.style={->,color=colorRMW,thick,shorten >=-0.5mm,shorten <=-0.5mm},
   tpo/.style={->,color=colorTPO,dotted,thick,shorten >=-0.5mm,shorten <=-0.5mm},
   dtpo/.style={->,color=colorDTPO,dotted, thick,shorten >=-0.5mm,shorten <=-0.5mm},
   revisit/.style={inner sep=1pt,rounded corners,fill=phlightcolor},
}
\newcommand{\rlab}[2]{{\lR}({#1},{#2})}
\newcommand{\wlab}[2]{{\lW}({#1},{#2})}
\newcommand{\fllab}[1]{{\lFL}({#1})}
\newcommand{\ulab}[3]{{\lU}({#1},{#2},{#3})}
\newcommand{\folab}[1]{{\lFO}({#1})}
 \newcommand{\sflab}{{\lSF}}
\newcommand{\pflab}[1]{{\lPF}({#1})}
\newcommand{\rexlab}[2]{{\lRex}({#1},{#2})}
\newcommand{\fotlabp}[1]{{\lFOT}({#1})}
\newcommand{\persistlab}{{\lPERSIST}}
\newcommand{\instr}[1]{\green{#1}}
\newcommand{\SFLab}{\mathsf{SFLab}}
\newcommand{\lR}{{\mathtt{R}}}
\newcommand{\lW}{{\mathtt{W}}}
\newcommand{\lQ}{{\mathtt{Q}}}
\newcommand{\lU}{{\mathtt{RMW}}}
\newcommand{\linit}{{\mathtt{q}_\Init}}
\newcommand{\lT}{{\mathtt{T}}}
\newcommand{\lFL}{{\mathtt{FL}}}
\newcommand{\lFO}{{\mathtt{FO}}}
\newcommand{\lPF}{{\mathtt{LSF}}}
 \newcommand{\lSF}{{\mathtt{SF}}}
\newcommand{\lFOT}{{\mathtt{FO}}}
\newcommand{\lRex}{{\mathtt{R}\text{-}\mathtt{ex}}}
\newcommand{\lPERSIST}{{\mathtt{per}}}
\newcommand{\lSigma}{{\mathbf{\Sigma}}}
\newcommand{\lTYP}{{\mathtt{typ}}}
\newcommand{\lLOC}{{\mathtt{var}}}
\newcommand{\lLOCSET}{{\mathtt{varset}}}
\newcommand{\lX}{\mathtt{X}}
\newcommand{\Init}{\mathsf{Init}}
\newcommand{\Tid}{\mathsf{Tid}}
\newcommand{\Loc}{\mathsf{Var}}
\newcommand{\NVLoc}{\mathsf{NV}\Loc}
\newcommand{\VLoc}{\mathsf{V}\Loc}
\newcommand{\Val}{\mathsf{Val}}
\newcommand{\Lab}{\mathsf{Lab}}
\newcommand{\HTLab}{\mathsf{HTLab}}
\newcommand{\Reg}{\mathsf{Reg}}
\newcommand{\sep}{\hspace{0.05em};\,}
\newcommand{\iseq}{I}
\newcommand{\readInst }[2]{#1 \;{:=}\;#2}
\newcommand{\havocInst}{\kw{havoc}}
\newcommand{\pfenceInst}[1]{\kw{lsfence}({#1})}
\newcommand{\sfenceInst}{\kw{sfence}}
\newcommand{\ifGotoInst}[2]{\kw{if} \; #1 \; \kw{goto} \; #2}
\newcommand{\ifGotoInsts}[2]{\kw{if} \; #1 \\ \quad\; \kw{goto} \; #2}
\newcommand{\GotoInst}[1]{\kw{goto} \; #1}
\newcommand{\writeInst}[2]{#1\;{:=}\;#2}
\newcommand{\assignInst}[2]{#1\;{:=}\;#2}
\newcommand{\incInst}[3]{#1 \;{:=}\;\faddInstn({#2},{#3})}
\newcommand{\casInst}[4]{#1 \;{:=}\;\casInstn({#2},{#3},{#4})}
\newcommand{\casInstn}{\kw{CAS}}
\newcommand{\faddInstn}{\kw{FADD}}
\newcommand{\flInst}[1]{\kw{fl}({#1})}
\newcommand{\foInst}[1]{\kw{fo}({#1})}
\newcommand{\callInst}[1]{\kw{call}({#1})}
\newcommand{\returnInst}{\kw{return}}
\newcommand{\myendpbInst}[1]{\kw{endPB}({#1})}
\newcommand{\beginpbInst}[1]{{\tt beginPB}({#1})}
\newcommand{\lbeginpb}{{\sf beginPB}}
\newcommand{\lendpb}{{\sf endPB}}
\newcommand{\beginpblab}[1]{\lbeginpb({#1})}
\newcommand{\myendpblab}[1]{\lendpb({#1})}
\newcommand{\makemodel}[1]{\ensuremath{{\mathsf{#1}}}\xspace}
\newcommand{\SC}{\makemodel{SC}}
\newcommand{\makeP}[1]{{\ensuremath{\mathsf{P}}{#1}}\xspace}
\newcommand{\PSC}{\makeP{\SC}}
\newcommand{\PTSO}{\makemodel{Px86}}
\newcommand{\progstate}{\overline{q}}
\newcommand{\sprogstate}{q}
\newcommand{\memstate}{{M}}
\newcounter{mylabelcounter}
\newcommand{\labelAxiom}[2]{%
\hfill{\normalfont\textsc{(#1)}}\refstepcounter{mylabelcounter}
\immediate\write\@auxout{%
  \string\newlabel{#2}{{\unexpanded{\normalfont\textsc{#1}}}{\thepage}{{\unexpanded{\normalfont\textsc{#1}}}}{mylabelcounter.\number\value{mylabelcounter}}{}}
}%
}
\newenvironment{myitemize}
{\begin{list}
    {$\bullet$}
    {
      \setlength{\itemsep}{0pt}
      \setlength{\parsep}{0pt}
      \setlength{\topsep}{1pt}
      \setlength{\partopsep}{0pt}
      \setlength{\leftmargin}{1.2em}
      \setlength{\labelwidth}{0.5em}
      \setlength{\labelsep}{0.4em}
    } %
}
{\end{list}}
\newcounter{mycounter}  %
\newenvironment{myenum}
  {\begin{list}
      {\arabic{mycounter}.} %
      {\usecounter{mycounter}   %
        \setlength{\itemsep}{0pt}
        \setlength{\parsep}{0pt}
        \setlength{\topsep}{1pt}
        \setlength{\partopsep}{0pt}
        \setlength{\leftmargin}{1.2em}
        \setlength{\labelsep}{0.4em}
      } %
  }
  {\end{list}}
\definecolor{DarkGreen}{rgb}{0.05, 0.45, 0.05}
\newcommand{\hide}[1]{}
\newcommand{\A}{{A}}
\newcommand{\sprog}{S}
\newcommand{\prog}{\mathit{Pr}}
\newcommand{\locset}{{X}}
\newcommand{\nvlocset}{{\dot{X}}}
\newcommand{\nvlocseta}{{\dot{Y}}}
\newcommand{\loc}{{x}}
\newcommand{\loca}{{y}}
\newcommand{\nvloc}{{\dot{x}}}
\newcommand{\nvloca}{{\dot{y}}}
\newcommand{\vloc}{{\tilde{x}}}
\newcommand{\vloca}{{\tilde{y}}}
\newcommand{\tid}{{\tau}}
\newcommand{\tida}{{\pi}}
\newcommand{\lab}{{l}}
\newcommand{\val}{v}
\newcommand{\vala}{u}
\renewcommand{\exp}{{e}}
\newcommand{\reg}{{r}}
\renewcommand{\instr}{\mathit{inst}}
\newcommand{\pc}{\mathit{pc}}
\newcommand{\lpc}{\mathtt{pc}}
\newcommand{\lphi}{\mathrm{\phi}}
\newcommand{\tidlab}[2]{\tup{{#1},{#2}}}
\newcommand{\asteptidlab}[3]{{}\mathrel{\raisebox{-0.8pt}{\ensuremath{\xrightarrow{#1,#2}}}_{#3}}{}}
\newcommand{\mylabel}[2]{#2\def\@currentlabel{#2}\label{#1}}
\newcommand{\rulename}[1]{{\textsc{{#1}}}}
\newcommand{\Nthreads}{\mathsf{N}}
\newcommand{\ctid}[1]{\mathtt{T}_#1}
\newcommand{\cloc}[1]{\mathtt{
\ifthenelse{\equal{#1}{1}}{x}{
\ifthenelse{\equal{#1}{2}}{y}{
\ifthenelse{\equal{#1}{3}}{z}{
\ifthenelse{\equal{#1}{4}}{w}{
\problem}}}}}}
\newcommand{\cnvloca}[1]{\dot{\ensuremath{#1}}}
\newcommand{\cnvloc}[1]{\cnvloca{\cloc{#1}}}
\newcommand{\cvloca}[1]{\tilde{\ensuremath{#1}}}
\newcommand{\cvloc}[1]{\cvloca{\cloc{#1}}}
\newcommand{\creg}[1]{\mathtt{
\ifthenelse{\equal{#1}{1}}{a}{
\ifthenelse{\equal{#1}{2}}{b}{
\ifthenelse{\equal{#1}{3}}{c}{
\ifthenelse{\equal{#1}{4}}{d}{
\ifthenelse{\equal{#1}{5}}{e}{
\ifthenelse{\equal{#1}{6}}{f}{
\problem}}}}}}}}
\newcommand{\cval}[1]{\mathtt{
\ifthenelse{\equal{#1}{1}}{v_1}{
\ifthenelse{\equal{#1}{2}}{v_2}{
\ifthenelse{\equal{#1}{3}}{v_3}{
\ifthenelse{\equal{#1}{4}}{v_4}{
\problem}}}}}}
\newcommand{\lmem}{\ensuremath{\dot{\mathtt{m}}}}
\newcommand{\mem}{\ensuremath{\dot{m}}}
\newcommand{\lvmem}{{\tilde{\mathtt{m}}}}
\newcommand{\vmem}{{\tilde{m}}}
\newcommand{\pbuff}{\ensuremath{\mathit{p}}}
\newcommand{\Pbuff}{\ensuremath{\mathit{P}}}
\newcommand{\lPbuff}{\ensuremath{\mathtt{P}}}
\newcommand{\epsl}{\ensuremath{\epsilon}\xspace}
\newcommand{\crash}{\lightning}
\newcommand{\tr}{t}
\newcommand{\traces}[1]{\mathsf{traces}({#1})}
\newcommand{\seq}{\mathbin{;}}
\newcommand{\cs}[2]{{#1}{\Join}{#2}}
\newcommand{\diffemph}[1]{{\tightshadetext{\ensuremath{#1}}}}
\definecolor{darkturquoise}{rgb}{0.012, 0.502, 0.486}
\definecolor{lilac}{rgb}{0.580, 0.341, 0.922}
\definecolor{StringRed}{rgb}{.637,0.082,0.082}
\definecolor{CommentGreen}{rgb}{0.0,0.55,0.3}
\definecolor{KeywordBlue}{rgb}{0.0,0.3,0.55}
\definecolor{LinkColor}{rgb}{0.55,0.0,0.3}
\definecolor{CiteColor}{rgb}{0.55,0.0,0.3}
\definecolor{HighlightColor}{rgb}{0.0,0.0,0.0}
\definecolor{grey}{rgb}{0.5,0.5,0.5}
\definecolor{darkgrey}{rgb}{0.4,0.4,0.4}
\definecolor{red}{rgb}{1,0,0}
\definecolor{darkgreen}{rgb}{0.0,0.7,0.0}
\definecolor{mydarkgreen}{rgb}{0.0,0.3,0.0}
\definecolor{darkblue}{rgb}{0.0,0.0,0.5}
\definecolor{darkred}{rgb}{0.7,0.0,0.0}
\definecolor{mygrey}{rgb}{0.7, 0.7, 0.7}
\definecolor{commentgreen}{rgb}{0, 0.3, 0}
\definecolor{darkred}{rgb}{0.5, 0, 0}
\definecolor{nicerhighlightcolor}{HTML}{F0E0F0}
\definecolor{colorPROMOTED}{rgb}{0.906, 0.161, 0.541}
\definecolor{phlightcolor}{rgb}{0.45,0.95,0.78}
\newcommand{\tightshadetext}[2][nicerhighlightcolor]{\setlength{\fboxsep}{0pt}\colorbox{#1}{#2}}
\edef\myindent{\the\parindent}%
\newcommand{\myhrule}{{\color{lightgray}\hrule}}
\newcommand{\lCALL}{{\mathtt{CALL}}}
\newcommand{\lRET}{{\mathtt{RET}}}
\newcommand{\calllab}[2]{{\lCALL}({#1},{#2})}
\newcommand{\retlab}[2]{{\lRET}({#1},{#2})}
\newcommand{\method}{\mathit{f}}
\newcommand{\lmethod}{\mathtt{f}}
\newcommand{\methodset}{{\mathit{F}}}
\newcommand{\lib}{L}
\newcommand{\stub}{\textrm{stub}}
\newcommand{\libstub}{\lib_\stub}
\newcommand{\main}{\mathsf{main}}
\newcommand{\recoverMethod}{\mathsf{recover}}
\newcommand{\pcs}{\pc_\mathsf{s}}
\newcommand{\lpcs}{\mathtt{pc_\mathsf{s}}}
\newcommand{\client}[3][]{
  \ifthenelse{\equal{#1}{}}{
    {{#2}[{#3}]}
  }{
    {{{#2}[{#3}]}\shortparallel{#1}}
  }
}
\newcommand{\MGCn}{\mathit{MGC}}
\newcommand{\MGCfree}{\MGCn_\mathsf{free}}
\newcommand{\MGCrec}{\MGCn_\mathsf{rec}}
\newcommand{\MGC}[2][]{
\client[#1]{\MGCn}{#2}
}
\newcommand{\MethodNames}{\mathsf{F}}
\newcommand{\pblock}{B}
\newcommand{\lpblock}{\ensuremath{\mathtt{B}}}
\newcommand{\pbid}{j}%
\newcommand{\pbidSet}{{\mathit{Bid}}}
\newcommand{\lpbidSet}{\ensuremath{\mathtt{Bid}}}
\newcommand{\ipbwlab}[2]{{#1}{:}{\lW}({#2})}
\newcommand{\pbwlab}[1]{{\lW}({#1})}
\newcommand{\callEv}[3]{
  \ifthenelse{\equal{#3}{}}{
    ({#1}, {\kw{call}({#2})})
  }{
    ({#1}, {\kw{call}({#2},{#3})})
  }
}
\newcommand{\stateType}{Q}
\newcommand{\initStateType}{\ltsstate_\Init}
\newcommand{\labelType}{\Sigma}
\newcommand{\transType}{T}
\newcommand{\selStates}{\lQ}
\newcommand{\selInitStates}{\linit}
\newcommand{\selLabels}{\lSigma}
\newcommand{\selTrans}{\lT}
\newcommand{\ltsstate}{q}
\newcommand{\ltslabel}{\sigma}
\newcommand{\rsthistory}[2]{\mathsf{H}_{#1}(#2)}
\newcommand{\fhistory}[1]{{\mathsf{H}(#1)}}
\newcommand{\history}{h}
\newcommand{\usedlocs}[1]{\Loc({#1})}
\newcommand{\usedclientlocs}[2]{\Loc({#2}\setminus{#1})}
\newcommand{\unusedlocs}[1]{\Loc\setminus\usedlocs{#1}}
\newcommand{\mergemem}[4]{\tup{#1,#2}\uplus\tup{#3,#4}}
\newcommand{\nmergemem}[4]{
\mergemem{{#1}\rst{#2}}{#2}{{#3}\rst{#4}}{#4}
}
\renewcommand{\sharp}{\texttt{\#}}
\newcommand{\cparagraph}[1]{\vspace{0.4em}\noindent\textbf{#1.}}
\newcommand{\citet}[1]{\cite{#1}}
\newif\iflong
\newif\ifdraft
\newcommand{\mytitle}{Abstraction for Crash-Resilient Objects}
\renewcommand\mytitle{Abstraction for Crash-Resilient Objects (Extended Version)}\fi
\renewcommand\mytitle{Abstraction for Crash-Resilient Objects}\fi\fi
\newcommand\mycodeoffset{-2ex}
\newcommand\mycodefactor{0.92}
\begin{document}

\title{\mytitle\thanks{This research was supported by the Israel Science
Foundation (grants 1566/18 and 2005/17) and by the European Research Council
(ERC) under the European Union’s Horizon 2020 research and innovation programme
(grant agreement no. 851811). Additionally, the first author was supported by
the Blavatnik Family Foundation, and the second by the Alon
Young Faculty Fellowship.}}

\author{Artem Khyzha\thanks{Now at Arm Ltd, Cambridge, UK}\orcidID{0000-0002-6781-9665} \and Ori Lahav(\Envelope)\orcidID{0000-0003-4305-6998}}
\institute{Tel Aviv University, Tel Aviv, Israel}
\authorrunning{A. Khyzha and O. Lahav}
\maketitle              %

\begin{abstract}
We study abstraction for crash-resilient concurrent objects using non-volatile memory (NVM). We develop a library-correctness criterion that is sound for ensuring contextual refinement in this setting, thus allowing clients to reason about library behaviors in terms of their abstract specifications, and library developers to verify their implementations against the specifications abstracting away from  particular client programs. As a semantic foundation we employ a recent NVM model, called Persistent Sequential Consistency, and extend its language and operational semantics with useful specification constructs. The proposed correctness criterion accounts for NVM-related interactions between client and library code due to explicit persist instructions, and for calling policies enforced by libraries. We illustrate our approach on two implementations and specifications of simple persistent objects with different prototypical durability guarantees. Our results provide the first approach to formal compositional reasoning under NVM.

\keywords{Non-volatile memory  \and Linearizability \and Library abstraction}
\end{abstract}

\newcommand{\edited}[1]{{\color{blue}{#1}}}

\section{Introduction}
\label{sec:intro}

Non-volatile memory, or NVM for short, is an emerging
technology that enables byte addressable and high performant storage
alongside with data persistency across system crashes.
This combination of features allows researchers and practitioners to
develop a variety of efficient crash-resilient data structures (see, \eg~\cite{Friedman-persistent-queue,durable-sets}).
Recently, NVM has started to become available in commodity architectures
of manufacturers such as Intel and ARM~\cite{intel-manual,ARMv8},
and formal (operational and declarative) models of these systems
have been proposed~\cite{pxes-popl,Khyzha2021,Kyeongmin2021}.

Unfortunately, like other new technologies, NVM puts more burden on programmers.
Indeed, to get close to the performance of DRAM,
writes to the NVM are first kept in volatile (\ie losing contents upon crashes) caches,
and only later persist (\ie propagate to the NVM),
possibly not in the order in which they were issued.
This results in counterintuitive behaviors even for sequential programs
and requires careful management using barriers of different kinds, \aka explicit persist instructions,
for guaranteeing that the system recovers to a consistent state upon a failure.
Combined with standard concurrency issues,
programming on such machines is highly challenging.

To tackle the complexity and make NVM widely applicable,
one would naturally want to draw on libraries
encapsulating highly optimized concurrent crash-resilient data structures
(\aka persistent objects).
This approach goes both ways:
programmers should be able to reason about their code
using abstract library specifications that hide the implementation details,
and in turn, library developers should be able to verify ``once and for all'' their implementations against their specifications abstracting away from a particular client program.
From a formal standpoint, this indispensable modularity requires us to have a so-called \emph{(library) abstraction theorem}:
a correctness condition that guarantees the soundness of client reasoning that assumes the specification instead of the implementation.
Put differently, the abstraction theorem should allow one to establish \emph{contextual refinement},
\ie conclude that the specification reproduces the implementation's client-observable behaviors
under any (valid) context.
To the best of our knowledge, while several correctness criteria for persistent objects, akin to classical linearizability,
have been proposed and have been established for multiple sophisticated implementations,
none of them has been formally related to contextual refinement by an abstraction theorem of this kind.

In this paper we formulate and prove an abstraction theorem
for concurrent programs utilizing non-volatile memory.
We target the ``Persistent Sequential Consistency'' model of~\cite{Khyzha2021}, or \PSC,
which enriches the standard sequentially consistent shared-memory with non-volatile storage
using per-location FIFO buffers to account for delayed and out-of-order persistence of writes.
\PSC constitutes a relatively simple model that is very close to developer's informal understanding of NVM.
While existing hardware does not implement \PSC as is, \citet{Khyzha2021} presented %
compiler mappings from \PSC to x86 (based on its persistency model from~\cite{pxes-popl}),
which can be used to ensure \PSC semantics on Intel machines.
Directly supporting relaxed memory models is left for future work.

\iflong

\else

\cparagraph{Auxiliary material}
An extended version, including proofs of theorems stated in the paper, is available at
\url{https://arxiv.org/abs/2111.03881}.
\fi

\section{Key Challenges and Ideas}
\label{sec:key}

We outline the main challenges
and the key ideas in our solutions.
We keep the discussion informal, leaving the formal development
to later sections.

\subsection{Library Specifications}
\label{sec:spec}

A choice of a formalism for specifying library behaviors is integral in
stating a library abstraction theorem. For libraries of concurrent data
structures (\aka concurrent objects), a popular approach is to give
specifications in terms of sequential objects with the help of the classical
notion of linearizability~\cite{herlihy-lin}, which requires every sequence of
method calls and returns that is possible to produce in a concurrent program
to correspond to a sequence that can be generated by the sequential object. In this
approach, a sequential object, represented by a set
of sequences of pairs of method invocations and their associated responses,
constitutes the library specification. Then, abstraction allows the client to
reason about calls to a concurrent library as if they execute atomically on a single thread,
or, equivalently, protected by a global lock~\cite{Filipovic10,Bouajjani15}.

For libraries of crash-resilient objects, there is more than one natural way of
interpreting sequential specifications and adapting the linearizability definition,
and no single notion of correctness \wrt sequential specifications
captures all different options. %
A crash-resilient object may ensure that all methods completed by the moment of crash survive through it, or that some prefix of them does.
It may also choose different possibilities for methods in progress at the moment of crash
(whether they are allowed take their effect at some later point after the crash or not).
Multiple adaptations of linearizability have been proposed,
each relating crash-resilient objects to sequential specifications in a different way.
 This includes:
strict linearizability~\cite{aguilera2003strict},
persistent atomicity~\cite{Guerraoui04},
and durable linearizability and its buffered variant~\cite{persistent-lin}.
Among them, buffered durable linearizability, which allows for efficient implementations,
ended up not being compositional, which means that it may happen that two (non-interacting)  libraries
are both correct, but their combination is not.
In fact, since each of the different notions is useful for particular objects,
one may naturally want to mix different correctness notions in a single client program.
This would force the client to reason with several alternatives
for interpreting sequential specifications, and to make sure that they
compose well with one another.

To approach this variety, we believe it is necessary to follow a different approach,
which is standard in concurrent program verification~(see, \eg \cite{Hawblitzel15,Gu15,Liang14}),
and was applied before for deriving abstraction theorems in different contexts~\cite{
Gotsman11_liveness,Gostman_13_ownership,Burckhardt_2012_library_tso}.
The idea is to take a library's specification to be just another library,
where the latter is intended to have a simpler implementation.
Then, we define a \emph{library correctness condition} stating
what it means for  one library $\lib$ to \emph{refine} another library $\lib^\sharp$
(equivalently, for $\lib^\sharp$ to \emph{abstract} $\lib$),
and prove an abstraction theorem that ensures that
when the library correctness condition is met,
the behaviors of any client using $\lib$
are contained in the behaviors of the client using $\lib^\sharp$.
Such a theorem is only useful if the correctness condition avoids
quantification over all possible clients, which would make the theorem trivial.

Using code for specifying libraries has several advantages over
correctness notions based on sequential specifications. First,
specifications and implementations are expressed and reasoned about in a
unified framework, alleviating the need to interpret the use of sequential
specifications by concurrent programs with system failures. Instead, the
client of the theorem replaces complex library code with simpler specification
code, and thus works with the semantics of a single language. Second,
it enables a layered verification technique for library developers, allowing
them to prove library correctness by introducing one or more intermediate
implementations between $\lib$ and $\lib^\sharp$. Finally, this formulation of
the abstraction theorem is compositional (\aka local) by construction, meaning that objects can be specified and verified in isolation.%

Now, ``code as a specification'' is only useful if
the programming language is sufficiently expressive for desirable specifications.
For concurrent objects, ``atomic blocks'', often included in theoretic
programming languages, provide a handy specification construct. For NVM, one
needs a way to govern the persistence similarly, offering intuitive
specifications for libraries that simplify client reasoning.
For that matter, viewing the out-of-order persistence of writes to different cache
lines as the major source of counterintuitive behaviors in NVM, we propose a
new specification construct, which we call \emph{persistence blocks}. Roughly
speaking, such blocks may only persist in their entirety, so that persistence
blocks ensure an ``all-or-nothing'' persistency behaviors to the writes they
protect.

For example,
when recovering after a crash during a run of the tiny program
$\writeInst{\cnvloc{1}}{1} \sep \writeInst{\cnvloc{2}}{1}$,\footnote{
We use ``overdots'' to denote non-volatile variables.
We assume that all variables are initialized to $0$
and that $\cnvloc{1}$ and $\cnvloc{2}$ lie on different cache lines.}
due to out-of-order persistence (writes to different cache lines
are not guaranteed to persist in the order in which there were issued),
we may reach any combination of values satisfying
$\cnvloc{1}\in\set{0,1}\land\cnvloc{2}\in\set{0,1}$.
In turn, if a persistence block is used, as in
$\beginpbInst{\cnvloc{1},\cnvloc{2}} \sep \writeInst{\cnvloc{1}}{1} \sep \writeInst{\cnvloc{2}}{1}
\sep \myendpbInst{\cnvloc{1},\cnvloc{2}}$,
then only $\cnvloc{1}=\cnvloc{2}=0 \lor \cnvloc{1}=\cnvloc{2}=1$
are possible upon recovery.

Our blocks are closely related to persistent transactions of the PMDK library~\cite{pmdk}
(but we avoid the term transaction, since persistence blocks
do not ensure isolation when executed concurrently).
In our technical development, we extend the \PSC model with instructions for persistence blocks,
and carefully construct their semantics (see \cref{sec:psc_ext}) to allow the abstraction result.
We believe that persistence blocks are a useful \emph{specification construct} for
various data structures, where data consistency naturally involves multiple locations
(often, pointers) being in-sync with one another.

\subsection{Client-Library Interaction Using Explicit Persist Instructions}
\label{sec:sfence}

The key to establishing a library abstraction theorem is in decomposing a program into two interacting sub-parts, a client and a library, and understanding
the interactions between them.
These interactions are usually defined in terms of \emph{histories}, taken to be sequences
of method invocations and responses, along with the values being passed.
The library correctness condition (the premise of the abstraction theorem)
requires that histories produced by using a library $\lib$
are also produced by its
specification $\lib^\sharp$ when both libraries are used by a certain ``most general client'' (MGC, for short)
that concurrently invokes arbitrary methods of $\lib$ an arbitrary number of times
with every possible argument.
The abstraction theorem ensures that if the library correctness condition holds,
then $\lib$ refines $\lib^\sharp$ for \emph{any} client.

Thus, for the abstraction theorem to hold, one has to make sure that the
interactions between any client and the library are fully captured in the
history produced by the library when used by the MGC. In \emph{crash-free}
sequentially consistent shared memory semantics, this is ensured by the
standard assumption that the client and the library manipulate disjoint set of
memory locations. Indeed, this restriction guarantees that clients can
communicate with libraries only via values passed to and returned from
method invocations.

\newcommand{\foo}{\mathit{f}}

However, we observe that under NVM, mutual interactions between the client and the library
go beyond passed values,
even when assuming disjointness of memory locations,
which makes the standard notion of a library history insufficient.
As a simple example, consider an interface with just one method $\foo$,
specified by $\lib^\sharp=[\foo \mapsto \sfenceInst\sep\returnInst]$.
The $\sfenceInst$ instruction, called ``store fence'', is an explicit persist instruction
meant to be used in conjunction with optimized barriers called ``flush-optimal'' (denoted by $\kw{fo}$).
Its role is to guarantee the persistence of previous write instructions that are guarded by flush-optimal instructions.
Concretely, under \PSC (following x86),
after a thread executes $\writeInst{\cnvloc{1}}{1} \sep \foInst{\cnvloc{1}} \sep \sfenceInst$,
we know that the write of $1$ to $\nvloc$ has persisted (\ie been propagated to the NVM),
while without the $\sfenceInst$, it may still sit in the volatile part of the memory system.

In turn, consider an implementation $\lib$, given by $\lib=[\foo \mapsto \returnInst]$,
that implements $\foo$ by doing nothing.
Clearly, $\lib$ does not implement $\lib^\sharp$ correctly.
Indeed, for the (sequential) client program
$\writeInst{\cnvloc{1}}{1} \sep \foInst{\cnvloc{1}} \sep \callInst{\foo} \sep \writeInst{\cnvloc{2}}{1}$
that uses $\lib^\sharp$, we have $\cnvloc{2}=1\implies \cnvloc{1}=1$ as a global invariant:
if the system has crashed and we have $\cnvloc{2}=1$ in the NVM, then
the $\sfenceInst$ ensures that $\cnvloc{1}=1$ is in the NVM as well.
Nevertheless, due to out-of-order persistence,
if we use $\lib$ in this program, we may get $\cnvloc{2}=1 \land \cnvloc{1}=0$ after a crash.
Now, the client and the libraries above mention disjoint locations,
and the histories that $\lib$ may produce for the MGC
are exactly the histories that $\lib^\sharp$ produces
(all well-formed sequences of ``call'' and ``return'').
Thus, when inspecting histories of $\lib$ and of $\lib^\sharp$,
we do not have sufficient information to observe the difference between them.

Generally speaking, the challenge stems from the fact that
certain explicit persist instructions ($\sfenceInst$
and other instructions whose implementation in the hardware contains an implicit store fence, such as RMWs in x86),
which can be executed by the library,
impose conditions on the persistence of writes performed by the client
that ran earlier on the same processor.

We address this challenge in two ways. First, we can sidestep the problem by
weakening the semantics of store fences, making them relative to a set of
locations (those used by the library or those used by the client). To do so,
we extend the programming language with a specification construct similar to a store fence,
but only affecting a given set of locations, and we
restrict its use by each component to mention only the locations it owns. The
use of these localized instructions instead of store fences is sufficient to
ensure that the interaction between client and library is fully captured in
histories, and allows us to establish the expected abstraction theorem.
Libraries that do not intend to provide a store fence functionality to their clients
can readily replace store fences with their localized counterparts.
Doing so gives more freedom to alternative implementations of the same specification,
which may, \eg use alternative persist instructions without
the store fence functionality (such as CLFLUSH in \cite{intel-manual}).

On the other hand, it is possible that in performance-critical systems, clients
would like to rely on a store fence that is executed anyway by the library for
the library's own needs. For that, the library developer needs to use a standard
store fence in the library's specification rather than the localized
counterpart, and the abstraction theorem has to handle store fences with their
standard, non-localized semantics. To do so, we expose in histories not only
method invocations and responses, but also store fences. Roughly
speaking, it means that in addition to the standard requirement on values passed
by method invocations and responses, for $\lib$ to refine $\lib^\sharp$, we
would also require that  $\lib$ performs a store fence whenever $\lib^\sharp$
does (which does not hold for the example above). Our notion of history in
\cref{sec:lib} is set to allow store fences (alongside with their weaker
localized versions), and the abstraction theorem in \cref{sec:abs} shows that
these extended histories are expressive enough for defining the
library-correctness condition.

\subsection{Handling Calling Policies}
\label{sec:policy}

The third challenge we address concerns abstraction for libraries that enforce certain calling policies on their clients.\footnote{
This challenge is not particular to NVM, but, interestingly,
to the best of our knowledge, it has not been addressed in previous work establishing
abstraction theorems.}
For instance, a library implementing a lock may require that the calls of each thread
for acquiring and releasing the lock perfectly interleave,
and a library implementing a single-producer queue may require that only one thread is
calling the enqueue method.
In the context of NVM, libraries often demand
that a distinguished \emph{recovery method} is called after every crash
before invoking any other method of the library.
When the client uses the library in a way that violates the calling policy,
the library developer ensures nothing, and the blame is assigned to the client.

In the presence of calling policies,
the contextual refinement guaranteed by the library abstraction theorem, stating
that all behaviors of a program $\client{\prog}{\lib}$ that uses $\lib$ are also
behaviors of the program $\client{\prog}{\lib^\sharp}$ that uses $\lib^\sharp$,
is only applicable for a program $\prog$ that respects the calling policy.
An interesting compositionality question arises: Are we allowed to
assume the library's specification when checking whether a program adheres to the
calling policy (that is, require that $\client{\prog}{\lib^\sharp}$ adheres to
the policy), or should this obligation be satisfied for the library's
implementation (that is, require that $\client{\prog}{\lib}$ adheres to the
policy)?

The latter option would limit the applicability of the abstraction theorem for client reasoning.
Indeed, it may be the case that establishing that $\client{\prog}{\lib}$ adheres to the policy
depends on the implementation $\lib$, whereas the abstraction theorem should allow
reasoning without knowing the implementation \emph{at all}.
On the other hand, the former option seems circular, as it uses contextual refinement
to establish its own precondition.

In this paper we show that requiring that $\client{\prog}{\lib^\sharp}$ adheres to the policy is actually sufficient for
ensuring contextual refinement. Roughly speaking, our proof avoids circular reasoning
by inspecting a \emph{minimal} contextual refinement violation, for which we are able to establish policy adherence
when using $\lib$, given policy adherence when using $\lib^\sharp$.
To the best of our knowledge, this is a novel argument in the context of library abstraction.
It is akin to DRF (data-race freedom) guarantees in weak memory concurrency,
where often programs are guaranteed to have strong semantics (usually, sequential consistency)
provided that certain race-freedom conditions hold in all runs under the \emph{strong} semantics.

We note that many library's calling policies are ``structural'', namely they only enforce certain
ordering constraints on the clients that do not depend on the values returned by the library
(in particular, ``execute recovery first'' is a structural policy).
In these cases, policy adherence holds even for an over-approximation $\libstub$ of $\lib$
that returns arbitrary values. Certainly, however, this is not always the case.
For example, a library $\lib$ implementing standard list methods, $\mathsf{cons}$ and $\mathsf{head}$,
may require that $\mathsf{head}$ is only called on non-empty lists
(like, \eg $\mathsf{pop\_front}$ in C++ that triggers undefined behavior
if applied to an empty list~\cite{cppref}).
Then, invoking $\mathsf{head}$ with the value returned from $\mathsf{cons}$
does adhere to the calling policy, but this is not the case for the over-approximated library $\libstub$,
which allows $\mathsf{cons}$ to return the empty list.

\newcommand{\istrans}{
\begin{mathpar}
\footnotesize
\inferrule*{
\iseq(\pc)=\assignInst{r}{\exp} \\\\ \phi'=\phi[\reg \mapsto \phi(\exp)]
}{
	\tup{\pc,\phi}%
	\astep{\epsl\vphantom{\lab}}_\iseq
	\tup{\pc+1, \phi'}%
}
\hfill
\inferrule*{
\iseq(\pc)=\ifGotoInst{\exp}{n_1 \shortmid\ldots\shortmid n_m} \\\\
\phi(\exp)\neq 0 \implies \pc' \in \set{n_1 \til n_m} \\\\
\phi(\exp)= 0 \implies \pc' = \pc+1
}{
	\tup{\pc,\phi}%
	\astep{\epsl\vphantom{\lab}}_\iseq
	\tup{\pc' , \phi}%
}
\hfill
\inferrule*{
	\iseq(\pc)=\havocInst
}{
	\tup{\pc,\phi}%
	\astep{\epsl\vphantom{\lab}}_\iseq
	\tup{\pc+1, \phi'}%
}
\\
\inferrule*{
	\iseq(\pc)=\writeInst{\loc}{\exp}
	\\\\ \lab=\wlab{\loc}{\phi(\exp)}
}{
	\tup{\pc,\phi}%
	\asteplab{\lab}{\iseq}
	\tup{\pc+1, \phi}%
}
\hfill
\inferrule*{
	\iseq(\pc)=\readInst{\reg}{\loc}
	\\\\ \lab=\rlab{\loc}{\val}
	\\\\ \phi'=\phi[\reg \mapsto \val]
}{	\tup{\pc,\phi}%
	\asteplab{\lab}{\iseq}
	\tup{\pc+1, \phi'}%
}
\hfill
\inferrule*{
	\iseq(\pc)\in
	{\left\lbrace
		\begin{array}{l}
		\flInst{\_}, \foInst{\_}, \\
		\sfenceInst, \pfenceInst{\_},\\
		\beginpbInst{\_}, \myendpbInst{\_}
		\end{array}
	\right\rbrace } \\\\ {\lab = \mathsf{matching\_label}(\iseq(\pc))}
}{
	\tup{\pc,\phi}%
	\asteplab{\lab}{\iseq}
	\tup{\pc+1, \phi}%
}
\end{mathpar}
}

\section{NVM Programs: Syntax and Semantics}
\label{sec:programs}

In this section we begin to present the formal settings for our results.
As standard in memory models, it is convenient to break the operational semantics
into:
a \emph{program} semantics (\aka thread subsystem)
and a \emph{memory} semantics.
We represent both components as labeled transition systems whose transition labels
correspond to the operations they perform.
We then consider the synchronized runs of the program and the memory,
where
program actions that interact with the memory
are matched by actions executed by the memory system (see \cref{sec:systems}).

Next,
we focus on the program part of the semantics,
presenting both syntax (\cref{sec:program_syntax})
and semantics (\cref{sec:program_semantics}).
We use the following standard notations.

\cparagraph{Notation for finite sequences}
For a finite alphabet $\Sigma$, we denote by $\Sigma^*$ (respectively,
$\Sigma^+$) the set of all (non-empty) sequences over $\Sigma$. We
use $\epsilon$ to denote the empty sequence. The length of a sequence $s$ is
denoted by $\size{s}$. %
We often identify
sequences with their underlying functions (whose domain is $\set{1 \til
\size{s}}$), and write $s(k)$ for the symbol at position $1 \leq k \leq
\size{s}$ in $s$. We write $\sigma\in s$ if $\sigma$ appears in $s$, that is
if $s(k)=\sigma$ for some $1 \leq k \leq \size{s}$.
We use ``$\cdot$'' for concatenating sequences,
and identify symbols with sequences of length $1$.

\subsection{Program Syntax}
\label{sec:program_syntax}

The domains and metavariables used to range over them are as follows:

\begin{tabular}{rll}
\emph{values} & $\quad$ & $\val, \vala \in \Val = \set{0,1,2,\ldots}$ %
\\ \emph{shared non-volatile variables} & $\quad$ & $\nvloc, \nvloca \in \NVLoc = \set{\cnvloc{1},\cnvloc{2},\ldots}$
\\ \emph{shared volatile variables} & $\quad$ & $\vloc, \vloca \in \VLoc = \set{\cvloc{1},\cvloc{2},\ldots}$
\\ \emph{shared variables} & $\quad$ & $\loc, \loca \in \Loc = \NVLoc \cup \VLoc$
\\ \emph{register names} & $\quad$ & $\reg \in \Reg = \set{\creg{1},\creg{2},\ldots}$ %
\\ \emph{thread identifiers} & $\quad$ & $\tid, \tida \in \Tid = \set{\ctid{1},\ctid{2}\til \ctid{\Nthreads}}$
\\ \emph{method names} & $\quad$ & $\method \in \MethodNames \qquad \main\nin\MethodNames$ %
\end{tabular}

\noindent
Thus, there are three kinds of variables: shared non-volatile, shared volatile,
and thread-local ones (called registers), which are also volatile.
A distinguished name $\main$ is reserved for the starting point of the program execution.

For concreteness, we present a simple programming-language syntax.
Its expressions and instructions are given by the following grammar:\footnote{\iflong
In \cref{app:rmw}, we present the required extensions for including read-modify-write instructions.
\else
In the extended version of this paper, we
also include read-modify-write instructions.\fi
}
\[\smaller\smaller
	\begin{array}{@{} l l @{}}
\exp ::=  & \reg \ALT \val \ALT
\exp + \exp \ALT \exp = \exp \ALT \exp \neq \exp \ALT \ldots
\\[0.5ex]
\instr ::=  &
\assignInst{\reg}{\exp}
\ALT \ifGotoInst{\exp}{n_1 \shortmid\ldots\shortmid n_m}
\ALT \havocInst
\ALT
\writeInst{\loc}{\exp}
\ALT \readInst{\reg}{\loc}  \\&
\ALT
\flInst{\nvloc}
\ALT \foInst{\nvloc}
\ALT \sfenceInst
\ALT
\callInst{\method}
\ALT \returnInst \\&
\ALT
\pfenceInst{\nvlocset}
\ALT \beginpbInst{\nvlocset}
\ALT \myendpbInst{\nvlocset}
\end{array}
\]

Expressions are constructed with arithmetic and
boolean operations over registers and values.
Instructions consist of a local assignment
$\assignInst{\reg}{\exp}$;
a conditional $\ifGotoInst{\exp}{n_1 \shortmid\ldots\shortmid n_m}$ for
non-deterministically jumping to a program counter from $\set{n_1 \til n_m}$
when $\exp$ evaluates to non-zero or, otherwise, skipping ($\GotoInst{n_1
\shortmid\ldots\shortmid n_m}$ can be encoded as $\ifGotoInst{1}{n_1
\shortmid\ldots\shortmid n_m}$); $\havocInst$ for arbitrarily modifying all
registers; a write to memory $\writeInst{\loc}{\exp}$;
and a read from memory $\readInst{\reg}{\loc}$.
There are also explicit persist instructions: a \emph{flush}
instruction $\flInst{\nvloc}$ and its optimized version $\foInst{\nvloc}$,
called \emph{flush-optimal} (referred to as {CLFLUSH} and
{CLFLUSHOPT} in \cite{intel-manual}),
as well as the store fence instruction $\sfenceInst$ (see \cref{sec:sfence}).

This standard instruction set is extended to support calling and specifying
library methods. There is a call instruction $\callInst{\method}$ and a return
instruction $\returnInst$.
The novel specification constructs include
the \emph{local store fence} instruction
$\pfenceInst{\nvlocset}$ that relaxes the
semantics of $\sfenceInst$ by only enforcing the persistence ordering for the
given set $\nvlocset$ of variables (thus, $\pfenceInst{\NVLoc}$ is equivalent
to $\sfenceInst$); and instructions to begin and end a {\em
persistence block}, $\beginpbInst{\nvlocset}$ and
$\myendpbInst{\nvlocset}$, respectively. The persistence block demarks
the writes that need to persist simultaneously
after the block ends, either non-deterministically or
triggered by a flush on some variable in $\nvlocset$.

Next, we employ three syntactic categories:
\begin{myitemize}
\item \emph{Instruction sequences} represent the (sequential) implementation
of each method (including $\main$).
Formally, an instruction sequence $\iseq$ is a function from a non-empty finite domain of the form
$\set{0\til n}$ (representing the possible program counters) to the set of instructions. We say that an instruction sequence is \emph{flat} if
it does not include an instruction of the form $\callInst{\_}$.
\item \emph{Sequential programs} consist of a ``main'' method
accompanied with implementations of every method $\method\in\MethodNames$.
Formally, a sequential program $\sprog$ is a function assigning an instruction
sequence to every $\method \in \set{\main } \cup  \MethodNames$. To avoid
modeling a call stack and simplify the presentation, we require that
$\sprog(\method)$ is a flat instruction sequence for every $\method \in \MethodNames$.
\item \emph{Concurrent programs}
are top-level parallel compositions of sequential programs,
all accompanied by the same method implementations.
Formally, a (concurrent) program $\prog$ is a mapping assigning a sequential program to every $\tid\in\Tid$,
with $\prog(\tid)(\method)=\prog(\tida)(\method)$ for every $\tid,\tida\in\Tid$
and $\method\in \MethodNames$.
Below, we write $\prog(\method)$ for
$\prog(\ctid{1})(\method)$.
\end{myitemize}

\subsection{Program Semantics}
\label{sec:program_semantics}

We give semantics to
the syntactic objects
using
labeled transition systems.

\begin{definition}%
{\rm
A \emph{labeled transition system} (LTS) is a tuple
$\A=\tup{\labelType,\stateType,\initStateType,\transType}$,
where
$\labelType$ is a set of \emph{transition labels},
$\stateType$ is a set of \emph{states},
$\initStateType \in \stateType$ is the \emph{initial state},
and $\transType \suq \stateType\times \labelType \times \stateType$ is a set of \emph{transitions}.
We often write $\ltsstate\astep{\ltslabel}\ltsstate'$
to denote a transition $\tup{\ltsstate,\ltslabel,\ltsstate'}$.
We denote by $\A.\selLabels$, $\A.\selStates$, $\A.\selInitStates$, and
$\A.\selTrans$ the components of an LTS $\A$.
We write $\asteplab{\ltslabel}{\A}$
for the relation $\set{\tup{\ltsstate,\ltsstate'} \mid
\ltsstate\astep{\ltslabel}\ltsstate' \in \A.\lT}$ and $\asteplab{}{\A}$ for
$\bigcup_{\ltslabel\in\labelType} \asteplab{\ltslabel}{\A}$. For a sequence $\tr \in
\A.\lSigma^*$, we write $\asteplab{\tr}{\A}$ for the composition
$\asteplab{\tr(1)}{\A} \seq \ldots \seq \asteplab{\tr(\size{\tr})}{\A}$.
A~sequence $\tr \in \A.\lSigma^*$ such that $\A.\linit \asteplab{\tr}{\A}
\ltsstate$ for some $\ltsstate\in\A.\selStates$ is called a
\emph{trace} of $\A$. %
We denote by $\traces{\A}$ the
set of all traces of $\A$. A state $\ltsstate \in \A.\selStates$ is called
\emph{reachable} in $\A$ if $\A.\linit \asteplab{\tr}{\A} \ltsstate$ for some
 $\tr \in \traces{\A}$.
}\end{definition}

Next, we define the LTSs induced by instruction sequences,
sequential programs, and concurrent programs.
We will often identify the syntactic objects with the LTS they induce
(\eg when writing expressions like $\sprog.\lQ$ for a sequential program $\sprog$).
The transition labels of these LTSs feature \emph{action labels}.

\begin{definition}
\label{def:label}
{\rm
An \emph{action label} takes one of the following forms:
a read $\rlab{\loc}{\val}$,
a write $\wlab{\loc}{\val}$,
a flush $\fllab{\nvloc}$,
a flush-opt $\folab{\nvloc}$,
an sfence $\sflab$,
a local sfence $\pflab{\nvlocset}$,
a start $\beginpblab{\nvlocset}$
or an end $\myendpblab{\nvlocset}$ of a persistence block,
a call $\calllab{\method}{\phi}$, or
a return $\retlab{\method}{\phi}$,
where $\loc \in \Loc$, $\val\in \Val$,
$\nvloc\in\NVLoc$,
$\nvlocset\suq\NVLoc$, $\method \in \MethodNames$,  and $\phi : \Reg \to \Val$.
We denote by $\Lab$ the set of all action labels. The functions
$\lTYP$ and $\lLOC$
retrieve (when applicable) the type
($\lR/\lW/\ldots$) and
variable ($\loc$ or $\nvloc$) of an action label.
We write $\lLOCSET(\lab)$ for the set of variables mentioned in $\lab$
(\eg $\lLOCSET(\rlab{\loc}{\val})=\set{\loc}$, $\lLOCSET(\pflab{\nvlocset})=\nvlocset$,
and $\lLOCSET(\sflab)=\emptyset$).
}\end{definition}

Action labels
represent the interactions that a program has
with the memory.

\begin{definition}\label{def:lts_iseq}
{\rm
The LTS induced by an instruction sequence $\iseq$ is given by:
\begin{myitemize}
\item The transition labels are action labels, extended with $\epsl$ for silent transitions.
\item The states are pairs
$\tup{\pc,\phi}$
where $\pc\in\N$, called \emph{program counter}, stores the
current instruction pointer inside the sequence,
and $\phi:\Reg \to \Val$, called \emph{local store},
records the values of the registers.
We assume that local stores
are extended to expressions in the obvious way.
\item The initial state is $\tup{0,\phi_\Init}$,
where $\phi_\Init \defeq \lambda \reg \ldotp 0$.
\item The transitions are as follows:
\istrans
\end{myitemize}
}
\end{definition}

Recall that program semantics is separate from memory semantics, which is why
the transitions above completely ignore the restrictions arising from the
memory system. In particular, the write to memory $\writeInst{\loc}{\exp}$
only announces itself in the label. The read
from memory $\readInst{\reg}{\loc}$ loads an arbitrary value $\val$ into the
destination register $\reg$, announcing that value in the read label.
Other instructions
act as no-ops, and simply announce
themselves in the transition label, using
the function $\mathsf{matching\_label}$ that maps each instruction
to its %
label
($\flInst{\nvloc}\mapsto \fllab{\nvloc},
\foInst{\nvloc}\mapsto \folab{\nvloc}$, and so on).

Finally, $\callInst{\method}$ and $\returnInst$ instructions are not handled in this level,
but receive special semantics at the level of sequential programs, as defined next.

\begin{definition}\label{def:lts_sprog}
	{\rm
The LTS induced by a sequential program $\sprog$ is given by:
\begin{myitemize}
\item The transition labels are action labels, extended with $\epsl$ for silent transitions.

\item The states are tuples
$\sprogstate=\tup{\pc,\phi,\pcs,\method}$, where:
\begin{itemize}[leftmargin=*]
\item $\tup{\pc,\phi}$ is a state of the instruction sequence (see \cref{def:lts_iseq})
storing the state of the sequence currently running.
\item $\pcs \in \N \cup \set{\bot}$,
called \emph{the stored program counter}, is used to remember the program position to jump to when the
current instruction sequence returns, whereas
$\pcs=\bot$ means that the main method is currently running.
(Recall that we  assume that $\sprog(\method)$ is flat for every $\method \in \MethodNames$,
so we do not need to record the call stack.)
\item $\method \in \MethodNames \cup \set{\main}$, called \emph{the active method},
tracks the method that is currently running.
\end{itemize}
We denote by $\sprogstate.\lpc$, $\sprogstate.\lphi$, $\sprogstate.\lpcs$,
and $\sprogstate.\lmethod$
the components of a state $\sprogstate \in \sprog.\lQ$.

\item The initial state is
$\tup{0,\phi_\Init,\bot,\main}$.

\item The transitions are given by:
\begin{mathpar}
\inferrule[normal]{
\lab_\epsl \in \Lab \cup \set{\epsl}\\
\method\in\set{\main}\cup\MethodNames
\\\\ \tup{\pc,\phi}	\asteplab{\lab_\epsl}{\sprog(\method)} 	\tup{\pc', \phi'}
}{
	\tup{\pc,\phi,\pcs,\method}
	\asteplab{\lab_\epsl}{\sprog}
	\tup{\pc', \phi',\pcs,\method}
}
\and
\inferrule[call]{
	\sprog(\main)(\pc) = \callInst{\method}
		\\\\ \lab = \calllab{\method}{\phi}
}{
	\tup{\pc,\phi,\bot,\main}
	\asteplab{\lab}{\sprog}
	\tup{0,\phi, \pc+1,\method}
}
\and
\inferrule[return]{
	\sprog(\method)(\pc) = \returnInst
	\\ \lab = \retlab{\method}{\phi}
}{
	\tup{\pc,\phi, \pcs, \method}
	\asteplab{\lab}{\sprog}
	\tup{\pcs, \phi, \bot, \main}
}
\and
\inferrule[non-det-sfence]{
	\lab = \sflab
}{
	\tup{\pc,\phi, \pcs, \method}
	\asteplab{\lab}{\sprog}
	\tup{\pc,\phi, \pcs, \method}
}
\end{mathpar}
\end{myitemize}
}
\end{definition}

The \rulename{normal} transition lifts the instruction-sequence transition to the level of sequential programs.
Note that the transition applies %
for any method ($\main$ or other).
The \rulename{call} transition passes control from the main method to some other method,
jumping the program counter to the first instruction and storing the return point ($\pc+1$).
The \rulename{return} transition passes control back using the stored return point.
For simplicity, we do not have any argument passing mechanism and use the full register store for that matter.
(If needed, each component may store the values it needs in the memory, and reload them later on.)

Finally, \rulename{non-det-sfence} is a non-standard transition that we find technically convenient to have.
It allows the program to non-deterministically execute an sfence at any point.
Since, as will become apparent when presenting the memory system, sfences only restrict the possible behaviors,
this transition is safe to include in the program semantics.
It is particularly useful for simplifying the library correctness condition
that only considers inclusion of sets of histories (see \cref{sec:lib}).
For instance,  switching the roles of $\lib$ and $\lib^\sharp$ from \cref{sec:sfence},
the library implementing $\foo$ using $\sfenceInst$
should be considered a refinement of the one that simply returns.
For that, we allow the no-op specification to perform non-deterministic sfences that match the ones
executed by the concrete implementation.

Finally, the LTS induced by a concurrent program is defined as follows.

\begin{definition}
{\rm
The LTS induced by a (concurrent) program $\prog$ is given by:

\begin{myitemize}
\item The set of transition labels is given by
$(\Tid \times (\Lab \cup \set{\epsl})) \cup \set{\crash}$.
The functions on action labels (\eg $\lTYP$, $\lLOC$)
are lifted to these labels in the obvious way.

\item The states, %
denoted by $\progstate$, assign %
a state in $\prog(\tid).\lQ$ to every $\tid\in\Tid$.

\item The initial state is composed from the initial state of each thread:
\\
$\progstate_\Init \defeq \tup{\prog(\ctid{1}).\linit \til \prog(\ctid{\Nthreads}).\linit}$.
\item The transitions
are interleaved thread transitions
or crash transitions reinitializing the program state:
\vspace{-0.6ex}
\begin{mathpar}
\inferrule*[left=normal]{
\lab_\epsl \in \Lab \cup \set{\epsl}\\
\progstate(\tid) \asteplab{\lab_\epsl}{\prog(\tid)} \sprogstate'
}{\progstate \asteptidlab{\tid}{\lab_\epsl}{\prog} \progstate[\tid \mapsto \sprogstate']}
\and
\inferrule*[left=crash]{
}{\progstate \asteplab{\crash\vphantom{\lab}}{\prog} \progstate_\Init}
\end{mathpar}
\end{myitemize}
}
\end{definition}

\section{The \PSC Memory System}
\label{sec:psc}

We present \PSC (``Persistent Sequential Consistency''),
the persistency model used as the memory system.
We first introduce the model as it is in \cite{Khyzha2021}
(extended with standard volatile memory alongside with the non-volatile one),
following its operational presentation as an LTS with non-deterministic memory-internal transitions
that flush stores from the volatile part %
to the non-volatile part.
In \cref{sec:systems}, we define the synchronization of programs with the \PSC memory system.
In \cref{sec:psc_ext}, we present the extensions added in this paper
that are useful for library abstraction.
Finally, in \cref{sec:sep}, we establish certain separation properties of \PSC
that are essential in our
proofs.

Roughly speaking, a state in \PSC consists of a non-volatile memory
(mapping from non-volatile variables to values)
and a volatile memory (mapping from volatile variables to values).
The volatile memory works just as a normal sequentially consistent memory, keeping track of
the latest written value to every variable and returning that value for reads.
Upon crash, the contents of the volatile memory is reset to its initial state.
The non-volatile memory behaves observationally the same between crashes,
but its contents survive crashes.
To model delayed and out-of-order persistence of writes, write steps to non-volatile variables
do not alter the non-volatile memory immediately when issued.
Instead, writes first go to volatile per-variable persistence FIFO buffers, which
maintain the writes to each variable that are yet to persist.
Then, \PSC non-deterministically takes \emph{persist steps}
that apply the oldest update from a persistence buffer in the non-volatile memory.
Reads from non-volatile variables retrieve the latest value in the relevant buffer,
or the value from the non-volatile memory if that buffer is empty,
thus providing standard sequentially consistent semantics in the absence of system crashes.
Upon crash the buffers are reset to their initial (empty) state,
but the contents of the non-volatile memory remains intact.

Explicit persist instructions can be used to control the persistence of writes.
A ``flush'' barrier for a certain variable blocks the execution until the relevant persistence buffer is empty,
thus forcing all previous writes to that variable to persist.
Alternatively, a (cheaper) ``flush-optimal'' barrier for a certain variable enqueues a special marker in the persistence buffer of this variable
accompanied by the thread identifier of the thread that issued the barrier.
The effect of flush-optimal is delayed until the same thread performs an sfence,
which blocks the execution until all flush-optimal markers of that thread are dequeued from all buffers.
The fact that the persistence buffers are FIFO ensures that an sfence by some thread
forces the persistence of all writes executed before a flush-optimal issued by the same thread.

\begin{definition}
{\rm
\PSC is the LTS defined as follows:
\begin{myitemize}
\item The transition labels are given by $(\Tid \times \Lab) \cup
\set{\persistlab,\crash}$. That is, a transition label can be a pair of the
thread identifier and the action label of the operation, $\persistlab$ denoting
the internal propagation action, or $\crash$ denoting a system crash.
\item The states are tuples $\memstate=\tup{\mem,\vmem, \Pbuff}$, where:
\begin{itemize}[leftmargin=*]
\item $\mem:\NVLoc \to \Val$ is called the \emph{non-volatile memory}.
\item $\vmem:\VLoc \to \Val$ is called the \emph{volatile memory}.
\item $\Pbuff : \NVLoc \to \mathsf{PLBuff}$ is called the \emph{persistence buffer}.
Here, $\mathsf{PLBuff}$ denotes the set of all \emph{per-location persistence buffers},
each of which is a finite sequence $\pbuff$ of entries
of the form $\pbwlab{\val}$ for $\val\in\Val$ (writes), or $\fotlabp{\tid}$ for $\tid\in\Tid$ (flush optimal markers).
The persistence buffer $\Pbuff$ assigns a per-location persistence buffer
to every non-volatile variable.\footnote{We conservatively assume that writes persist at the location granularity,
rather than at the cache-line granularity as happens in real machines.}
\end{itemize}
We denote by $\memstate.\lmem$, $\memstate.\lvmem$, and $\memstate.\lPbuff$
the components of a state $\memstate \in \PSC.\lQ$,
and write $\memstate[\lX\mapsto Y]$ for the
state obtained from $\memstate$ by setting $\memstate.\lX$ to $Y$.

\item The initial state is $\memstate_\Init \defeq \tup{\mem_\Init, \vmem_\Init, \Pbuff_\Init}$,
where $\mem_\Init \defeq \lambda \nvloc \ldotp 0$,
$\vmem_\Init \defeq \lambda \vloc \ldotp 0$,
and $\Pbuff_\Init \defeq \lambda \nvloc \ldotp \epsl$.

\item The transitions of \PSC are presented in \cref{fig:PSC},
using an auxiliary function for looking up the most recent value of a variable:
we let $\memstate(\loc)$ be $\memstate.\lvmem(\loc)$ for $\loc \in \VLoc$, and, for $\loc \in \NVLoc$, either the value $\val$ of the last write (rightmost) entry $\memstate.\lPbuff(\loc)$ or, when there is no such entry, $\memstate.\lmem(\loc)$.
\end{myitemize}
}\end{definition}

\begin{figure}[t]
\scriptsize
\begin{mathpar}
\inferrule[v-write]{
	\lab = \wlab{\vloc}{\val}
	\\\\
	\vmem'=\memstate.\lvmem[ \vloc \mapsto \val]
}{
	\memstate
	\asteptidlab{\tid}{\lab}{\PSC}
	\memstate[\lvmem \mapsto \vmem']
}
\and
\inferrule[nv-write]{
	\lab = \wlab{\nvloc}{\val}
	\\\\
\pbuff'=\memstate.\lPbuff(\nvloc) \cdot \pbwlab{\val}
\\
	\Pbuff'=\memstate.\lPbuff[ \nvloc \mapsto \pbuff']
}{
	\memstate
	\asteptidlab{\tid}{\lab}{\PSC}
	\memstate[\lPbuff\mapsto \Pbuff']
}
\and
\inferrule[read]{
	\lab = \rlab{\loc}{\val}
	\\\\
	\memstate(\loc) = \val
}{
	\memstate
	\asteptidlab{\tid}{\lab}{\PSC}
	\memstate
}
\and
\hfill
\inferrule[flush]{
	\lab = \fllab{\nvloc}
	\\\\ {\memstate.\lPbuff(\nvloc)=\epsl}
}{
	\memstate
	\asteptidlab{\tid}{\lab}{\PSC}
	\memstate
} \and
\inferrule[flush-opt]{
	\lab = \folab{\nvloc}
\\\\
\pbuff'=\memstate.\lPbuff(\nvloc) \cdot \fotlabp{\tid}
\\
	\Pbuff'=\memstate.\lPbuff[ \nvloc \mapsto \pbuff']
}{
	\memstate
	\asteptidlab{\tid}{\lab}{\PSC}
	\memstate[\lPbuff\mapsto \Pbuff']
} \and
\inferrule[sfence]{
	\lab = \sflab
	\\\\
	\forall \nvloc \ldotp \fotlabp{\tid} \nin \memstate.\lPbuff(\nvloc)
}{
	\memstate
	\asteptidlab{\tid}{\lab}{\PSC}
	\memstate
}
\end{mathpar}
\myhrule
\begin{mathpar}
\inferrule[persist-write]{
	\lab = \persistlab %
    \\
 	\memstate.\lPbuff(\nvloc) = \pbwlab{\val} \cdot \pbuff
	\\\\
	\Pbuff' = \memstate.\lPbuff[\nvloc \mapsto \pbuff]
	\\
	\mem' = \memstate.\lmem[\nvloc \mapsto \val]
}{
	\memstate
	\asteplab{\lab}{\PSC}
	\memstate[\lmem \mapsto \mem', \lPbuff \mapsto \Pbuff']
}
\hfill
\inferrule[persist-fo]{
	\lab = \persistlab %
    \\
 	\memstate.\lPbuff(\nvloc) = \fotlabp{\tid} \cdot \pbuff
	\\\\
	\Pbuff' = \memstate.\lPbuff[\nvloc \mapsto \pbuff]
}{
	\memstate
	\asteplab{\lab}{\PSC}
	\memstate[\lPbuff \mapsto \Pbuff']
}
\hfill
\inferrule[crash]{
	\lab = \crash
}{\memstate \asteplab{\lab}{\PSC}
\memstate_\Init[\lmem \mapsto \memstate.\lmem]}
\end{mathpar}
\caption{Transitions of $\PSC$}
\label{fig:PSC}
\end{figure}

The transitions follow the intuitive account above.
Those corresponding to program transitions
are labeled with pairs in $\Tid \times \Lab$.
For instance, a transition labeled with $\tidlab{\tid}{\rlab{\loc}{\val_\lR}}$
means that thread $\tid$ reads the value $\val_\lR$
from (volatile or non-volatile) shared variable $\loc$.

\subsection{Linking Programs and Memories}
\label{sec:systems}

To give semantics of programs running under \PSC,
the thread system is synchronized with the \PSC memory system.
Formally, the synchronization of a program $\prog$ with \PSC,
is another LTS, denoted by $\cs{\prog}{\PSC}$, defined as follows:

\begin{myitemize}
\item The set of transition labels is $\prog.\lSigma \cup \PSC.\lSigma$,
\ie $(\Tid \times (\Lab \cup \set{\epsl})) \cup \set{\persistlab,\crash}$.

\item The states are pairs $\tup{\progstate,\memstate}\in \prog.\lQ \times \PSC.\lQ$.

\item The initial state is  $\tup{\progstate_\Init,\memstate_\Init}$.

\item The transitions are given by:
\end{myitemize}

\vspace*{-10pt}
\begin{mathpar}
\footnotesize
\inferrule[synchronized]{
\alpha\in  (\Tid \times \Lab) \cup \set{\crash} \\\\
	\progstate {\asteplab{\alpha}{\prog}} \progstate'
	\\ \memstate
			{\asteplab{\alpha}{\PSC}}
			\memstate'
}{
	\tup{\progstate,\memstate}
		{\asteplab{\alpha}{\cs{\prog}{\PSC}}}
		\tup{\progstate',\memstate'}
}
\hfill
\inferrule[program-internal]{
\alpha\in  \Tid \times \set{\epsl}  \\\\
	\progstate {\asteplab{\alpha}{\prog}} \progstate'
}{
	\tup{\progstate,\memstate}
		{\asteplab{\alpha}{\cs{\prog}{\PSC}}}
		\tup{\progstate',\memstate}
}
\hfill
\inferrule[memory-internal]{
\alpha = \persistlab\\\\
\memstate
			{\asteplab{\alpha}{\PSC}}
			\memstate'
}{
	\tup{\progstate,\memstate}
		{\asteplab{\alpha}{\cs{\prog}{\PSC}}}
		\tup{\progstate,\memstate'}
}
\end{mathpar}
The above transitions are ``synchronized transitions'' of $\prog$ and $\PSC$,
using the labels to decide what to synchronize on.
Both the program and the memory take the same step
for transition labels that are common to both LTSs, %
only the program steps for transition labels that are only program transitions, %
and only the memory steps for transition labels that are only memory transitions. %

\subsection{Extending \PSC for Library Abstraction}
\label{sec:psc_ext}

We present the modifications of \PSC for
supporting the new specification constructs:
localized sfences and persistence blocks.
When referring to \PSC in the sequel we mean the following revised version.

\cparagraph{Local store fences}
Localized sfences are straightforwardly supported by the following additional memory transition:
\[
\footnotesize
\inferrule*[left=local sfence]{
	\lab = \pflab{\nvlocset}
	\\
	{\forall \diffemph{\nvloc \in \nvlocset} \ldotp \fotlabp{\tid} \nin \memstate.\lPbuff(\nvloc)}
}{
	\memstate
	\asteptidlab{\tid}{\lab}{\PSC}
	\memstate
}
\]
Here, instead of blocking until all $\fotlabp{\tid}$ entries are removed from all buffers,
we only require that such entries are not present in buffers associated with variables
from a certain set (mentioned in the action label and corresponding to
the argument of the $\pfenceInst{\nvlocset}$ instruction).

\cparagraph{Persistence blocks}
We assume an infinite set $\mathsf{BlockID}$ of block identifiers that
are non-deterministically allocated when blocks are opened.
The state of the memory system keeps track of a mapping
assigning the current open block identifier to every thread and non-volatile variable,
or $\bot$ if the variable is not a part of an open block of the thread.
When writing to non-volatile variables, the associated block identifiers
are attached to the write entry in the per-location persistence buffer.
In turn, the propagation from the buffers to the NVM ensures that blocks
are propagated only after they are not open and only in their entirety.
To do so, we generalize the persist
step of \PSC to allow simultaneous propagation of multiple entries from the buffers.
To respect the per-variable FIFO order, the propagated entries should form a prefix of each buffer.

Formally, this requires the following modifications:
 \begin{myenum}
  \item  Write entries in buffers take the form
$\ipbwlab{\pbid}{\val}$ where $\pbid \in \mathsf{BlockID} \cup \set{\bot}$ and $\val\in\Val$
(instead of $\pbwlab{\val}$). A write entry of the form $\ipbwlab{\bot}{\val}$ means that the corresponding write
was not a part of a persistence block.
\item States are extended to be quintuples $\memstate=\tup{\mem,\vmem, \Pbuff, \pblock, \pbidSet}$, where:
\begin{itemize}[leftmargin=*]
\item $\pblock : \Tid \to \NVLoc \to (\mathsf{BlockID} \cup \set{\bot})$ is called the \emph{active-block mapping}.
It assigns a block identifier (or $\bot$ if there is no active block) to every thread identifier and non-volatile variable.
\item $\pbidSet\suq \mathsf{BlockID} \times \powerset{\NVLoc}$ is called the \emph{block identifiers set}.
It is used to store all persistence block identifiers occurring so far,
each accompanied by the set of non-volatile variables that it protects.
\end{itemize}
We denote by $\memstate.\lpblock$ and
$\memstate.\lpbidSet$ the additional components of a state $\memstate$.
We impose the following well-formedness conditions:
\begin{itemize}[leftmargin=*]
\item If $\ipbwlab{\pbid}{\_} \in \memstate.\lPbuff(\nvloc)$,
then $\tup{\pbid,\set{\nvloc} \cup \nvlocset} \in \memstate.\lpbidSet$
for some $\nvlocset\suq \NVLoc$.
\item If $\memstate.\lpblock(\tid)(\nvloc) \neq \bot$,
then $\tup{\memstate.\lpblock(\tid)(\nvloc) ,\set{\nvloc} \cup \nvlocset} \in \memstate.\lpbidSet$
for some $\nvlocset\suq \NVLoc$.
\end{itemize}

\item The initial state is given by
$\memstate_\Init \defeq \tup{\mem_\Init, \vmem_\Init, \Pbuff_\Init, \pblock_\Init, {\pbidSet}_\Init}$,
where $\pblock_\Init \defeq \lambda \tid \ldotp \lambda \nvloc \ldotp \bot$,
and ${\pbidSet}_\Init  \defeq \emptyset$.
\item The \rulename{nv-write} transition records the current active block in the added entry:
\begin{mathpar}
\footnotesize
\inferrule*[left=nv-write]{
	\lab = \wlab{\nvloc}{\val}
	\\
\pbuff'=\memstate.\lPbuff(\nvloc) \cdot \ipbwlab{\diffemph{\memstate.\lpblock(\tid)(\nvloc)}}{\val}
\\
	\Pbuff'=\memstate.\lPbuff[ \nvloc \mapsto \pbuff']
}{
	\memstate
	\asteptidlab{\tid}{\lab}{\PSC}
	\memstate[\lPbuff\mapsto \Pbuff']
}
\end{mathpar}
\item The following two transitions for opening and closing blocks are added:
\begin{mathpar}
\footnotesize
\inferrule[beginPB]{
	\lab = \beginpblab{\nvlocset}
   \\\\ \forall \nvloc \in \nvlocset \ldotp \memstate.\lpblock(\tid)(\nvloc) =\bot
	\\\\ \pblock' = \memstate.\lpblock\left[
	\tid \mapsto \lambda \nvloc\ldotp \inarr{\text{if } \nvloc\in\nvlocset \text{ then } \pbid \\ \text{else } \memstate.\lpblock(\tid)(\nvloc)}\right]
	\\\\ \tup{\pbid,\_} \nin \memstate.\lpbidSet  \\  \pbidSet' = \memstate.\lpbidSet \cup \set{\tup{\pbid,\nvlocset}}
}{
	\memstate
	\asteptidlab{\tid}{\lab}{\PSC}
	\memstate[\lpblock \mapsto \pblock', \lpbidSet \mapsto \pbidSet']
}
\hfill
\inferrule[endPB]{
	\lab = \myendpblab{\nvlocset}
	\\\\
	\\\\ \pblock' = \memstate.\lpblock\left[
	\tid \mapsto \lambda \nvloc\ldotp \inarr{\text{if } \nvloc\in\nvlocset \text{ then } \bot \\ \text{else } \memstate.\lpblock(\tid)(\nvloc)}\right]
\\\\
}{
	\memstate
	\asteptidlab{\tid}{\lab}{\PSC}
	\memstate[\lpblock \mapsto \pblock']
}
\end{mathpar}
Thus, opening a block allocates a fresh identifier and sets the active-block mapping accordingly.
In turn, closing a block resets the relevant variables in the active-block mapping.

\item The following transition is used \emph{instead} of \rulename{persist-write} and \rulename{persist-fo}.
It generalizes both \rulename{persist-write} and \rulename{persist-fo} by simultaneously
persisting several entries together (each $\pbuff_\nvloc$ below stands for a \emph{sequence} of entries).
\begin{mathpar}
\footnotesize
\inferrule*[left=persist]{
	\lab = \persistlab %
    \\
 	\forall \nvloc \ldotp \memstate.\lPbuff(\nvloc) = \pbuff_\nvloc \cdot \Pbuff'(\nvloc)
	\\\\
	\forall \pbid  \ldotp
		(\exists \nvloc \ldotp \ipbwlab{\pbid}{\_} \in \pbuff_\nvloc)
		\implies
		\forall \nvloc \ldotp (\forall\tid \ldotp \memstate.\lpblock(\tid)(\nvloc)\neq \pbid \land
		 \ipbwlab{\pbid}{\_} \nin \Pbuff'(\nvloc))
\\\\	\mem' =  \lambda \nvloc \ldotp
{	\begin{cases}
		\val &	 \text{last write entry in  $\pbuff_\nvloc$ has value $\val$} \\
		\memstate.\lmem(\nvloc) & 				 \text{there are no write entries in $\pbuff_\nvloc$}
		\end{cases}}
}{
	\memstate
	\asteplab{\lab}{\PSC}
	\memstate[\lmem \mapsto \mem', \lPbuff \mapsto \Pbuff']
}
\end{mathpar}
This step imposes two restrictions.
First, the persisted entries from each buffer ($\pbuff_\nvloc$) should form a prefix of that buffer, so that FIFO semantics
is maintained.
Second, to respect the persistence blocks, if some entry of a given block is persisted ($\exists \nvloc \ldotp \ipbwlab{\pbid}{\_} \in \pbuff_\nvloc$)
then that block should not be currently active by any thread ($\forall \nvloc,\tid \ldotp \memstate.\lpblock(\tid)(\nvloc)\neq \pbid$)
and no entries of that block should remain in the volatile buffers
($\forall \nvloc \ldotp \ipbwlab{\pbid}{\_} \nin \Pbuff'(\nvloc))$).
 \end{myenum}

\smallskip
\noindent
\begin{minipage}{.75\textwidth}
We note that nested and interleaved blocks are allowed.
The program on the right demonstrates such a case.
Here, $\cnvloc{1}=1$ and $\cnvloc{2}=1$ must persist together;
$\cnvloc{3}=1$ and $\cnvloc{4}=1$ must persist together;
but these two pairs can persist independently of each other in any order.
Thus, provided that the client and the library use blocks of their own locations,
the block instructions by each component are invisible to the other.
\end{minipage}\hfill
\begin{minipage}{.2\textwidth}
\vspace*{-10pt}
\small
$$\inarr{
\beginpbInst{\cnvloc{1},\cnvloc{2}} \sep  \\
\writeInst{\cnvloc{1}}{1} \sep \\
\beginpbInst{\cnvloc{3},\cnvloc{4}} \sep  \\
\writeInst{\cnvloc{3}}{1} \sep \writeInst{\cnvloc{4}}{1} \sep \\
\myendpbInst{\cnvloc{3},\cnvloc{4}} \sep  \\
\writeInst{\cnvloc{2}}{1} \sep \\
\myendpbInst{\cnvloc{1},\cnvloc{2}} \sep
}$$
\end{minipage}

\subsection{Separation Properties}
\label{sec:sep}

To enable our library abstraction proof,
the required key property of \PSC, which we preserved in its extensions,
is the ability to separate \PSC states into disjoint parts (the library's part and the client's part)
and capture each memory transition in terms of its effect on the two parts.
Next, we formulate this property, which we will later use to prove library abstraction.
In fact, our arguments for library abstraction rely only on the properties below,
and never ``unfold'' the \PSC-related definitions.
This allows one to refine and extend \PSC,
as long as the separation properties are preserved.

The separation of \PSC states
is stated in terms of the following restriction operator
relative to a set of variables.
For persistence blocks to behave correctly, we need an auxiliary condition
on this set: we say that a set $\nvlocset \suq \NVLoc$ \emph{separates a state}
$\memstate\in \PSC.\lQ$
if for every $\tup{\pbid,\nvlocseta}\in \memstate.\lpbidSet$,
we have $\nvlocseta \suq \nvlocset$
or $\nvlocseta \suq \NVLoc\setminus \nvlocset$.

\begin{definition}
\label{def:restriction_psc}
{\rm
The \emph{restriction} of
$\memstate\in \PSC.\lQ$
onto a set $\locset\suq\Loc$ such that $\locset\cap\NVLoc$ separates $\memstate$,
denoted by $\memstate\rst{\locset}$,
is the state $\memstate'\in \PSC.\lQ$ given by:
\begin{myitemize}
	\item $\memstate'.\lmem(\nvloc)$ is $\memstate.\lmem(\nvloc)$ if $\nvloc\in\NVLoc\cap\locset$,
	or $0$ otherwise.
	\item $\memstate'.\lvmem(\vloc)$ is $\memstate.\lvmem(\vloc)$ if $\vloc\in\VLoc\cap\locset$,
	or $0$ otherwise.
	\item $\memstate'.\lPbuff(\nvloc)$ is $\memstate.\lPbuff(\nvloc)$ if $\nvloc\in\NVLoc\cap\locset$,
	or $\epsilon$ otherwise.
	\item For each $\tid\in\Tid$, $\memstate'.\lpblock(\tid)(\nvloc)$
	is $\memstate.\lpblock(\tid)(\nvloc)$ if $\nvloc\in\NVLoc\cap\locset$,
	or $\bot$ otherwise.
	\item $\memstate'.\lpbidSet= \set{\tup{\pbid,\nvlocseta} \in \memstate.\lpbidSet \mid \nvlocseta \suq \locset }$.
\end{myitemize}
}\end{definition}

The next lemma states the separation property of \PSC, providing a precise
characterization of each \PSC transition in terms of transitions on the
restrictions $\memstate\rst{\locset}$ and $\memstate\rst{\Loc \setminus
\locset}$. A special case is needed for store fence transitions, since taking these transitions enforces conditions on \emph{both}
restrictions.

\begin{lemma}
\label{lem:memory_disjointness}
{\rm
Let $\locset\suq\Loc$ such that $\locset\cap\NVLoc$ separates a
state $\memstate_1$.
\begin{myenum}
\item \label{item:other}
 For every $\tid\in\Tid$ and $\lab\in\Lab\setminus \set{ \sflab}$
with $\lLOCSET(\lab)\suq \locset$,\\
$\memstate_1 \asteptidlab{\tid}{\lab}{\PSC} \memstate_2
\Longleftrightarrow
(\memstate_1\rst{\locset} \asteptidlab{\tid}{\lab}{\PSC} \memstate_2\rst{\locset} \land
\memstate_1\rst{\Loc\setminus\locset}=\memstate_2\rst{\Loc\setminus\locset})
$
\item \label{item:sflab}
 For every $\tid\in\Tid$,\\
$
\memstate_1 \asteptidlab{\tid}{\sflab}{\PSC} \memstate_2
\Longleftrightarrow
(\memstate_1\rst{\locset} \asteptidlab{\tid}{\sflab}{\PSC} \memstate_2\rst{\locset} \land
\memstate_1\rst{\Loc\setminus\locset} \asteptidlab{\tid}{\sflab}{\PSC} \memstate_2\rst{\Loc\setminus\locset})
$
\item \label{item:persist}
$\memstate_1 \asteplab{\persistlab}{\PSC} \memstate_2
\Longleftrightarrow
(\memstate_1\rst{\locset} \asteplab{\persistlab}{\PSC} \memstate_2\rst{\locset} \land
\memstate_1\rst{\Loc\setminus\locset} \asteplab{\persistlab}{\PSC} \memstate_2\rst{\Loc\setminus\locset})
$
\item \label{item:crash}
$\memstate_1 \asteplab{\crash}{\PSC} \memstate_2
\Longleftrightarrow
(\memstate_1\rst{\locset} \asteplab{\crash}{\PSC} \memstate_2\rst{\locset} \land
\memstate_1\rst{\Loc\setminus\locset} \asteplab{\crash}{\PSC} \memstate_2\rst{\Loc\setminus\locset})
$
\end{myenum}
}
\end{lemma}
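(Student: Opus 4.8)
The plan is to prove all four statements by a direct case analysis on the \PSC transition rules (\cref{fig:PSC} together with the extensions of \cref{sec:psc_ext}), unfolding the restriction operator of \cref{def:restriction_psc} component by component. Before doing so, I would establish the auxiliary fact that every transition considered in the lemma preserves separation: if $\locset\cap\NVLoc$ separates $\memstate_1$ and $\memstate_1$ steps to $\memstate_2$, then $\locset\cap\NVLoc$ also separates $\memstate_2$, so that $\memstate_2\rst{\locset}$ and $\memstate_2\rst{\Loc\setminus\locset}$ are well-defined and both sides of each equivalence make sense. This holds because the only rule that grows $\memstate.\lpbidSet$ is \rulename{beginPB}, which adds a pair $\tup{\pbid,\nvlocset}$ with $\nvlocset=\lLOCSET(\lab)\suq\locset\cap\NVLoc$, while \rulename{crash} empties $\lpbidSet$ and the remaining rules leave it intact.

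For the non-$\sflab$ claim the key observation is that each rule whose label mentions only variables in $\locset$ reads and writes exclusively the $\locset$-indexed components of the state ($\lmem(\nvloc)$, $\lvmem(\vloc)$, $\lPbuff(\nvloc)$, $\lpblock(\tid)(\nvloc)$ for $\nvloc,\vloc\in\locset$, and the $\locset$-blocks of $\lpbidSet$). Hence the value-lookup $\memstate(\loc)$ used by \rulename{read}/\rulename{nv-write} agrees with the lookup in $\memstate\rst{\locset}$, every guard (\eg the emptiness check of \rulename{flush}, the marker checks of \rulename{local sfence}, the freshness and active-block conditions of \rulename{beginPB}/\rulename{endPB}) depends only on the $\locset$-part, and every update touches only the $\locset$-part. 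This simultaneously yields the localized transition on $\memstate\rst{\locset}$ and the equality $\memstate_1\rst{\Loc\setminus\locset}=\memstate_2\rst{\Loc\setminus\locset}$ in both directions; the converse direction merely reassembles the full transition from the localized one, since the complement is untouched.

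The $\sflab$ and $\crash$ claims are the mechanical cases. For $\sflab$ the transition is a self-loop whose sole guard $\forall\nvloc\ldotp\fotlabp{\tid}\nin\memstate.\lPbuff(\nvloc)$ is the conjunction of the same guard restricted to $\nvloc\in\locset$ and to $\nvloc\nin\locset$; as neither side changes the state, the two restricted self-loops hold exactly when the global one does. For $\crash$ one checks that the reset $\memstate_\Init[\lmem\mapsto\memstate_1.\lmem]$ commutes with restriction in every component --- the non-volatile memory is retained and partitioned by $\locset$, while the buffers, active blocks, volatile memory and $\lpbidSet$ are reset on both sides --- so restricting $\memstate_2$ to $\locset$ yields exactly the state reached by crashing from $\memstate_1\rst{\locset}$, and likewise for the complement.

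The main obstacle is the $\persistlab$ claim, and it is here that the separation hypothesis does real work. The generalized \rulename{persist} rule removes a prefix $\pbuff_\nvloc$ from each buffer subject to the block-atomicity side condition: if some persisted entry carries a block $\pbid$, then $\pbid$ must be inactive in every thread and no $\pbid$-entry may remain in any buffer. The crucial point is that, by the well-formedness invariants linking $\lPbuff$, $\lpblock$ and $\lpbidSet$, together with the separation of $\memstate_1$, every block $\pbid$ lives entirely on one side of the partition: all of its buffer entries and all of its active-block registrations are indexed by variables in $\locset$, or all by variables in $\Loc\setminus\locset$. Consequently, in the forward direction I would split the persisted prefixes $\{\pbuff_\nvloc\}_\nvloc$ into the $\locset$-indexed and the complement-indexed families, obtaining a \rulename{persist} step on each restriction with the atomicity condition for each block checked entirely within the side on which it lives. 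In the backward direction I would concatenate the two families; the only nontrivial point is that the global atomicity condition still holds, which follows because a block persisted on one side is, by separation, absent --- both as buffer entries and as active registrations --- from the other side, so the per-side conditions combine into the global one. I expect the bulk of the effort to lie in making this ``no block straddles the boundary'' argument precise and in verifying that the resulting non-volatile memory $\mem'$ coincides with the componentwise application of the two restricted updates.
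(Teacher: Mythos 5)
Your proposal is correct and takes essentially the same approach as the paper: the paper's entire proof of this lemma is the single remark that it ``proceeds by standard case analysis ranging over all possible transitions of \PSC'', which is precisely the case analysis you lay out (including the preliminary observation that the transitions preserve separation, which the paper records separately as a proposition in its appendix). Your elaboration of the \rulename{persist} case --- using the well-formedness invariants together with the separation hypothesis to show that no block straddles the partition, so that persisted prefixes can be split and recombined --- is exactly the one place where the separation hypothesis does real work, and is the right way to discharge it.
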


The proof of \cref{lem:memory_disjointness} proceeds by
standard case analysis ranging over all possible transitions of \PSC.
Finally, the following operation is used below to compose a state from a client and a library components
(see \cref{lem:composition}).

\begin{definition}
\label{def:compose_psc}
{\rm Let $\memstate_1, \memstate_2$ be
states of $\PSC$,
and $\locset_1,\locset_2\suq\Loc$ such that $\locset_1 \cap \locset_2 = \emptyset$.
The \emph{merge of $\memstate_1$ and $\memstate_2$ \wrt
$\locset_1$ and $\locset_2$}, denoted by
$\mergemem{\memstate_1}{\locset_1}{\memstate_2}{\locset_2}$,
is the state $\memstate\in \PSC.\lQ$ defined by:
\begin{mathpar}
\scriptsize
\memstate.\lmem(\nvloc) = {\begin{cases}
\memstate_1.\lmem(\nvloc) & \nvloc\in\locset_1 \\
\memstate_2.\lmem(\nvloc) & \nvloc\in\locset_2 \\
0 & \text{otherwise}
\end{cases}}
\hfill
\inarr{\text{similar definitions}\\ \text{for } \memstate.\lvmem,\memstate.\lPbuff,\memstate.\lpblock}
\hfill
\memstate.\lpbidSet = \inarr{
\set{\tup{\pbid,\nvlocseta} \in \memstate_1.\lpbidSet \mid \nvlocseta \suq \locset_1 } \cup \\
\set{\tup{\pbid,\nvlocseta} \in \memstate_2.\lpbidSet \mid \nvlocseta \suq \locset_2 }}
\end{mathpar}
}\end{definition}

\section{Libraries and Their Clients}
\label{sec:lib}

We present the notions of libraries and clients,
as well as the necessary definitions for stating the abstraction theorem:
histories and most general clients.

\cparagraph{Libraries}
We take a library $\lib$ to be a function assigning to method names in
$\dom{\lib}\suq\MethodNames$ flat instruction sequences representing the method
bodies.
In the context of some library $\lib$,
we refer to the implementations of the methods in $\set{\main}\cup\MethodNames\setminus\dom{\lib}$ in a program $\prog$
as the \emph{client of $\lib$}.

\cparagraph{Client-library composition}
We consider the common case where libraries and their clients
never access the same shared variables.
To formally define this restriction, we use the following notations for sets of locations
used by instruction sequences, libraries, and their clients:
\begin{myitemize}
\item $\usedlocs{\iseq}$ denotes the set of shared variables mentioned
in an instruction sequence $\iseq$  (possibly as a part of a set $\nvlocset$ of variables, \eg in $\beginpbInst{\nvlocset}$).
\item For a library $\lib$, $\usedlocs{\lib} \defeq \bigcup_{\method\in \dom{\lib}} \usedlocs{\lib(\method)}$.
\item For a program $\prog$ and a set $\methodset \suq \MethodNames$,
\\ $\usedclientlocs{\methodset}{\prog} \defeq
\bigcup_{\tid \in \Tid} \usedlocs{\prog(\tid)(\main)}\cup
\bigcup_{\method \in \MethodNames \setminus \methodset} \usedlocs{\prog(\method)}$.
\end{myitemize}

Then, client-library composition is defined as follows.

\begin{definition}
\label{def:safe}
{\rm
A library $\lib$ is \emph{safe} for a program $\prog$ if $\usedlocs{\lib} \cap
\usedclientlocs{\dom{\lib}}{\prog} = \emptyset$. When $\lib$ is safe for
$\prog$, we write $\client{\prog}{\lib}$ for the program obtained from $\prog$
by setting $\prog(\tid)(\method)=\lib(\method)$ for every $\tid\in\Tid$ and
$\method\in\dom{\lib}$.
}\end{definition}
Note that we always have
$\usedclientlocs{\dom{\lib}}{\client{\prog}{\lib}}=\usedclientlocs{\dom{\lib}}{\prog}$.

\cparagraph{Histories}
Histories record the interactions between libraries and clients. Formally, a
\emph{history} $\history$ of a library $\lib$ is a sequence of transition labels
representing a crash, a call to a method of $\lib$, a return from a method of
$\lib$, or an sfence, \ie labels from the set $\HTLab_{\dom{\lib}}$, which is
defined as follows:
 \begin{align*}
\Lab_\methodset & \defeq \set{\sflab} \cup
\set{\calllab{\method}{\phi},\retlab{\method}{\phi} \mid \method\in\methodset, \phi:\Reg\to\Val}
\\
\HTLab_\methodset & \defeq (\Tid\times \Lab_\methodset) \cup \set{\crash}
\end{align*}

\begin{definition}
\label{def:history_of_trace}
{\rm
Let $\tr$ be a trace of $\cs{\prog}{\PSC}$ for some program $\prog$.
The \emph{history} induced by $\tr$ \wrt a set $\methodset\suq\MethodNames$,
denoted by $\rsthistory{\methodset}{\tr}$,
is the subsequence of $\tr$ over $\HTLab_\methodset$
consisting of (in the same order they appear in $\tr$):
call and return labels $\tidlab{\tid}{\calllab{\method}{\phi}}$ and
$\tidlab{\tid}{\retlab{\method}{\phi}}$ with $\method\in \methodset$;
$\lSF$-labels $\tidlab{\tid}{\sflab}$; and
crash labels.
The notation $\rsthistory{\methodset}{\tr}$ is extended to sets of traces in the obvious way.
The set of histories \wrt $\methodset$ of $\prog$,
denoted by $\rsthistory{\methodset}{\prog}$, is given by $\rsthistory{\methodset}{\traces{\cs{\prog}{\PSC}}}$.
When $\methodset=\MethodNames$ (\ie the set of all method names),
we simply write $\fhistory{\tr}$ and $\fhistory{\prog}$.
}\end{definition}

\cparagraph{Most general clients}
We encompass library calling policies (see \cref{sec:policy}) using the notion
of a ``most general client''---a non-deterministic client that invokes the
library methods in the most general way allowed by the policy. Formally, a most
general client $\MGCn$ is given as a (concurrent) program. Adherence to the
calling policy is defined as follows.

\begin{definition}
\label{def:policy}
{\rm
Let $\lib$ be a library, and $\prog$ and $\MGCn$ be programs
such that $\lib$ is safe for both $\prog$ and $\MGCn$.
We say that $\prog$ \emph{correctly calls} $\lib$ \wrt $\MGCn$
if $\rsthistory{\dom{\lib}}{\client{\prog}{\lib}} \suq
\rsthistory{\dom{\lib}}{\MGC{\lib}}$.
}\end{definition}

The policy of a library with no restrictions on its clients (beyond the separation of shared resources)
is expressed by an MGC, called $\MGCfree$,
 that repeatedly invokes arbitrary library methods with arbitrary initial stores.
Often persistent objects include a recovery method meant to
be executed after a crash before any other method is invoked.
We call such a policy $\MGCrec$.
Formally, $\MGCfree$ (for $\dom{\lib}= \set{\method_1 \til \method_n}$)
and $\MGCrec$ (for $\dom{\lib}= \set{\method_1 \til \method_n} \uplus \set{\recoverMethod}$)
assign the following main method to each thread $\tid$:
\[
\footnotesize
\inarr{\resizebox{\mycodefactor\width}{!}{
\begin{array}[t]{l}
\MGCfree(\tid)(\main) = \\
\kw{BEGIN}: \havocInst \sep \\
\GotoInst{\kw{f}_1 \shortmid\ldots\shortmid \kw{f}_n\shortmid\kw{END}} \sep   \\
\kw{f}_1: \callInst{\method_1} \sep \GotoInst{\kw{BEGIN}} \sep \\
\ldots  \\
\kw{f}_n: \callInst{\method_n} \sep \GotoInst{\kw{BEGIN}} \sep \\
\kw{END}:  \\
\end{array}
\qquad
\begin{array}[t]{l}
\MGCrec(\tid)(\main) = \\
\casInst{\creg{1}}{\cvloc{1}}{0}{1} \sep \ifGotoInst{\creg{1}=0}{\kw{REC}} \sep \GotoInst{\kw{WAIT}} \sep  \\
\kw{REC}: \callInst{\recoverMethod} \sep \writeInst{\cvloc{2}}{1} \sep \GotoInst{\kw{BEGIN}} \sep  \\
\kw{WAIT}: \readInst{\creg{1}}{\cvloc{2}} \sep \ifGotoInst{\creg{1}=0}{\kw{WAIT}} \sep \GotoInst{\kw{BEGIN}} \sep \\
\kw{BEGIN}: \ldots \text{rest of the code as in $\MGCfree$} \ldots
\end{array}}}
\]

In $\MGCrec$, using a compare-and-swap, one thread performs the recovery.
All other threads wait until recovery ends to start their %
method invocations.

\newcommand{\scLEM}{\textsc{Compose}}
\newcommand{\lcl}{{\sf cl}}
\newcommand{\llib}{{\sf lib}}

\newcommand{\trc}{\tr_\lcl}
\newcommand{\trl}{\tr_\llib}

\section{The Library Abstraction Theorem}
\label{sec:abs}

In this section we state and prove the library abstraction theorem.
The premise of this theorem, the \emph{library correctness condition}, is formulated as follows.

\begin{definition}
\label{def:refines}
{\rm
Let $\lib$ and $\lib^\sharp$ be libraries,
both safe for a program $\MGCn$.
We say that $\lib$ \emph{refines} $\lib^\sharp$ \wrt $\MGCn$,
denoted by $\lib \sqsubseteq_\MGCn \lib^\sharp$,
if both libraries implement the same methods
and
$\fhistory{\MGC{\lib}} \suq \fhistory{\MGC{\lib^\sharp}}$.
}\end{definition}

Next, the abstraction theorem states that $\lib \sqsubseteq_\MGCn \lib^\sharp$
ensures that any client adhering to the library's calling policy
may safely use the implementation
$\lib$ while reasoning about possible behaviors in terms of
the specification $\lib^\sharp$.
Our notion of ``a behavior'' includes the generated histories,
as well as the reachable states, by the composition of the program
and the memory system.
Including reachable states is intended to assist safety verification.
Clearly, we cannot require that the program states match
for threads that are currently executing a method of $\lib$.
In addition, since $\lib$ and $\lib^\sharp$ may update the memory differently
(\eg use different variables), we should only consider the variables of the client
when inspecting the memory states.
This leads us to the following statement.

\begin{restatable}[Abstraction]{theorem}{abstraction}
\label{thm:abs}
Suppose that $\lib \sqsubseteq_\MGCn \lib^\sharp$.
Let $\MGCn$ and $\prog$ be programs
such that both $\lib$ and $\lib^\sharp$ are safe for $\MGCn$ and $\prog$,
and $\prog$ correctly calls $\lib^\sharp$ \wrt $\MGCn$.
If $\tup{\progstate_\Init,\memstate_\Init}
{\asteplab{\tr}{\cs{\client{\prog}{\lib}}{\PSC}}}
\tup{\progstate,\memstate}$, then
there exist $\tr^\sharp$ and
$\tup{\progstate^\sharp,\memstate^\sharp}$  such that the following hold:
\begin{myitemize}%
\item $\tup{\progstate_\Init,\memstate_\Init}
{\asteplab{\tr^\sharp}{\cs{\client{\prog}{\lib^\sharp}}{\PSC}}}
\tup{\progstate^\sharp,\memstate^\sharp}$.
\item $\fhistory{\tr^\sharp} = \fhistory{\tr}$.
\item For every $\tid\in\Tid$, if $\progstate(\tid).\lmethod \nin \dom{\lib}$, then
$\progstate^\sharp(\tid)=\progstate(\tid)$.
\item $\memstate^\sharp\rst{\usedclientlocs{\dom{\lib}}{\prog}}=
\memstate\rst{\usedclientlocs{\dom{\lib}}{\prog}}$ (see \cref{def:restriction_psc}).
\end{myitemize}
\end{restatable}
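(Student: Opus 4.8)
The plan is to prove the statement by a \emph{decompose--translate--recompose} argument, with an induction on $\size{\tr}$ layered on top to break the apparent circularity in the calling-policy hypothesis. Fix $\locset_\lcl \defeq \usedclientlocs{\dom{\lib}}{\prog}$ and note that safety gives $\usedlocs{\lib},\usedlocs{\lib^\sharp} \suq \Loc\setminus\locset_\lcl$. I would first record an invariant preserved along every run of $\cs{\client{\prog}{\lib}}{\PSC}$ (and of $\cs{\client{\prog}{\lib^\sharp}}{\PSC}$): the set $\locset_\lcl\cap\NVLoc$ always separates the reached memory state in the sense of \cref{def:restriction_psc}. This holds because the safety condition forces every persistence block opened by the client (resp.\ library) to range only over $\locset_\lcl$ (resp.\ the library's locations), so no block ever straddles the two parts. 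With this in hand, \cref{lem:memory_disjointness} together with the merge of \cref{def:compose_psc} lets me view each memory state as the merge of its client restriction $\memstate\rst{\locset_\lcl}$ and its library restriction $\memstate\rst{\Loc\setminus\locset_\lcl}$, and to split or combine every memory transition accordingly.

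Given a run $\tup{\progstate_\Init,\memstate_\Init}\asteplab{\tr}{}\tup{\progstate,\memstate}$ of the implementation, I would project it into a client sub-run (on $\locset_\lcl$, together with the client program steps, which run the same code in both worlds) and a library sub-run of $\lib$ (on $\Loc\setminus\locset_\lcl$), both exhibiting the common history $\history \defeq \fhistory{\tr}$. Because client and library locations are disjoint, the library sub-run is in fact a run of $\cs{\MGC{\lib}}{\PSC}$ driven by the calls of $\history$, \emph{provided} $\history$ is realizable by the most general client; granting this, $\history\in\fhistory{\MGC{\lib}}$. The library-correctness premise $\lib\sqsubseteq_\MGCn\lib^\sharp$ then yields $\history\in\fhistory{\MGC{\lib^\sharp}}$, i.e.\ a run of $\cs{\MGC{\lib^\sharp}}{\PSC}$ realizing the very same history. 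Finally I would recompose the original client sub-run with this $\lib^\sharp$ library sub-run: since the two share $\history$, their calls, returns, sfences and crashes sit at matching positions, so the interleaving is well defined and, reading \cref{lem:memory_disjointness} in the composition direction (its store-fence case is exactly what makes sfences---which act on \emph{both} parts---recomposable, while its persist and crash cases handle the remaining internal steps), it is a genuine run of $\cs{\client{\prog}{\lib^\sharp}}{\PSC}$. The resulting $\tr^\sharp$ has $\fhistory{\tr^\sharp}=\history=\fhistory{\tr}$; its client program states are transplanted unchanged, giving $\progstate^\sharp(\tid)=\progstate(\tid)$ whenever $\progstate(\tid).\lmethod\nin\dom{\lib}$; and since the client memory part is carried over verbatim, the merge gives $\memstate^\sharp\rst{\locset_\lcl}=\memstate\rst{\locset_\lcl}$.

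The one gap above is the proviso that $\history$ is realizable by $\MGCn$---i.e.\ that the implementation's library sub-run is policy-compliant---which is what licenses viewing it as an $\MGC{\lib}$ run. This is the main obstacle, since the hypothesis supplies policy adherence only for $\client{\prog}{\lib^\sharp}$, and deriving adherence for $\client{\prog}{\lib}$ from it looks circular (it seemingly presupposes the very refinement we are proving). I would discharge it through the induction on $\size{\tr}$: writing $\tr=\tr'\cdot\alpha$, the induction hypothesis supplies a matching $\tr'^\sharp$ in the specification world, and since $\tr'^\sharp$ is an honest run of $\client{\prog}{\lib^\sharp}$ it is policy-compliant by assumption, while $\fhistory{\tr'^\sharp}=\fhistory{\tr'}$ transfers that compliance to the $\tr'$-prefix. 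The key observation is that admissibility of a given call/return sequence by $\MGCn$ is a property of $\MGCn$'s own control flow together with that sequence, \emph{independent of the plugged-in library}; hence compliance of the prefix w.r.t.\ $\MGC{\lib^\sharp}$ yields compliance w.r.t.\ $\MGC{\lib}$. It then remains to absorb $\alpha$: a library-generated return leaves the call skeleton fixed and is automatically matched by $\MGC{\lib}$ observing $\lib$'s own output; a client-issued call is replayable from $\tr'^\sharp$ (client states match), so extending $\tr'^\sharp$ by it exhibits the enlarged skeleton inside $\client{\prog}{\lib^\sharp}$ and thus its admissibility; and sfences and crashes are always available to the MGC. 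This establishes $\fhistory{\tr}\in\fhistory{\MGC{\lib}}$, after which the decompose--translate--recompose construction of the previous paragraph produces the required $\tr^\sharp$ and closes the induction (the base case $\tr=\epsilon$ being immediate).
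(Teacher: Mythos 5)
Your proposal is correct and follows essentially the same route as the paper's proof: the decompose--translate--recompose machinery is precisely the paper's Composition Lemma (\cref{lem:composition}, proved by induction on trace lengths from \cref{lem:memory_disjointness} and the separation invariant), applied exactly as the paper applies it three times---to transfer the prefix's policy compliance from $\MGC{\lib^\sharp}$ to $\MGC{\lib}$, to exploit $\lib\sqsubseteq_\MGCn\lib^\sharp$, and to rebuild the client-world trace with the stated program-state and memory-restriction guarantees---while your induction on $\size{\tr}$ with the case analysis on the last label (client step replayable in the spec world, library return enabled in the recomposed MGC run, crashes and sfences freely appendable) is the paper's minimal-violation argument in inductive form. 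One caution: your ``key observation'' that MGC-admissibility of a call/return skeleton is \emph{independent of the plugged-in library} is false as literally stated (see the $\mathsf{cons}$/$\mathsf{head}$ example in \cref{sec:policy}, where the calling policy genuinely depends on values the library returns); the transfer you need is instead justified by recomposing the MGC client part of the $\MGC{\lib^\sharp}$ run with $\lib$'s library sub-run of $\tr'$, which is legitimate because the two agree on the full history \emph{including return values}---i.e.\ by your own recomposition machinery, used exactly as the paper applies \cref{lem:composition} with $\prog:=\MGCn$ and $\prog':=\prog$.
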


Note that $\lib \sqsubseteq_\MGCn \lib^\sharp$ is necessary for the conclusion
to hold: otherwise, $\MGCn$ itself is a client that can observe behaviors of
$\lib$ that are impossible for $\lib^\sharp$. Following
\cref{sec:policy}, we also note that policy adherence is required \wrt to
$\lib^\sharp$.

To prove the abstraction theorem, the following key lemma is used multiple times
(with different arguments). It allows us to compose the client's part from one trace
with the library's part from another into one combined trace.

\begin{restatable}[Composition]{lemma}{composition}
\label{lem:composition}
{\rm
Let $\lib$ and $\lib'$ be libraries implementing the same set $\methodset$ of methods
such that both are safe for a program $\prog$,
and $\lib$ is also safe for a program $\prog'$.
Suppose that
$\tup{\progstate_\Init,\memstate_\Init}
{\asteplab{\tr_\lcl}{\cs{\client{\prog}{\lib'}}{\PSC}}}
\tup{\progstate_\lcl,\memstate_\lcl}$,
$\tup{\progstate_\Init,\memstate_\Init}
{\asteplab{\tr_\llib}{\cs{\client{\prog'}{\lib}}{\PSC}}}
\tup{\progstate_\llib,\memstate_\llib}$,
and
$\rsthistory{\methodset}{\tr_\lcl} = \rsthistory{\methodset}{\tr_\llib}$.
Then, there exists a trace $\tr$ such that
$\fhistory{\tr} = \fhistory{\tr_\lcl}$
and $\tup{\progstate_\Init,\memstate_\Init}
\asteplab{\tr}{\cs{\client{\prog}{\lib}}{\PSC}} \tup{\progstate,\memstate}$,
for:
\begin{myitemize}
\item $\progstate =
\lambda \tid \ldotp
\begin{cases}
\tup{
\progstate_\llib(\tid).\lpc,
\progstate_\llib(\tid).\lphi,
\progstate_\lcl(\tid).\lpcs,
\progstate_\lcl(\tid).\lmethod}
& \progstate_\lcl(\tid).\lmethod \in \methodset
\\
\progstate_\lcl(\tid)
& \mbox{otherwise}
\end{cases}$
\item
$\memstate=\nmergemem{\memstate_\lcl}{\usedclientlocs{\methodset}{\prog}}
{\memstate_\llib}{\usedlocs{\lib}}$ (see \cref{def:compose_psc}).
\end{myitemize}
}
\end{restatable}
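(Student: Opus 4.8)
The plan is to build $\tr$ by splicing the \emph{client-relevant} part of $\trc$ together with the \emph{library-relevant} part of $\trl$, using the common history as a synchronization skeleton. Write $\locset_\lcl \defeq \usedclientlocs{\methodset}{\prog}$ and $\locset_\llib \defeq \usedlocs{\lib}$; by safety these are disjoint, and since every block instruction only mentions locations owned by the component that issues it, a routine invariant on traces shows that $\locset_\lcl\cap\NVLoc$ separates every state of $\trc$ and $\locset_\llib\cap\NVLoc$ separates every state of $\trl$, so all restrictions below are well defined. Let $H \defeq \rsthistory{\methodset}{\trc} = \rsthistory{\methodset}{\trl}$ and cut each trace at the steps emitting an $H$-label (the $\methodset$-calls, $\methodset$-returns, sfences, and crashes), obtaining $\trc = A_0\,g_1\,A_1\cdots g_k\,A_k$ and $\trl = B_0\,g_1'\,B_1\cdots g_k'\,B_k$, where $g_i$ and $g_i'$ emit the same label $H_i$. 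I set $\tr = \hat A_0\,\hat B_0\,\hat g_1\,\hat A_1\,\hat B_1\cdots \hat g_k\,\hat A_k\,\hat B_k$, where $\hat A_i$ keeps only the steps of $A_i$ whose labels touch $\locset_\lcl$ (client program steps, client reads/writes/flushes/local-sfences/block steps, and each persist replaced by its $\locset_\lcl$-restriction), discarding the $\lib'$ steps, and symmetrically $\hat B_i$ keeps the $\locset_\llib$-steps of $B_i$, discarding the $\prog'$-client steps. The invariant I maintain is that the reached memory equals $\nmergemem{\memstate^\lcl}{\locset_\lcl}{\memstate^\llib}{\locset_\llib}$ for the current $\trc$- and $\trl$-memories (\cref{def:compose_psc}), and the program state assigns each thread its $\trc$-state when it runs client code, and $\tup{(\trl\text{'s }\lpc),(\trl\text{'s }\lphi),(\trc\text{'s }\lpcs),(\trc\text{'s }\lmethod)}$ when it runs an $\methodset$-method.

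The validity of each chunk $\hat A_i\,\hat B_i$ is where \cref{lem:memory_disjointness} does the work. Since a thread's mode (client vs.\ inside an $\methodset$-method) can change only at an $H_i$-event, within a chunk every client-mode thread performs only $\locset_\lcl$-steps and every library-mode thread only $\locset_\llib$-steps. By \cref{lem:memory_disjointness}(\ref{item:other}) each such step acts on the combined state exactly as on its own restriction while leaving the other restriction fixed, and by (\ref{item:persist}) every persist factors into a $\locset_\lcl$-persist and a $\locset_\llib$-persist; hence the kept client steps of $A_i$ replay as a valid run on the $\locset_\lcl$-component, reproducing the $\trc$-memory because the discarded $\lib'$ steps never touched $\locset_\lcl$, the kept library steps of $B_i$ replay on the $\locset_\llib$-component, and the two commute. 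In particular reads return the same values as in their source traces, so the invariant survives $\hat A_i\,\hat B_i$.

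It remains to realize the synchronization steps $\hat g_i$. A $\methodset$-call is realized by the \rulename{call} transition from $g_i$ (in $\prog$'s main) and a $\methodset$-return by the \rulename{return} transition from $g_i'$ (in $\lib$); this is exactly where history equality is indispensable. Since the call labels $\calllab{\method}{\phi}$ of $g_i$ and $g_i'$ coincide, $\lib$ began the method in $\trl$ with the same register store the client of $\prog$ installs at the combined call, so $\lib$'s execution replays faithfully; and since the return labels coincide, the register store $\lib$ produces equals the one $\trc$ resumes with, while the return point comes from $\trc$'s frozen $\lpcs$ — so after $\hat g_i$ the returning thread's combined state agrees with $\trc$'s, flipping it back to client mode consistently with the invariant. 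A crash is realized by one \rulename{crash}, decomposing by \cref{lem:memory_disjointness}(\ref{item:crash}) and resetting all threads to client mode in both sources. For an sfence $H_i$ issued by $\tid$ I use \cref{lem:memory_disjointness}(\ref{item:sflab}): the combined $\sflab$ is enabled iff enabled on \emph{both} restrictions, which holds because a full $\sflab$ in \emph{either} source clears all of $\tid$'s flush-optimal markers from \emph{all} of that trace's buffers; thus at the boundary the $\trc$-side has no $\fotlabp{\tid}$ in $\locset_\lcl$-buffers and the $\trl$-side none in $\locset_\llib$-buffers. I realize the fence by whichever side owns $\tid$'s program counter in its current mode, advancing it as in that source via the instruction rule or \rulename{non-det-sfence}.

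The main obstacle is precisely this sfence synchronization: it is the only memory event constraining both components simultaneously, and it embodies the client/library interaction of \cref{sec:sfence}; correctness hinges on the ``a full fence clears both sides'' reading of \cref{lem:memory_disjointness}(\ref{item:sflab}) together with the fact that both sources are advanced to a common $H$-boundary before the fence is taken. The remaining labor is the bookkeeping at \rulename{call}/\rulename{return}, where one threads the frozen $\lpcs$/$\lmethod$ (from $\trc$) against the evolving $\lpc$/$\lphi$ (from $\trl$); this is routine once the call/return label equalities supplied by $\rsthistory{\methodset}{\trc}=\rsthistory{\methodset}{\trl}$ are in hand. Finally, $\fhistory{\tr}=\fhistory{\trc}$ follows by tracking contributions: each $\hat A_i$ contributes exactly the non-$\methodset$ client calls and returns of $A_i$, each $\hat B_i$ contributes no history label, and each $\hat g_i$ contributes $H_i$, reproducing $\fhistory{\trc}$ in order; after the last chunk the invariant yields precisely the target $\progstate$ and $\memstate$ of the statement.
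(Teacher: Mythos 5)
Your proposal is correct, and it relies on exactly the ingredients the paper's proof uses: locality and separation of reachable states (\cref{prop:local,prop:prog_lib_separates}), mode agreement derived from the shared $\methodset$-history (\cref{prop:same_f}), the merge invariant of \cref{def:compose_psc}, and both directions of \cref{lem:memory_disjointness} to localize kept steps and recompose them on the merged state, with calls, returns, crashes and sfences as the only genuinely synchronized events (your sfence treatment --- enabledness on both restrictions via \cref{lem:memory_disjointness}(\ref{item:sflab}), program step supplied by whichever side owns the thread's mode, possibly via \rulename{non-det-sfence} --- is exactly the paper's handling of that case). The difference is organizational. The paper proves the lemma by induction on $\size{\tr_\lcl}+\size{\tr_\llib}$, peeling off last transition labels with a three-way case analysis (last label of $\tr_\llib$ non-history, last label of $\tr_\lcl$ non-history, or both ending in matching history labels); the ``within-chunk replay'' you need is absorbed into the induction hypothesis and never has to be isolated. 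You instead build $\tr$ forwards: cut both traces at the common history events, replay each chunk in a canonical order (kept client steps, then kept library steps), and re-synchronize at the boundaries. Your route makes the synchronization skeleton and the invariant fully explicit and yields a canonical interleaving, but it requires an extra inner induction along each chunk --- showing the kept steps stay enabled once the discarded steps and the complementary parts of persists are dropped --- which you correctly reduce to \cref{lem:memory_disjointness}(\ref{item:other}) and (\ref{item:persist}) and may treat as routine; the paper's backward induction avoids that inner argument at the cost of the case analysis and less transparent bookkeeping. Both arguments are sound and interchangeable here.
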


The
proof of  \cref{lem:composition}  is based on the inherent disjointness in client-library composition
provided by a library safe for its client program, which we leverage in the
following two ways.

Firstly, we extract \emph{client-local} and \emph{library-local} transition
properties from all transitions of $\cs{\client{\prog}{\lib'}}{\PSC}$ and
$\cs{\client{\prog'}{\lib}}{\PSC}$. Thus, when we consider a transition by
$\cs{\client{\prog}{\lib'}}{\PSC}$ corresponding to an instruction outside of
a method of $\lib'$, we show that an analogous transition is
possible with the same program state, but with memory state zeroing out
locations used by the library $\lib'$. Similarly, when we consider a
transition by $\cs{\client{\prog'}{\lib}}{\PSC}$ corresponding to an
instruction in a method of $\lib$, we show that an analogous
transition is possible with almost the same program state, except we
alter its stored program counter, and with memory state zeroing out locations
used by the client $\prog'$. The justifications for these steps follow
by the ($\Rightarrow$) directions of \cref{lem:memory_disjointness}.

Secondly, we compose the \emph{client-local} transition properties $\prog$
exhibits in $\tr_\lcl$ and the \emph{library-local} transition properties
$\lib$ exhibits in $\tr_\llib$ while constructing transitions of
$\cs{\client{\prog}{\lib}}{\PSC}$ for a trace $\tr$. Knowing that $\lib$ is
safe for $\prog$, we consider client-local transition properties from
$\tr_\lcl$ corresponding to transitions we wish to recreate in $\tr$, and
replace zeroed-out memory locations with locations of $\lib$. Dually, we
consider library-local transition properties from $\tr_\llib$ corresponding to
transitions we wish to recreate in $\tr$, and replace zeroed-out memory
locations with locations of $\prog$. The ($\Leftarrow$) directions of
\cref{lem:memory_disjointness} justify such transformations.
For instance,  non-$\sflab$-transitions
can be composed, provided that the
client program preserves the library memory state, and vice versa; while
crashes and $\sflab$-transitions record an interaction
between a client program and a library and therefore need to be performed in
synchrony.

We use these two ideas in proving \cref{lem:composition} by induction on the sum
of lengths of $\tr_\lcl$ and $\tr_\llib$, and use their local transition
properties to justify composing them in synchrony. For
the base case, we can simply take $\tr = \epsilon$. For the induction step, we
consider the last labels in $\tr_\lcl$ and $\tr_\llib$, as well as the cases
when one of the traces is empty. When $\tr_\lcl = \_ \cdot
\alpha_\lcl$ and $\tr_\llib = \_ \cdot \alpha_\llib$, we use $\tr'$ from the
induction hypothesis for $\tr_\lcl$ and $\tr_\llib$ with the last action removed
from either or both of them, and let $\tr = \tr' \cdot \alpha_\lcl$ or $\tr =
\tr' \cdot \alpha_\llib$.

Then, the abstraction theorem is proved as follows.

\cparagraph{Proof outline for \cref{thm:abs}}
It suffices to show $\fhistory{\client{\prog}{\lib}} \suq \fhistory{\client{\prog}{\lib^\sharp}}$;
then the claim follows using \cref{lem:composition} by letting $\lib:=\lib^\sharp$, $\lib':=\lib$, $\prog:=\prog$, and $\prog':=\prog$.
Suppose otherwise, and let $\history$ be a shortest history in
$\fhistory{\client{\prog}{\lib}} \setminus \fhistory{\client{\prog}{\lib^\sharp}}$.
Let $\tr$ be a shortest trace in $\traces{\cs{\client{\prog}{\lib}}{\PSC}}$ with $\fhistory{\tr}=\history$.
Consider the last transition label $\alpha$ in $\tr$.
The minimality of $\history$ and $\tr$ ensures that
$\alpha$ must be a return transition label for some $\method\in\dom{\lib}$.
Indeed, otherwise, we can show that $\alpha$ is enabled in the end of a corresponding trace of
$\cs{\client{\prog}{\lib^\sharp}}{\PSC}$, which contradicts the fact that
$\history \nin \fhistory{\client{\prog}{\lib^\sharp}}$.
(The full argument here requires applying  \cref{lem:composition} with
$\lib:=\lib^\sharp$, $\lib':=\lib$, $\prog:=\prog$, and $\prog':=\prog$.)

Now, using the fact that
$\prog$ correctly calls $\lib^\sharp$ \wrt $\MGCn$,
we again apply \cref{lem:composition}
with $\lib:=\lib$, $\lib':=\lib^\sharp$, $\prog:=\MGCn$, and $\prog':=\prog$,
and derive that $\alpha$ is enabled in the end of a corresponding trace of $\cs{\MGC{\lib}}{\PSC}$.
Then,
$\lib \sqsubseteq_\MGCn \lib^\sharp$ ensures that
$\rsthistory{\dom{\lib}}{\tr}\in \rsthistory{\dom{\lib}}{\MGC{\lib^\sharp}}$.
Using \cref{lem:composition} for the last time (applied with $\lib:=\lib^\sharp$, $\lib':=\lib$, $\prog:=\prog$, and $\prog':=\MGCn$),
we obtain that $\history=\fhistory{\tr}\in \fhistory{\client{\prog}{\lib^\sharp}}$,
which contradicts our assumption.
\qed\vspace{0.5ex}

The following corollary of \cref{thm:abs} states that, like
classical linearizability, our correctness condition is compositional (\aka local),
meaning that a library consisting of several
(non-interacting) libraries can be abstracted by considering each sub-library separately.
Formally,
the composition of libraries $\lib_1 \til \lib_n$ with pairwise disjoint sets of declared methods,
denoted by $\lib_1 \uplustil \lib_n$, is defined to be the library obtained by taking the union of $\lib_1 \til \lib_n$.
Compositionality is formulated as follows.

\begin{restatable}[Compositionality]{corollary}{compositionality}
\label{cor:compositionality}
{\rm
The following two conditions together imply that
$\lib_1 \uplustil \lib_n \sqsubseteq_\MGCn \lib_1^\sharp \uplustil \lib_n^\sharp$:
\begin{myenum}
\item $\usedlocs{\lib_1} \til \usedlocs{\lib_n},
\usedlocs{\lib_1^\sharp} \til  \usedlocs{\lib_n^\sharp},
\usedclientlocs{\dom{\lib_1 \uplustil \lib_n}}{\MGCn}$ are
pairwise disjoint.
\item
 For all $i$,
 $\lib_i \sqsubseteq_{\MGCn_i} \lib_i^\sharp$
 for $\MGCn_i=\MGC{\lib_1^\sharp \uplustil \lib_{i-1}^\sharp \uplus \lib_{i+1}^\sharp
\uplustil \lib_{n}^\sharp}$.
\end{myenum}
}
\end{restatable}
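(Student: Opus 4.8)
The plan is to prove the statement by a telescoping (hybrid) argument that replaces one sub-library at a time, invoking the abstraction theorem (\cref{thm:abs}) for each single replacement. Since $\lib \sqsubseteq_\MGCn \lib'$ unfolds to plain set inclusion of full-history sets, it is transitive, so it suffices to chain single replacements. Concretely, for $0\le k\le n$ set
\[
\lib^{(k)} \defeq \lib_1^\sharp \uplus \cdots \uplus \lib_k^\sharp \uplus \lib_{k+1} \uplus \cdots \uplus \lib_n ,
\]
so that $\lib^{(0)}=\lib_1 \uplustil \lib_n$ and $\lib^{(n)}=\lib_1^\sharp \uplustil \lib_n^\sharp$, and the goal becomes $\fhistory{\MGC{\lib^{(0)}}}\suq\fhistory{\MGC{\lib^{(n)}}}$. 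Condition~1 makes every $\MGC{\lib^{(k)}}$ well defined, as each hybrid combines libraries with pairwise-disjoint location sets and is therefore safe for $\MGCn$; moreover $\dom{\lib_i}=\dom{\lib_i^\sharp}$ by the refinement premise, so all hybrids implement the same set of methods.

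The heart of the proof is a \emph{downward} induction on $k$ establishing $\fhistory{\MGC{\lib^{(k)}}}\suq\fhistory{\MGC{\lib^{(n)}}}$. The base case $k=n$ is trivial. For the step, suppose the claim holds for $k$; I want it for $k-1$. The only difference between $\lib^{(k-1)}$ and $\lib^{(k)}$ is that component $k$ is $\lib_k$ in the former and $\lib_k^\sharp$ in the latter, so I treat the common ``rest'' $R \defeq \lib_1^\sharp \uplus \cdots \uplus \lib_{k-1}^\sharp \uplus \lib_{k+1} \uplus \cdots \uplus \lib_n$, plugged into the most general client, as a single client program $\prog\defeq\client{\MGCn}{R}$. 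Using disjointness of method domains to combine pluggings, $\client{\prog}{\lib_k}=\MGC{\lib^{(k-1)}}$ and $\client{\prog}{\lib_k^\sharp}=\MGC{\lib^{(k)}}$. I then apply \cref{thm:abs} with $\lib:=\lib_k$, $\lib^\sharp:=\lib_k^\sharp$, client $\prog$, and the theorem's most general client instantiated to $\MGCn_k$. Ranging over all traces, its conclusion $\fhistory{\tr^\sharp}=\fhistory{\tr}$ yields exactly $\fhistory{\MGC{\lib^{(k-1)}}}\suq\fhistory{\MGC{\lib^{(k)}}}$; composing with the induction hypothesis by transitivity gives $\fhistory{\MGC{\lib^{(k-1)}}}\suq\fhistory{\MGC{\lib^{(n)}}}$. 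Instantiating at $k=1$ (i.e.\ $\lib^{(0)}$) concludes.

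It remains to discharge the three hypotheses of \cref{thm:abs} at step $k$. The refinement hypothesis $\lib_k\sqsubseteq_{\MGCn_k}\lib_k^\sharp$ is precisely condition~2. The safety hypotheses — $\lib_k$ and $\lib_k^\sharp$ safe for both $\prog$ and $\MGCn_k$ — follow from condition~1, since the client locations of $\prog$ (resp.\ $\MGCn_k$) lie within $\usedclientlocs{\dom{\lib_1 \uplustil \lib_n}}{\MGCn}$ together with the location sets of the plugged sub-libraries, all of which condition~1 declares disjoint from $\usedlocs{\lib_k}$ and $\usedlocs{\lib_k^\sharp}$. The remaining hypothesis — that $\prog$ correctly calls $\lib_k^\sharp$ with respect to $\MGCn_k$ — unfolds via \cref{def:policy}, using $\client{\prog}{\lib_k^\sharp}=\MGC{\lib^{(k)}}$ and $\client{\MGCn_k}{\lib_k^\sharp}=\MGC{\lib^{(n)}}$, to
\[
\rsthistory{\dom{\lib_k}}{\MGC{\lib^{(k)}}}\;\suq\;\rsthistory{\dom{\lib_k}}{\MGC{\lib^{(n)}}}.
\]
This is obtained from the induction hypothesis $\fhistory{\MGC{\lib^{(k)}}}\suq\fhistory{\MGC{\lib^{(n)}}}$ by projecting onto $\dom{\lib_k}$: since $\rsthistory{\methodset}{\tr}$ depends on $\tr$ only through $\fhistory{\tr}$, the restriction operator is monotone, so the inclusion of full-history sets descends to the $\dom{\lib_k}$-restricted sets. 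This is the crucial point, and the main obstacle to arrange correctly: the specialized client $\MGCn_k$ carries all \emph{other} specifications (not implementations), so the policy-adherence obligation coincides exactly with the projected induction hypothesis, dissolving the apparent circularity discussed in \cref{sec:policy}. Getting the \emph{orientation} of the induction right — from all-specification down to all-implementation, so that policy adherence is always in hand before the abstraction theorem is invoked — is the key to the whole argument; the accompanying location bookkeeping for safety is routine given condition~1.
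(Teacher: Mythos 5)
Your proposal is correct and follows essentially the same route as the paper: a telescoping chain that replaces one sub-library at a time via \cref{thm:abs}, where the crucial policy-adherence premise of each application is obtained by projecting the already-established inclusion into the all-specification history set (exactly the paper's step ``the latter implies that $\MGC{\lib}$ correctly calls $\lib_n$ \wrt $\MGC{\lib^\sharp}$''), and where condition~2's $\MGCn_k$ is precisely the most general client needed for that application. The paper merely packages the same chain as an induction on $n$, peeling off $\lib_n$ and invoking the corollary itself as induction hypothesis with the modified most general client $\MGC{\lib_n^\sharp}$; up to re-indexing of the hybrids, this is your downward induction.
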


To end this section, we provide a simple lemma
that allows one to establish $\lib \sqsubseteq_\MGCn \lib^\sharp$
by applying standard simulation arguments
for \emph{crashless} traces
(with observable transitions being those that induce history labels).
For that matter, we require a simulation relation on non-volatile memories
generated by $\cs{\MGC{\lib}}{\PSC}$ and $\cs{\MGC{\lib^\sharp}}{\PSC}$
that holds for the very initial memory
and preserved %
during crashless executions.

\begin{restatable}{lemma}{sim}
\label{lem:sim}
{\rm
A trace $\tr$ %
is
\emph{$\mem_0$-to-$\mem$} if
$\tup{\progstate_\Init,\memstate_\Init[\lmem\mapsto\mem_0]}
\asteplab{\tr}{\cs{\prog}{\PSC}}
\tup{\progstate,\memstate[\lmem\mapsto\mem]} $
for some $\progstate$ and $\memstate$.
Suppose that some relation
$R$ on $\NVLoc\to\Val$ satisfies:
\begin{myitemize}
\item $\tup{\mem_\Init,\mem_\Init} \in R$.
\item If $\tup{\mem_0,\mem^\sharp_0} \in R$,
then for every $\mem_0$-to-$\mem$ crashless trace $\tr$ of $\cs{\MGC{\lib}}{\PSC}$,
there exist a non-volatile memory $\mem^\sharp$ and
an $\mem^\sharp_0$-to-$\mem^\sharp$ crashless trace $\tr^\sharp$ of $\cs{\MGC{\lib^\sharp}}{\PSC}$,
such that
$\tup{\mem,\mem^\sharp} \in R$
and
$\fhistory{\tr} = \fhistory{\tr^\sharp}$.
\end{myitemize}
Then, assuming $\dom{\lib}=\dom{\lib^\sharp}$, we have that $\lib \sqsubseteq_\MGCn \lib^\sharp$.
}
\end{restatable}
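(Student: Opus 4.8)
The plan is to reduce $\lib \sqsubseteq_\MGCn \lib^\sharp$ to its history-inclusion part $\fhistory{\MGC{\lib}} \suq \fhistory{\MGC{\lib^\sharp}}$ — the ``same methods'' conjunct of \cref{def:refines} is exactly the hypothesis $\dom{\lib}=\dom{\lib^\sharp}$ — and to obtain this inclusion by matching traces segment-by-segment across crashes. Concretely, given a trace $\tr$ of $\cs{\MGC{\lib}}{\PSC}$, I would first split it at its crash transitions as $\tr = \tr_1 \cdot \crash \cdot \tr_2 \cdot \crash \cdots \crash \cdot \tr_k$, with each $\tr_i$ crashless. The crucial observation is that the \rulename{crash} transition resets the program to $\progstate_\Init$ and the memory to $\memstate_\Init[\lmem \mapsto \memstate.\lmem]$, i.e.\ to $\memstate_\Init$ except that the non-volatile memory is carried over. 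Hence, writing $\mem_i$ for the non-volatile memory reached at the end of $\tr_i$ (and setting $\mem_0 \defeq \mem_\Init$), each segment $\tr_i$ starts in $\tup{\progstate_\Init, \memstate_\Init[\lmem \mapsto \mem_{i-1}]}$ and ends with non-volatile memory $\mem_i$; that is, $\tr_i$ is precisely an $\mem_{i-1}$-to-$\mem_i$ crashless trace of $\cs{\MGC{\lib}}{\PSC}$, in the sense required by the hypothesis.

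Next I would build the specification trace by induction on the number $k$ of crashes, maintaining the invariant that after $i$ segments there is a trace $\tr_1^\sharp \cdot \crash \cdots \crash \cdot \tr_i^\sharp$ of $\cs{\MGC{\lib^\sharp}}{\PSC}$ ending with non-volatile memory $\mem_i^\sharp$ such that $\tup{\mem_i,\mem_i^\sharp}\in R$ and $\fhistory{\tr_j} = \fhistory{\tr_j^\sharp}$ for all $j \le i$. The base case uses $\tup{\mem_\Init,\mem_\Init}\in R$. For the inductive step, since $\tup{\mem_{i-1},\mem_{i-1}^\sharp}\in R$ and $\tr_i$ is an $\mem_{i-1}$-to-$\mem_i$ crashless trace, the simulation hypothesis supplies an $\mem_{i-1}^\sharp$-to-$\mem_i^\sharp$ crashless trace $\tr_i^\sharp$ of $\cs{\MGC{\lib^\sharp}}{\PSC}$ with $\tup{\mem_i,\mem_i^\sharp}\in R$ and $\fhistory{\tr_i}=\fhistory{\tr_i^\sharp}$. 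By the ``$\mem_0$-to-$\mem$'' definition, $\tr_i^\sharp$ starts in $\tup{\progstate_\Init,\memstate_\Init[\lmem\mapsto \mem_{i-1}^\sharp]}$, which is exactly the state produced by the preceding \rulename{crash} transition (whose carried-over non-volatile memory is $\mem_{i-1}^\sharp$). Since \rulename{crash} is unconditionally enabled in $\cs{\MGC{\lib^\sharp}}{\PSC}$, I may append $\crash \cdot \tr_i^\sharp$ and preserve the invariant. Taking $\tr^\sharp \defeq \tr_1^\sharp \cdot \crash \cdots \crash \cdot \tr_k^\sharp$ then yields a genuine trace of $\cs{\MGC{\lib^\sharp}}{\PSC}$.

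Finally, $\fhistory{\cdot}$ is a subsequence operation that distributes over concatenation and preserves crash labels (as $\crash \in \HTLab$), so $\fhistory{\tr} = \fhistory{\tr_1}\cdot\crash\cdots\crash\cdot\fhistory{\tr_k} = \fhistory{\tr_1^\sharp}\cdot\crash\cdots\crash\cdot\fhistory{\tr_k^\sharp} = \fhistory{\tr^\sharp}$, which gives the required inclusion and hence $\lib \sqsubseteq_\MGCn \lib^\sharp$. I expect the only delicate point to be the bookkeeping at the crash boundaries: one must verify that the target of each \rulename{crash} transition coincides exactly with the initial configuration implicit in the ``$\mem_0$-to-$\mem$'' definition (a complete reset of the volatile state together with carry-over of the non-volatile memory), so that the per-segment simulation traces splice into one uninterrupted run. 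Everything occurring within a crashless segment is delegated wholesale to the hypothesis, so no reasoning about the \PSC internals — persists, FIFO buffers, or persistence blocks — is needed for this lemma.
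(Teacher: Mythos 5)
Your proposal is correct and follows essentially the same route as the paper's own proof: decompose the run of $\cs{\MGC{\lib}}{\PSC}$ at its $\crash$ transitions into crashless segments, observe that each segment is an $\mem_{i-1}$-to-$\mem_i$ trace, apply the simulation hypothesis inductively segment-by-segment, and splice the resulting specification segments back together with $\crash$ steps. Your treatment is in fact slightly more explicit than the paper's outline about the crash-boundary bookkeeping (that the target of a \rulename{crash} transition coincides exactly with the initial configuration required by the $\mem_0$-to-$\mem$ definition), which is precisely the point the paper leaves implicit.
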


Furthermore, if $\MGC{\lib^\sharp}$ has no $\foInst{\cdot}$ and $\sfenceInst$ instructions, then
$\cs{\MGC{\lib^\sharp}}{\PSC}$
can take non-deterministic sfence steps (see \cref{sec:programs})
when $\cs{\MGC{\lib}}{\PSC}$ takes $\lSF$-steps,
so store fences can be ignored when checking
$\fhistory{\tr} = \fhistory{\tr^\sharp}$.

\newcommand{\codelabels}[1]{\kw{{#1}\texttt{:}\;}}
\newcommand{\codelabel}[1]{\kw{{#1}}}

\newcommand{\jobs}{\cnvloca{w}}
\newcommand{\xlockrw}{\tilde{\ensuremath{\mathtt{l}}}_{\mathtt{rw}}}

\newcommand{\mread}{\mathsf{read}}
\newcommand{\mwrite}{\mathsf{write}}
\newcommand{\mflush}{\mathsf{sync}}
\newcommand{\mpair}{\mathrm{pair}}
\newcommand{\mbpair}{\mathrm{bpair}}

\newcommand{\casInsta}[3]{\casInstn({#1},{#2},{#3})}

\newcommand{\xone}{\cnvloc{1}_\mathtt{1}}
\newcommand{\xtwo}{\cnvloc{1}_\mathtt{2}}
\newcommand{\xoneold}{\cnvloc{1}_\mathtt{1}^\mathtt{old}}
\newcommand{\xtwoold}{\cnvloc{1}_\mathtt{2}^\mathtt{old}}
\newcommand{\xonenew}{\cnvloc{1}_\mathtt{1}^\mathtt{new}}
\newcommand{\xtwonew}{\cnvloc{1}_\mathtt{2}^\mathtt{new}}
\newcommand{\xc}{\dot{\ensuremath{\mathtt{s}}}}
\newcommand{\xlock}{\tilde{\ensuremath{\mathtt{l}}}}
\newcommand{\xflag}{\dot{\ensuremath{\mathtt{f}}}}

\newcommand{\vxone}{\cvloc{1}_\mathtt{1}}
\newcommand{\vxtwo}{\cvloc{1}_\mathtt{2}}
\newcommand{\vxc}{\tilde{\ensuremath{\mathtt{s}}}}

\newcommand{\xoneprev}{\cnvloc{1}_\mathtt{1}^\mathtt{prev}}
\newcommand{\xtwoprev}{\cnvloc{1}_\mathtt{2}^\mathtt{prev}}
\newcommand{\xonenext}{\cnvloc{1}_\mathtt{1}^\mathtt{next}}
\newcommand{\xtwonext}{\cnvloc{1}_\mathtt{2}^\mathtt{next}}

\section{An Application: Persistent Pairs}
\label{sec:seqlock}

We illustrate the use of the library abstraction theorem for
a simple concurrent and persistent data structure---a pair of values that supports write and read operations.
We present two specifications and an implementation for each specification.
Both specifications ensure atomicity (\ie linearizability if the system does not crash),
and ``data consistency''
(reads return values written by a single write invocation),
but they differ in their persistency guarantees.
For the concurrency aspect, the implementations
follow the sequence lock (seqlock, for short) mechanism,
which uses a version counter along with the pair
and allows readers to avoid blocking~\cite{seqlock}.
For durability, the implementations employ different techniques:
one uses a ``redo log'' and the other is based on ``checkpoints''.

\cparagraph{A durable pair}
The first specification, a library we denote by $\lib^\sharp_\mpair$,
consists of three methods: $\mwrite$ for writing the two values of the pair,
$\mread$ for reading the pair,
and $\recoverMethod$ for recovering from a crash.
The specification is as follows:\footnote{Our simplified language has no mechanism for argument passing.
We assume that $\mwrite$ receives arguments ($
\mread$ returns results) via designated registers, $\creg{1}_1$ and
$\creg{1}_2$.}

{\smaller
\resizebox{\mycodefactor\width}{!}{
\begin{minipage}[t]{0.45\textwidth}
$$\inarr{
\underline{\mwrite{:}} \\
\codelabels{LOCK} \ifGotoInsts{\casInsta{\xlock}{0}{1}}{\codelabel{LOCK}} \sep \\
\beginpbInst{\xone,\xtwo} \sep  \\
\writeInst{\xone}{\creg{1}_1} \sep \writeInst{\xtwo}{\creg{1}_2} \sep \\
\myendpbInst{\xone,\xtwo} \sep  \\
\flInst{\xone} \sep\\
\codelabels{UNLOCK} \writeInst{\xlock}{0} \sep \\
\returnInst{} \sep
}$$
\end{minipage}}
\hfill
\resizebox{\mycodefactor\width}{!}{
\begin{minipage}[t]{0.45\textwidth}
$$\inarr{
\underline{\mread:} \\
\codelabels{LOCK} \ifGotoInsts{\casInsta{\xlock}{0}{1}}{\codelabel{LOCK}} \sep \\
\readInst{\creg{1}_1}{\xone} \sep\readInst{\creg{1}_2}{\xtwo} \sep \\
\codelabels{UNLOCK} \writeInst{\xlock}{0} \sep \\
\returnInst{} \sep
\\
\\
\underline{\recoverMethod:} \\
\returnInst{} \sep
}$$
\end{minipage}}
}

\noindent
A volatile lock ($\xlock$) is used to ensure atomicity.
For durability, writes use persistence blocks, which ensure that the two parts of the pair
persist simultaneously. After the block is ended,  $\flInst{\xone}$ (equivalent here
to $\flInst{\xtwo}$ due to the persistence block) ensures that the block persists.
If the system crashes after a write completed, the written values are guaranteed to
survive the crash. Thus, there is nothing to be done at recovery.
Nevertheless, aiming to allow implementations,
the library policy requires that recovery is executed after every crash before other methods are invoked
($\MGCrec$ in \cref{sec:lib}).%

Next, we present an implementation of $\lib^\sharp_\mpair$, which we denote by
$\lib_\mpair$. We write $\writeInst{\loc}{\loca}$ instead of
a read of $\loca$ (to some fresh register) followed by a write to $\loc$. We
also omit some necessary register bookkeeping: since histories record the whole register
store in call/return labels, strictly speaking, implementations must unroll
changes to registers not used to pass return values.

{
\vspace{\mycodeoffset}\vspace{\mycodeoffset}
\smaller
\noindent
\resizebox{\mycodefactor\width}{!}{
\begin{minipage}[t]{0.45\textwidth}
$$\inarr{
\underline{\mwrite{:}} \\
\codelabels{LOCK} \ifGotoInsts{\casInsta{\xlock}{0}{1}}{\codelabel{LOCK}} \sep \\
\writeInst{\xonenew}{\creg{1}_1} \sep \foInst{\xonenew} \sep
\writeInst{\xtwonew}{\creg{1}_2} \sep \foInst{\xtwonew} \sep \\
\sfenceInst \sep \\
\writeInst{\xc}{\xc+1} \sep \flInst{\xc} \sep \\
\writeInst{\xone}{\creg{1}_1} \sep \foInst{\xone} \sep
\writeInst{\xtwo}{\creg{1}_2} \sep \foInst{\xtwo} \sep \\
\sfenceInst \sep \\
\writeInst{\xc}{\xc+1} \sep \\
\codelabels{UNLOCK} \writeInst{\xlock}{0} \sep \\
\returnInst{} \sep
}$$
\end{minipage}}
\hfill
\resizebox{\mycodefactor\width}{!}{
\begin{minipage}[t]{0.2\textwidth}
$$\inarr{
\underline{\mread:} \\
\codelabels{BEGIN}\readInst{\creg{1}}{\xc} \sep \\
\ifGotoInsts{\mathsf{odd}(\creg{1})}\codelabel{BEGIN} \sep \\
\readInst{\creg{1}_1}{\xone} \sep\readInst{\creg{1}_2}{\xtwo} \sep \\
\ifGotoInsts{\xc \neq \creg{1}}\codelabel{BEGIN} \sep  \\
\returnInst{} \sep
}$$
\end{minipage}}
\hfill
\resizebox{\mycodefactor\width}{!}{
\begin{minipage}[t]{0.25\textwidth}
$$\inarr{
\underline{\recoverMethod:} \\
\ifGotoInsts{\mathsf{even}(\xc)}{\codelabel{END}} \sep \\
\writeInst{\xone}{\xonenew} \sep \foInst{\xone} \sep \\
\writeInst{\xtwo}{\xtwonew} \sep \foInst{\xtwo} \sep \\
\sfenceInst \sep \\
\codelabels{END} \writeInst{\xc}{0} \sep \\
\returnInst{} \sep
}$$
\end{minipage}}
}

\smallskip

\noindent
Ignoring crashes, atomicity is guaranteed here using a seqlock.
As for persistency, observe first that writing directly to the NVM is wrong
since we cannot control the non-deterministic propagation: if a crash occurs
during the execution of $\mwrite$, it is possible that only one part
of the pair has persisted, and the recovery method will not have sufficient information
for reinitializing the pair correctly.
Instead, $\mwrite$ first records its ``job'' in $\tup{\xonenew,\xtwonew}$.
Then, if a crash happens and the write was in the middle of updating
$\tup{\xone,\xtwo}$ (as identified via observing an odd version number),
 the recovery will complete the job of the writer.
 We note that the (rather extensive) use of flushes
 (or flush-optimals followed by an sfence)
 is necessary here in order to restrict the out-of-order
 persistence.
 The final write to $\xc$ in $\mwrite$ does not have to be explicitly persisted.
 Indeed, if a crash happens between this write and its persistence,
 recovery will redo the (idempotent) job.

\begin{restatable}{theorem}{pair}
    \label{thm:pair}
    $\lib_\mpair \sqsubseteq_{\MGCrec} \lib_\mpair^\sharp$.
\end{restatable}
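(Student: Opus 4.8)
The plan is to invoke the simulation lemma \cref{lem:sim} with $\lib := \lib_\mpair$ and $\lib^\sharp := \lib_\mpair^\sharp$ (their method sets $\{\mwrite,\mread,\recoverMethod\}$ coincide, as the lemma requires), which reduces the claim to producing a relation $R$ on non-volatile memories ($\NVLoc\to\Val$) satisfying its two conditions; crashes are then discharged by the lemma, so I only reason about \emph{crashless} executions of the two $\MGCrec$-composed systems. Because $\MGC{\lib_\mpair^\sharp}$ uses neither $\foInst{\cdot}$ nor $\sfenceInst$, the remark following \cref{lem:sim} lets me answer every $\lSF$-step of the implementation by a non-deterministic sfence on the specification side, so store fences can be ignored when matching histories. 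I define $R$ via the \emph{logical pair} of an implementation memory $\mem$, namely $\mathsf{lp}(\mem)\defeq(\mem(\xone),\mem(\xtwo))$ when $\mem(\xc)$ is even and $\mathsf{lp}(\mem)\defeq(\mem(\xonenew),\mem(\xtwonew))$ when $\mem(\xc)$ is odd---precisely the value $\recoverMethod$ reconstructs---and set $R\defeq\{(\mem,\mem^\sharp)\mid \mathsf{lp}(\mem)=(\mem^\sharp(\xone),\mem^\sharp(\xtwo))\}$. Since all variables start at $0$ and $0$ is even, $(\mem_\Init,\mem_\Init)\in R$.

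The core of the proof is a forward simulation that builds the specification trace by running each $\lib_\mpair^\sharp$ method as an atomic \emph{burst} at a chosen linearization point of the matching $\lib_\mpair$ method, while replaying every $\MGCrec$ (client) step verbatim so that $\callInst{\cdot}$/$\returnInst$ register stores coincide (using the implicit register bookkeeping the paper mentions). A $\mread$ linearizes at its successful \emph{second} read of $\xc$: there I burst the specification $\mread$ (lock, read the pair, unlock), capturing the returned values. Correctness of those values is the seqlock argument---within a crashless segment $\xc$ is reset to $0$ by recovery and thereafter only increases (so no ABA), hence two reads of the same even version sandwich no completed writer and the data read equals the committed pair at that version. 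A $\mwrite$ linearizes at the \emph{persist step that propagates the odd value of $\xc$ into the NVM}; there I burst the whole specification $\mwrite$---opening the block, writing the pair, closing it, and an accompanying persist step flushing the block into $\mem^\sharp$. The invariant maintained is that $R$ holds, the lock and client states agree, and, whenever $\mem(\xc)$ is odd, the redo log already holds the pending write's values in the NVM; this last fact is established by the flush-optimals and sfence that precede the odd increment of $\xc$, and it is exactly why $\mathsf{lp}$ flips to the new value at the chosen writer linearization point.

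Recovery is matched structurally: the specification $\recoverMethod$ is a no-op, while the implementation either does nothing (even $\xc$) or redoes the pending write from the log (odd $\xc$), and $\MGCrec$ guarantees recovery runs in isolation before any other call; in both cases $\mathsf{lp}$ is unchanged, so $R$ survives the crash boundary handled by \cref{lem:sim}. The delicate point---and the step I expect to be the main obstacle---is that in the implementation a write becomes \emph{durable} (recoverable from the log, so $\mathsf{lp}$ has already switched to the new value) strictly \emph{before} it becomes \emph{visible} to readers (which happens only when $\xc$ is bumped back to an even value), whereas a single specification $\mwrite$ appears to fix both at once. I reconcile these by two observations: the specification block may persist at any moment after it is closed yet before the unlock, so its durability can be scheduled early while its visibility stays gated by the lock; and, crucially, \emph{no reader linearizes while $\xc$ is odd}, since any reader that observes the odd (or a changed) version retries. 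Thus bursting the specification writer at the durability point commits the new pair to $\mem^\sharp$ without any reader observing it prematurely, simultaneously meeting the durability ordering demanded by $R$ and the visibility ordering demanded by history equality. The remaining obligations---writer mutual exclusion, inherited from the implementation's $\xlock$ which totally orders the writer bursts, and the success of each burst's lock acquisition, since the lock is free between non-overlapping bursts---are then routine.
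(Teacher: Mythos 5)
Your proposal is correct and takes essentially the same route as the paper: the paper's proof likewise instantiates \cref{lem:sim} with exactly your relation $R$ (its two clauses are precisely your logical-pair equation), dispatches store fences via the non-deterministic sfence remark, reuses the standard seqlock linearizability argument for history equality, and resolves the durability-versus-visibility tension by persisting the specification's block inside the writer's exclusive odd-$\xc$ window, relying on the fact that at most one incomplete $\mwrite$ can hold the lock and persist an odd $\xc$. The only divergence is bookkeeping: the paper persists the block of each \emph{completed} write just before the specification's $\flInst{\xone}$ step and reserves the implementation's odd-$\xc$ persist point for the single incomplete lock-holding write, whereas you schedule every block persist at the odd-$\xc$ persist point; both choices yield the same set and order of persisted blocks, hence the same end-of-trace non-volatile memories.
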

\noindent Our proof sketch uses \cref{lem:sim}, letting
$\tup{\mem,\mem^\sharp}\in R$ if the following hold:
\begin{myitemize}
    \item If $\mem(\xc)$ is even, then $\mem(\xone)= \mem^\sharp(\xone)$ and $\mem(\xtwo)= \mem^\sharp(\xtwo)$.
    \item If $\mem(\xc)$ is odd,  then $\mem(\xonenew)= \mem^\sharp(\xone)$ and $\mem(\xtwonew)= \mem^\sharp(\xtwo)$.
\end{myitemize}

Using the abstraction theorem, we obtain that for a program $\prog$
that uses $\lib_\mpair$ correctly (\ie calls recovery first after every crash),
for every state $\tup{\progstate,\memstate}$
that is reachable in $\cs{\client{\prog}{\lib_\mpair}}{\PSC}$,
there exists a state $\tup{\progstate^\sharp,\memstate^\sharp}$
reachable in $\cs{\client{\prog}{\lib_\mpair^\sharp}}{\PSC}$
and indistinguishable from $\tup{\progstate,\memstate}$ from the client perspective.

\cparagraph{A buffered durable pair}
A second specification,
denoted by $\lib^\sharp_\mbpair$,
allows for ``buffered'' behaviors, which enable faster implementations by weakening persistency guarantees~\cite{persistent-lin}.
Instead of requiring operations to persist before returning, it only requires that operations are ``persistently ordered'' before returning.

{\vspace{\mycodeoffset}
\smaller\noindent
\resizebox{\mycodefactor\width}{!}{
\begin{minipage}[t]{0.38\textwidth}
$$\inarr{
\underline{\mwrite{:}} \\
\codelabels{LOCK} \ifGotoInsts{\casInsta{\xlock}{0}{1}}{\codelabel{LOCK}} \sep \\
\beginpbInst{\xone,\xtwo} \sep  \\
\writeInst{\xone}{\creg{1}_1} \sep \writeInst{\xtwo}{\creg{1}_2} \sep \\
\myendpbInst{\xone,\xtwo} \sep \\
\codelabels{UNLOCK} \writeInst{\xlock}{0} \sep \\
\returnInst{} \sep
}$$
\end{minipage}}
\hfill
\resizebox{\mycodefactor\width}{!}{
\begin{minipage}[t]{0.36\textwidth}
$$\inarr{
\underline{\mread{:}} \\
\codelabels{LOCK} \ifGotoInsts{\casInsta{\xlock}{0}{1}}{\codelabel{LOCK}} \sep \\
\readInst{\creg{1}_1}{\xone} \sep\readInst{\creg{1}_2}{\xtwo} \sep \\
\codelabels{UNLOCK} \writeInst{\xlock}{0} \sep \\
\returnInst \sep
}$$
\end{minipage}}
\hfill
\resizebox{\mycodefactor\width}{!}{
\begin{minipage}[t]{0.28\textwidth}
$$\inarr{
\underline{\recoverMethod{:}} \\
\returnInst{} \sep
\\
\\
\underline{\mflush{:}} \\
\flInst{\xone} \sep\\
\returnInst{} \sep
}$$
\end{minipage}}
}

Compared to $\lib^\sharp_\mpair$, the explicit flush instruction $\flInst{\xone}$ from the write method is omitted,
which means that a crash after a completed write may take the pair back to its state before the write.
Thus, the state after a crash need not necessarily be fully up-to-date.
An additional method, called $\mflush$, can used to
ensure that previous writes have persisted.
Without $\mflush$, an implementation could simply ignore persistency and
store the pair in the volatile memory, which corresponds to an execution of $\lib^\sharp_\mbpair$
in which the persistency buffers are never being flushed.

An implementation can be obtained as follows:

{\vspace{\mycodeoffset}\vspace{\mycodeoffset}\smaller
\noindent
\resizebox{\mycodefactor\width}{!}{
\begin{minipage}[t]{0.24\textwidth}
$$\inarr{
\underline{\mwrite{:}} \\
\codelabels{LOCK} \ifGotoInsts{\casInsta{\xlock}{0}{1}}{\codelabel{LOCK}} \sep \\
\writeInst{\vxc}{\vxc+1} \sep \\
\writeInst{\vxone}{\creg{1}_1} \sep
\writeInst{\vxtwo}{\creg{1}_2} \sep  \\
\writeInst{\vxc}{\vxc+1} \sep \\
\codelabels{UNLOCK} \writeInst{\xlock}{0} \sep \\
\returnInst{} \sep
}$$
\end{minipage}}
\hfill
\resizebox{\mycodefactor\width}{!}{
\begin{minipage}[t]{0.38\textwidth}
$$\inarr{
\underline{\mread{:}} \\
\codelabels{BEGIN}\readInst{\creg{1}}{\vxc} \sep \\
\ifGotoInst{\mathsf{odd}(\creg{1})}\codelabel{BEGIN} \sep \\
\readInst{\creg{1}_1}{\vxone} \sep\readInst{\creg{1}_2}{\vxtwo} \sep \\
\ifGotoInst{\vxc \neq \creg{1}}\codelabel{BEGIN} \sep  \\
\returnInst \sep
\\[1ex]
\underline{\recoverMethod{:}} \\
\ifGotoInst{\xflag=1}{\codelabel{PREV}} \sep \\
\writeInst{\vxone}{\xonenext} \sep \writeInst{\vxtwo}{\xtwonext} \sep \\
\returnInst \sep \\
\codelabels{PREV}
\writeInst{\vxone}{\xoneprev} \sep \writeInst{\vxtwo}{\xtwoprev} \sep \\
\writeInst{\xflag}{0} \sep \flInst{\xflag} \sep \\
\returnInst \sep
}$$
\end{minipage}}
\hfill
\resizebox{\mycodefactor\width}{!}{
\begin{minipage}[t]{0.38\textwidth}
$$\inarr{
\underline{\mflush{:}} \\
\codelabels{LOCK} \ifGotoInsts{\casInsta{\xlock}{0}{1}}{\codelabel{LOCK}} \sep \\
\readInst{\creg{1}_1}{\vxone} \sep\readInst{\creg{1}_2}{\vxtwo} \sep \\
\writeInst{\xoneprev}{\xonenext} \sep \foInst{\xoneprev} \sep \\
\writeInst{\xtwoprev}{\xtwonext} \sep \foInst{\xtwoprev} \sep \\
\sfenceInst \sep \\
\writeInst{\xflag}{1} \sep \flInst{\xflag} \sep \\
\codelabels{NEXT} \writeInst{\xonenext}{\creg{1}_1} \sep \foInst{\xonenext} \sep \\
\writeInst{\xtwonext}{\creg{1}_2} \sep \foInst{\xtwonext} \sep \\
\sfenceInst \sep \\
\writeInst{\xflag}{0} \sep \flInst{\xflag} \sep \\
\codelabels{UNLOCK} \writeInst{\xlock}{0} \sep \\
\returnInst \sep
}$$
\end{minipage}}
}

\smallskip
This implementation
exploits the freedom allowed by the specification.
Writes and reads again employ a seqlock,
but this time they only use volatile variables.
In turn, $\mflush$ sets a ``checkpoint'',
and recovery rolls the state back to the latest complete checkpoint.
For that matter, a non-volatile flag $\xflag$ is used to detect crashes during the setting the checkpoint $\tup{\xonenext,\xtwonext}$.
Thus, before storing the checkpoint, the previous checkpoint is stored in the non-volatile variables $\tup{\xoneprev,\xtwoprev}$.
Upon recovery, given the value of the flag, we know if we can restore the state from the current stored checkpoint,
or, if a crash happened during the store of this checkpoint (which means that $\mflush$ did not return),
set the pair to the previous stored one.

\begin{restatable}{theorem}{bpair}
\label{thm:bpair}
$\lib_\mbpair \sqsubseteq_{\MGCrec} \lib_\mbpair^\sharp$.
\end{restatable}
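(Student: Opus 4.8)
The plan is to invoke \cref{lem:sim}, which reduces $\lib_\mbpair \sqsubseteq_{\MGCrec} \lib_\mbpair^\sharp$ to producing a relation $R$ on non-volatile memories that holds for $\tup{\mem_\Init,\mem_\Init}$ and is re-established, with matching histories, along every crashless trace of $\cs{\MGC{\lib_\mbpair}}{\PSC}$ by a suitable crashless trace of $\cs{\MGC{\lib_\mbpair^\sharp}}{\PSC}$. Since the specification contains no $\foInst{\cdot}$ and no $\sfenceInst$ instructions, the remark following \cref{lem:sim} lets me match each $\sflab$-step emitted by the implementation's $\mflush$ (its two $\sfenceInst$s) with a \rulename{non-det-sfence} step of the specification, so store fences can be dropped when comparing histories and I only need the call/return labels to agree.

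For the relation I would take the natural candidate tracking what recovery reinstates: $\tup{\mem,\mem^\sharp}\in R$ iff $(\mem(\xoneprev),\mem(\xtwoprev))=(\mem^\sharp(\xone),\mem^\sharp(\xtwo))$ when $\mem(\xflag)=1$, and $(\mem(\xonenext),\mem(\xtwonext))=(\mem^\sharp(\xone),\mem^\sharp(\xtwo))$ when $\mem(\xflag)=0$. The base case is immediate as every variable starts at $0$. The engine of the simulation is a companion invariant on the \emph{volatile} side: at every quiescent point the implementation's working pair $(\vxone,\vxtwo)$, read off consistently through the seqlock counter $\vxc$, equals the latest buffered value of $(\xone,\xtwo)$ in the specification. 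Because $\lib_\mbpair^\sharp$'s $\mwrite$ writes $\xone,\xtwo$ inside a persistence block but never flushes them, I would run the specification lazily, taking no persist steps except when an $\mflush$ forces a drain, so its durable pair stays pinned at the last-flushed value --- exactly the implementation's stationary checkpoint between flushes.

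I would then proceed by induction on the crashless trace, case-splitting on the active instruction. Recovery re-establishes the working invariant: in $\lib_\mbpair$ it reloads $(\vxone,\vxtwo)$ from the flag-selected checkpoint, which by $R$ is the specification's durable pair, while in $\lib_\mbpair^\sharp$ recovery is a no-op and, buffers being empty right after a crash, the visible pair already equals the durable one, so the two working pairs coincide. The $\mwrite$ and $\mread$ cases are then standard seqlock reasoning exactly as for \cref{thm:pair}: $\mwrite$ updates both working pairs identically, and each $\mread$ is linearized against a lock-protected specification read returning the same consistent pair. The pivotal case is $\mflush$, which snapshots the working pair and installs it through the $\xoneprev,\xtwoprev$ / $\xflag$ / $\xonenext,\xtwonext$ protocol; the recovered value switches atomically at the publishing step $\writeInst{\xflag}{0};\flInst{\xflag}$, and I align the specification's $\flInst{\xone}$ with this commit, the persistence block guaranteeing that draining $\xone$'s buffer persists $\xone$ and $\xtwo$ together to the snapshot, thereby restoring $R$.

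The main obstacle is the joint analysis of $\mflush$ and $\recoverMethod$ under out-of-order persistence. The checkpoint writes in $\mflush$ are guarded only by $\foInst{\cdot}$ and an $\sfenceInst$, so their components may reach the NVM independently; the proof must pin down the invariant on $(\xoneprev,\xtwoprev,\xflag,\xonenext,\xtwonext)$ that keeps $R$ inductive not only through the multi-step flush but across \emph{repeated} crashes --- at each intermediate crash point the pair reinstated by $\recoverMethod$, and the pair used after a further recovery clears the flag, must be a consistent value that the specification can also make durable by a suitable choice of persist steps. Making precise how the flag sequences the save-old, overwrite, publish phases so that an interrupted flush rolls back coherently (rather than exposing a half-written or stale checkpoint to the next recovery) is the delicate heart of the argument; the base case, the seqlock cases, and the transfer of $R$ across the crash transition of \cref{lem:sim} are then routine.
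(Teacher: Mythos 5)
Your proposal is correct at the level of rigor of the paper's own proof sketch and takes essentially the same route: it invokes \cref{lem:sim} with exactly the paper's relation $R$ (comparing $(\mem(\xonenext),\mem(\xtwonext))$ with the specification's pair when $\mem(\xflag)=0$ and $(\mem(\xoneprev),\mem(\xtwoprev))$ when $\mem(\xflag)=1$), runs the specification lazily so that buffered persistence blocks are persisted only at flush commits, aligns those persists with the implementation's persist of $0$ to $\xflag$, treats recovery and the $\mwrite$/$\mread$ cases by the standard seqlock linearizability argument, and correctly uses the non-deterministic sfence mechanism to match the implementation's $\sfenceInst$ steps. What you defer as the ``delicate heart'' --- keeping $R$ inductive through interrupted flushes across crashes, noting that at most one incomplete $\mflush$ invocation can hold the lock and placing its specification-side persist at the implementation's persist of $\xflag:=0$ (possibly omitting the specification's $\flInst{\xone}$ step for that incomplete call) --- is precisely the content of the paper's concluding remarks, so your outline does not diverge from the published argument.
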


\noindent Our proof sketch uses \cref{lem:sim}, letting
$\tup{\mem,\mem^\sharp}\in R$ if the following hold:
\begin{myitemize}
\item If $\mem(\xflag)=0$, then $\mem(\xonenext)= \mem^\sharp(\xone)$ and $\mem(\xtwonext)= \mem^\sharp(\xtwo)$.
\item If $\mem(\xflag)=1$,  then $\mem(\xoneprev)= \mem^\sharp(\xone)$ and $\mem(\xtwoprev)= \mem^\sharp(\xtwo)$.
\end{myitemize}

\section{Related and Future Work}
\label{sec:related}

\cparagraph{Library abstraction theorems}
Previous work has developed library abstraction theorems for crashless shared memory concurrency.
First, \citet{Filipovic10} formalized the
intuition that standard linearizability as defined in~\cite{herlihy-lin} corresponds to contextual refinement
(and also proved a completeness result: the converse also holds provided that threads
have other means of interaction besides the library).
Later, \citet{Bouajjani15} refined and formulated this result using history inclusion instead of linearizability,
which is closer to our formalization.
Other abstraction results account for liveness~\cite{Gotsman11_liveness}, resource-transferring programs~\cite{Gostman_13_ownership},
and x86-TSO~\cite{Burckhardt_2012_library_tso}.
Our composition lemma (\cref{lem:composition}) is inspired by \cite{Burckhardt_2012_library_tso},
which addresses a challenge that is close to the challenge posed by store fence instructions in NVM,
where actions of the client and the library affect each other even if they access to distinct locations.
To do so, the notion of a history is extended to expose events that correspond
to the flushing certain entries from the x86-TSO store buffers, which is
close to what we do to handle store fences.
Our alternative approach to this problem, \ie introducing a relaxed version of the store fence, is novel.

While our framework is operational, library abstraction was also studied before
for declarative shared memory concurrency semantics, particularly in the context
of the C11 weak memory model~\cite{Batty:2013,yacovet}.

\cparagraph{Linearizability notions for persistent objects}
Different approaches for adapting the standard linearizability criterion that is based on
crash-free sequential specifications~\cite{herlihy-lin} were proposed before~\cite{aguilera2003strict,Guerraoui04,persistent-lin},
but were not formally related to contextual refinement.
Since methods like $\recoverMethod$ and $\mflush$ (see \cref{sec:seqlock})
are meaningless in crash-free sequential specifications, they require an ad-hoc external treatment in these linearizability adaptations.
The variety of approaches to interpret crash-free sequential specifications for crash-resilient concurrent objects
makes it hard, in particular, to combine libraries with different linearizability guarantees in a single program.

In turn, these existing notions are typically expressible in the refinement framework that we employ.
For example, in the \emph{crashless} setting,
by wrapping each method of a sequential implementation $S$ of some object inside a global lock,
one obtains an abstract library $\lib_{S}^\sharp$ for that object that corresponds to the conditions imposed
by standard linearizability~\cite{Bouajjani15}
(a library $\lib$ is linearizable \wrt $S$ iff
every crashless history induced by a trace of $\MGC{\lib}$
is also induced by some trace of $\MGC{\lib_S^\sharp}$).
Now, when crashes are involved, by wrapping each method of $S$
inside a global lock \emph{and a persistence block} followed by an explicit flush instruction
(like $\lib_\mpair^\sharp$ in \cref{sec:seqlock}),
one obtains an abstract library $\lib_{S\crash}^\sharp$ that corresponds to the conditions imposed
by strict linearizability of \citet{aguilera2003strict}
($\lib$ is strictly linearizable \wrt $S$ iff
$\lib \sqsubseteq_\MGCn \lib_{S\crash}^\sharp$).
Thus, our results can be used to derive contextual refinement (using
$\lib_{S\crash}^\sharp$ as a specification) from strictly linearizable objects.
We note that while the original definition of   strict linearizability was for a model with per-processor failure,
what we consider here is its application for full system crashes.

Durable linearizability~\cite{persistent-lin} weakens strict linearizability by
allowing methods that were active during a crash to take their effect at any
later point in the execution (or never), instead of requiring that the effect of
such methods is visible immediately after the crash (or never).
This weakening aims to allow
lazy recovery for large structures, where either the recovery procedure is
executed in parallel to other methods after a crash, or the methods themselves
participate in recovering the data structure when they are further executed.
 This notion can be also expressible as an abstract implementation in our
 language.
For this matter, every update method in the specification would: first record its task in a
work-set; remove the task from the work-set; flush the updated work-set; and
perform the task like in $\lib_{S\crash}^\sharp$ described above. In turn, every
query method may choose to complete any task it finds in the work-set, since
the method performing such a task has crashed
during its invocation. For persistent pairs (see
\cref{sec:seqlock}), this is illustrated by the specification below. The non-volatile
variable $\jobs$ is the multiset holding the work-set with atomic add and remove
operations, and $\xlockrw$ is an abstract
multiple-readers-single-writer lock
used to resolve races on the work-set.

\vspace{\mycodeoffset}
{\smaller\noindent
\resizebox{\mycodefactor\width}{!}{
\begin{minipage}{0.4\textwidth}
\[\inarr{
\underline{\mwrite{:}} \\
\codelabels{LOCK1} \text{acquire }\xlockrw \text{ as a reader} \sep \\
\text{add }\tup{\creg{1}_1,\creg{1}_2}\text{ to } \jobs \sep \\
\text{remove }\tup{\creg{1}_1,\creg{1}_2}\text{ from } \jobs \sep \\
\flInst{\jobs} \sep\\
\codelabels{UNLOCK1} \text{release }\xlockrw\sep \\
\ldots \text{continue as in $\mwrite$ of $\lib_\mpair^\sharp$ (\cref{sec:seqlock})} \ldots
\\
\underline{\recoverMethod:} \\
\returnInst{} \sep
}\]
\end{minipage}}
\hfill
\resizebox{\mycodefactor\width}{!}{
\begin{minipage}{0.6\textwidth}
$$\inarr{
\underline{\mread{:}} \\
\GotoInst{\set{\codelabel{LOCK1},\codelabel{BEGIN}}} \sep \\
\codelabels{LOCK1} \text{acquire }\xlockrw \text{ as a writer} \sep \\
\text{pick\ some\ }\tup{\creg{1}_1,\creg{1}_2}\in\jobs \sep \\
\text{remove }\tup{\creg{1}_1,\creg{1}_2}\text{ from } \jobs \sep \\
\flInst{\jobs} \sep\\
\ldots \text{write $\tup{\creg{1}_1,\creg{1}_2}$ to $\tup{\cloc{1},\cloc{2}}$ as in $\mwrite$ of $\lib_\mpair^\sharp$ (\cref{sec:seqlock})} \ldots \\
\codelabels{UNLOCK1} \text{release }\xlockrw\sep \\
\codelabels{BEGIN} \ldots \text{continue as in $\mread$ of $\lib_\mpair^\sharp$ (\cref{sec:seqlock})} \ldots
}$$
\end{minipage}}
}
\smallskip

A ``buffered'' version of strict linearizability, which only requires the
existence of a prefix of the completed invocations to be observed after a crash,
is also naturally derived by considering $\lib_{S\crash\mathrm{b}}^\sharp$ which
is obtained from a sequential implementation $S$ by wrapping each method of $S$
inside a global lock and a persistence block (\emph{without} an explicit flush
instruction) and ensuring that there is a single non-volatile variable that is
written to by all library methods (introducing such a variable if needed).\footnote{Since
the corresponding ``buffered'' correctness notion is not compositional, while
the refinement-based notion is (see \cref{cor:compositionality}), one cannot
expect to have a per-object translation of a sequential implementation $S$ into
a concurrent and persistent implementation $\lib_{S\crash\mathrm{b}}^\sharp$.
Indeed, the addition of a single non-volatile variable that is written to by all
library methods is a not a per-object translation (\ie for two sequential
library implementations implementing disjoint sets of methods and operating on
disjoint variables, $S_1$ and $S_2$, we will \emph{not} have $\lib_{S_1 \cup
S_2\crash\mathrm{b}}^\sharp= \lib_{S_1\crash\mathrm{b}}^\sharp \cup
\lib_{S_2\crash\mathrm{b}}^\sharp$).}

An alternative operational characterization of durable linearizability using Input/Output automata was
developed in \cite{derrick2021verifying} and used to formally establish this property
for the persistent queue of \cite{Friedman-persistent-queue}
by providing a full-blown simulation proof using the KIV proof assistant.\footnote{
See \url{https://kiv.isse.de/projects/Durable-Queue.html}.}
Nevertheless, this work does not relate the proved correctness criterion to contextual refinement.

\cparagraph{Persistency models}
The underlying model we assume is \PSC by~\cite{Khyzha2021},
a strengthening of \PTSO~\cite{pxes-popl} that formalizes the Intel-x86 persistency.
The paper \cite{Khyzha2021} provided compiler mappings that ensure \PSC semantics on machines guaranteeing \PTSO semantics.
We extended the general semantic framework with libraries,
and extended \PSC with local store fences and persistence blocks.

\cparagraph{Future work}
Future work includes extending our proof method and results for weaker persistency models,
such as persistent x86-TSO~\cite{pxes-popl} and ARM~\cite{Kyeongmin2021};
handling random access shared memory with allocations and deallocations
(instead of the simplified shared variables model we employ);
and lifting the strict condition that libraries and clients live in disjoint address spaces
by allowing them to transfer ownership of certain locations (as
was done in \cite{Gostman_13_ownership} for standard volatile memory).

In addition, extending and adapting methods for refinement verification under volatile memory
is needed in order to provide library developers with means to validate our library-correctness conditions.
Such methods may include automated checking by approximation~\cite{Bouajjani15},
layered interactive verification in the style of~\cite{Hawblitzel15,Lorch20},
and formal logics as the one in~\cite{Liang14}.
Similarly, developing formal methods and tools that allow using library specifications for client reasoning
is left for future work, including decidable reachability analysis~\cite{AHH+2013},
program logics~\cite{Raad20_pog}, and principled testing~\cite{Gorjiara21}.
Finally, it is interesting to see how logical atomicity notions established by program logics, such as~\cite{daRochaPinto2014TaDA,Svendsen13},
which has been extended to cover crashes in disk-based storage systems~\cite{Chajed21},
can be adapted for establishing our correctness condition and/or for client reasoning.

 \bibliographystyle{splncs04}

\bibliography{biblio}

\iflong

\clearpage
\appendix

\section{Extension with RMWs}
\label{app:rmw}

In this section we present the extension of our definitions with
two atomic read-modify-write instructions (RMWs).

\begin{enumerate}

\item The programming language is extended with two additional instructions:

\[
	\begin{array}{@{} l l @{}}
\instr ::=  &
... \ALT \incInst{r}{\loc}{\exp}
\ALT \casInst{r}{\loc}{\exp}{\exp}
\end{array}
\]

Intuitively, a fetch-and-add $\incInst{\reg}{\loc}{\exp}$
loads the value from a variable $\loc$ into $\reg$ and increments the value in memory;
and a compare-and-swap $\casInst{\reg}{\loc}{\exp_1}{\exp_2}$ that
loads the value from $\loc$ into $\reg$ and overwrites it with the second
expression in case if the loaded value coincides with the first expression.

\item Action labels (\cref{def:label}) are extended to include additional labels:

\smallskip
\begin{tabular}{rlrl}
	a \emph{read-modify-write} & $\ulab{\loc}{\val_\lR}{\val_\lW}$ &
	\qquad a \emph{failed CAS} & $\rexlab{\loc}{\val_\lR}$
\end{tabular}

\smallskip
where $\loc \in \Loc$ and $\val_\lR,\val_\lW \in \Val$.

Fetch-and-add and successful compare-and-swap instructions are captured by an RMW
label ($\ulab{\loc}{\val_\lR}{\val_\lW}$), whereas compare-and-swaps
that did not read the expected value correspond to special read label
$\rexlab{\loc}{\val_\lR}$, which allows us to distinguish such transitions
from plain reads and provide them with stronger semantics.

\item The transitions of the LTS induced by an instruction sequence $\iseq$ are (\cref{def:lts_iseq})
are extended with:

\begin{mathpar}
\inferrule*{
	\iseq(\pc)=\incInst{\reg}{\loc}{\exp}
	\\\\ \lab = \ulab{\loc}{\val}{\val+\phi(\exp)}
	\\\\ \phi'=\phi[\reg \mapsto \val]
}{
	\tup{\pc,\phi}%
	\asteplab{\lab}{\iseq}
	\tup{\pc+1, \phi'}%
}
\and
\inferrule*{
	\iseq(\pc)=\casInst{\reg}{\loc}{\exp_\lR}{\exp_\lW}
	\\\\ \lab = \ulab{\loc}{\phi(\exp_\lR)}{\phi(\exp_\lW)}
	\\\\
	\\\\ \phi'=\phi[\reg \mapsto \phi(\exp_\lR)]
}{
	\tup{\pc,\phi}%
	\asteplab{\lab}{\iseq}
	\tup{\pc+1,\phi'}%
}
\and
\inferrule*{
	\iseq(\pc)=\casInst{\reg}{\loc}{\exp_\lR}{\exp_\lW}
	\\\\ \lab = \rexlab{\loc}{\val}
	\\\\ \val\neq \phi(\exp_\lR)
	\\\\ \phi'=\phi[\reg \mapsto \val]
}{
	\tup{\pc,\phi}%
	\asteplab{\lab}{\iseq}
	\tup{\pc+1, \phi'}%
}
\end{mathpar}

The
fetch-and-add $\incInst{\reg}{\loc}{\exp}$ loads an arbitrary value into the
destination register $\reg$, announcing in the label
$\ulab{\loc}{\val}{\val+\phi(\exp)}$ that the value loaded is $\val$ and the
value stored is $\val + \phi(\exp)$.
The two transitions for the
compare-and-swap $\casInst{\reg}{\loc}{\exp_\lR}{\exp_\lW}$ both load an
arbitrary value $\val$ into the destination register $\reg$, except the
successful compare-and-swap transition announces in the label
$\ulab{\loc}{\phi(\exp_\lR)}{\phi(\exp_\lW)}$ that $\phi(\exp_\lR)$ is the
value loaded and $\phi(\exp_\lW)$ is the value stored, while the failed
compare-and-swap announces in the label $\rexlab{\loc}{\val}$ that the value
loaded that is different from $\phi(\exp_\lR)$.

\item The transitions of \PSC are extended with:

  \begin{mathpar}
  \inferrule[v-rmw]{
  	\lab = \ulab{\vloc}{\val_\lR}{\val_\lW}
  \\\\
  	\memstate
  	\asteptidlab{\tid}{\rlab{\vloc}{\val_\lR}}{\PSC}
  	\asteptidlab{\tid}{\wlab{\vloc}{\val_\lW}}{\PSC}
  	\memstate'
  }{
  	\memstate
  	\asteptidlab{\tid}{\lab}{\PSC}
  	\memstate'
  }
  \and
  \inferrule[v-rmw-fail]{
  	\lab = \rexlab{\vloc}{\val}
  \\\\
  	\memstate
  	\asteptidlab{\tid}{\rlab{\vloc}{\val}}{\PSC}
  	\memstate'
  }{
  	\memstate
  	\asteptidlab{\tid}{\lab}{\PSC}
  	\memstate'
  }
\\
  \inferrule[nv-rmw]{
  	\lab = \ulab{\nvloc}{\val_\lR}{\val_\lW}
  \\\\
  	\memstate
  	\asteptidlab{\tid}{\sflab}{\PSC}
  	\asteptidlab{\tid}{\rlab{\nvloc}{\val_\lR}}{\PSC}
  	\asteptidlab{\tid}{\wlab{\nvloc}{\val_\lW}}{\PSC}
  	\memstate'
  }{
  	\memstate
  	\asteptidlab{\tid}{\lab}{\PSC}
  	\memstate'
  }
  \and
  \inferrule[nv-rmw-fail]{
  	\lab = \rexlab{\nvloc}{\val}
  \\\\
  	\memstate
  	\asteptidlab{\tid}{\sflab}{\PSC}
  	\asteptidlab{\tid}{\rlab{\nvloc}{\val}}{\PSC}
  	\memstate'
  }{
  	\memstate
  	\asteptidlab{\tid}{\lab}{\PSC}
  	\memstate'
  }
  \end{mathpar}

We note that RMWs to non-volatile variables
(including those arising from failed compare-and-swap operations)
include an implicit sfence transition.

\item In \cref{lem:memory_disjointness}, we have to consider
RMWs to non-volatile variables as store fence transitions, since they
induce a store fence.

Thus, we define:

$$\SFLab \defeq \set{ \sflab} \cup
 \set{ \lab\in\Lab \mid \lTYP(\lab) \in \set{\lU,\lRex} \land \lLOC(\lab)\in\NVLoc}$$

and adapt the first two items of the lemma as follows:

\begin{enumerate}
\item
 For every $\tid\in\Tid$ and $\lab\in\Lab\setminus\SFLab$
with $\lLOCSET(\lab)\suq \locset$,\\
$\memstate_1 \asteptidlab{\tid}{\lab}{\PSC} \memstate_2
\Longleftrightarrow
(\memstate_1\rst{\locset} \asteptidlab{\tid}{\lab}{\PSC} \memstate_2\rst{\locset} \land
\memstate_1\rst{\Loc\setminus\locset}=\memstate_2\rst{\Loc\setminus\locset})
$
\item
 For every $\tid\in\Tid$ and $\lab\in\SFLab$
which is either $\sflab$ or has $\lLOC(\lab) \in \locset$,\\
$
\memstate_1 \asteptidlab{\tid}{\lab}{\PSC} \memstate_2
\Longleftrightarrow
(\memstate_1\rst{\locset} \asteptidlab{\tid}{\lab}{\PSC} \memstate_2\rst{\locset} \land
\memstate_1\rst{\Loc\setminus\locset} \asteptidlab{\tid}{\sflab}{\PSC} \memstate_2\rst{\Loc\setminus\locset})
$
\end{enumerate}

\item In the definition of the history induced by a trace of $\cs{\prog}{\PSC}$ (\cref{def:history_of_trace}),
we need to include an $\lSF$-label $\tidlab{\tid}{\sflab}$ for every store-fence inducing
label ($\tidlab{\tid}{\lab}$ with $\lab \in\SFLab$), rather than only for the $\lSF$ label.

\end{enumerate}

\clearpage %

\section{Proofs}
\label{app:proofs}

The following propositions are used in the following proofs.
The all easily follow from our definitions.

\begin{proposition}
\label{prop:h_prefix}
If $\history\in \rsthistory{\methodset}{\prog}$, then $\history'\in \rsthistory{\methodset}{\prog}$
for every prefix $\history'$ of $\history$.
\end{proposition}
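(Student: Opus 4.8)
The plan is to reduce the claim about prefixes of histories to the prefix-closure of traces, using the fact that the history operator is a filtering (subsequence) operation. First I would unfold the definitions: by \cref{def:history_of_trace}, $\rsthistory{\methodset}{\prog} = \rsthistory{\methodset}{\traces{\cs{\prog}{\PSC}}}$, so the hypothesis $\history \in \rsthistory{\methodset}{\prog}$ supplies a trace $\tr \in \traces{\cs{\prog}{\PSC}}$ with $\rsthistory{\methodset}{\tr} = \history$. The goal is then, for each prefix $\history'$ of $\history$, to exhibit a trace $\tr'$ of $\cs{\prog}{\PSC}$ with $\rsthistory{\methodset}{\tr'} = \history'$.

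The argument rests on two elementary observations. First, for any LTS $\A$ the set $\traces{\A}$ is prefix-closed: if $\A.\linit \asteplab{\tr}{\A} \ltsstate$ and $\tr = \tr' \cdot \tr''$, then since $\asteplab{\tr}{\A}$ is the relational composition $\asteplab{\tr'}{\A} \seq \asteplab{\tr''}{\A}$, there is an intermediate state $\ltsstate'$ with $\A.\linit \asteplab{\tr'}{\A} \ltsstate'$, so $\tr'$ is itself a trace. Second, the operator $\rsthistory{\methodset}{\cdot}$ merely keeps the labels of $\tr$ lying in $\HTLab_\methodset$, in order; hence it commutes with prefix truncation of the trace, in the sense that filtering a prefix of $\tr$ returns the corresponding prefix of $\rsthistory{\methodset}{\tr}$.

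Next I would locate the truncation point. The entries of $\history = \rsthistory{\methodset}{\tr}$ occupy positions $p_1 < \cdots < p_{\size{\history}}$ in $\tr$. Given a prefix $\history'$ of $\history$ of length $k$, set $j \defeq p_k$ (and $j \defeq 0$ when $k = 0$), and let $\tr'$ be the length-$j$ prefix of $\tr$. By the first observation, $\tr'$ is a trace of $\cs{\prog}{\PSC}$; by the second, filtering $\tr'$ over $\HTLab_\methodset$ returns exactly the entries at positions $p_1, \ldots, p_k$, which is precisely $\history'$. Therefore $\history' \in \rsthistory{\methodset}{\traces{\cs{\prog}{\PSC}}} = \rsthistory{\methodset}{\prog}$, as required.

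The proof is entirely routine and requires no appeal to the \PSC-specific structure; the only point that demands care is the index bookkeeping in choosing $j$, so that the filtered prefix is exactly $\history'$ rather than a strictly longer history that happens to agree with $\history'$ on its first $k$ symbols.
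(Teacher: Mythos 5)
Your proof is correct, and it matches the paper's intent: the paper gives no explicit proof of \cref{prop:h_prefix}, stating only that it (along with the neighboring propositions) ``easily follows from our definitions,'' and your argument---prefix-closure of $\traces{\cs{\prog}{\PSC}}$ combined with truncating the trace at the position $p_k$ of the last retained history label so that filtering commutes with taking prefixes---is exactly the routine unfolding the authors had in mind.
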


\begin{proposition}
\label{prop:h_crash}
If $\history\in \rsthistory{\methodset}{\prog}$, then $\history \cdot \crash \in \rsthistory{\methodset}{\prog}$.
\end{proposition}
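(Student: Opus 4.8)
The plan is to unfold the definition of $\rsthistory{\methodset}{\prog}$ and exploit the fact that a crash transition is enabled from every state of $\cs{\prog}{\PSC}$. Since $\history\in \rsthistory{\methodset}{\prog} = \rsthistory{\methodset}{\traces{\cs{\prog}{\PSC}}}$, there is a trace $\tr\in\traces{\cs{\prog}{\PSC}}$ with $\rsthistory{\methodset}{\tr}=\history$. I would fix a state $\tup{\progstate,\memstate}$ witnessing this trace, so that $\tup{\progstate_\Init,\memstate_\Init}\asteplab{\tr}{\cs{\prog}{\PSC}}\tup{\progstate,\memstate}$.

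The key observation is that the \rulename{crash} rules of both the program LTS and the \PSC memory system carry no premises, so a crash step is always available: $\progstate\asteplab{\crash}{\prog}\progstate_\Init$ and $\memstate\asteplab{\crash}{\PSC}\memstate_\Init[\lmem\mapsto\memstate.\lmem]$. Since $\crash\in(\Tid\times\Lab)\cup\set{\crash}$, the \rulename{synchronized} rule of $\cs{\prog}{\PSC}$ fires these two together, yielding $\tup{\progstate,\memstate}\asteplab{\crash}{\cs{\prog}{\PSC}}\tup{\progstate_\Init,\memstate_\Init[\lmem\mapsto\memstate.\lmem]}$. Hence $\tr\cdot\crash$ is again a trace of $\cs{\prog}{\PSC}$.

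It then remains to compute the induced history. Because the crash label belongs to $\HTLab_\methodset$ (recall $\HTLab_\methodset=(\Tid\times\Lab_\methodset)\cup\set{\crash}$), appending $\crash$ to $\tr$ extends the subsequence over $\HTLab_\methodset$ by exactly that one label, so $\rsthistory{\methodset}{\tr\cdot\crash}=\rsthistory{\methodset}{\tr}\cdot\crash=\history\cdot\crash$. Thus $\history\cdot\crash\in\rsthistory{\methodset}{\prog}$, as required. There is no real obstacle here: the claim is immediate once one notes that the crash transition is unconditionally enabled in the synchronized system and that $\crash$ is itself a history label. (The companion \cref{prop:h_prefix} is handled analogously, by truncating the witnessing trace at the point where the induced history reaches the desired prefix.)
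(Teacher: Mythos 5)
Your proof is correct and matches the paper's intent: the paper offers no explicit proof for this proposition, stating only that it ``easily follows from our definitions,'' and your argument --- unconditional enabledness of the \rulename{crash} rules in both the program LTS and \PSC, synchronization of the two crash steps via the \rulename{synchronized} rule, and the observation that $\crash\in\HTLab_\methodset$ so the induced history extends by exactly that label --- is precisely that definitional unfolding. Nothing is missing.
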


\begin{proposition}
\label{prop:h_sf}
If $\history\in \rsthistory{\methodset}{\prog}$, then $\history \cdot \tidlab{\tid}{\sflab} \in \rsthistory{\methodset}{\prog}$
for every $\tid\in\Tid$.
\end{proposition}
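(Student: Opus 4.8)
The plan is to realize the extended history by appending a single non-deterministic sfence step to a suitably drained extension of a trace that already yields $\history$. The program side is free: by the \rulename{non-det-sfence} rule, lifted to threads by \rulename{normal}, thread $\tid$ can emit an $\sflab$-transition from any program state without altering it. So the entire difficulty sits on the memory side, where the \PSC \rulename{sfence} rule is guarded by the precondition $\forall \nvloc.\ \fotlabp{\tid} \nin \memstate.\lPbuff(\nvloc)$: no flush-optimal marker of $\tid$ may linger in any persistence buffer. (This contrasts with \cref{prop:h_crash}, where the \rulename{crash} step is unconditionally enabled, making that argument a one-liner.)

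Concretely, I would first fix a trace $\tr$ of $\cs{\prog}{\PSC}$ with $\rsthistory{\methodset}{\tr}=\history$, ending in a state $\tup{\progstate,\memstate}$. Since $\persistlab\nin\HTLab_\methodset$, appending memory-internal \rulename{persist} steps to $\tr$ does not change the induced history, while the final $\tidlab{\tid}{\sflab}$ step contributes exactly one history label. Thus it suffices to drive the memory by persist steps alone into a state where $\tid$ holds no pending markers, and then take the synchronized sfence: the resulting trace $\tr'$ then satisfies $\rsthistory{\methodset}{\tr'}=\history\cdot\tidlab{\tid}{\sflab}$. When no persistence block is open in $\memstate$, the draining is immediate — a single generalized \rulename{persist} step may flush every buffer in its entirety (its block side-condition being vacuous once all blocks are inactive and each block's entries leave together), which in particular removes every $\fotlabp{\tid}$.

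The step I expect to be the main obstacle is precisely this draining in the presence of \emph{open} persistence blocks: the \rulename{persist} rule refuses to propagate any entry of a block that is still active, so a marker $\fotlabp{\tid}$ lying behind an open-block write in its buffer can be neither persisted nor bypassed (by the per-variable FIFO discipline), and the sfence guard cannot be discharged by persisting alone. I would handle this by first extending $\tr$ with the (history-invisible) program steps that execute the outstanding $\myendpbInst{\cdot}$ instructions, closing every block whose entries obstruct a marker of $\tid$, and only then draining the buffers. This is the point at which I would isolate and state explicitly the well-bracketing discipline on block usage that the proposition tacitly relies on, since without it — e.g.\ if a thread opens a block protecting $\nvloc$, writes $\nvloc$, issues $\foInst{\nvloc}$, and never reaches an end-block instruction — the marker stays trapped and no trace realizes $\history\cdot\tidlab{\tid}{\sflab}$. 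With the obstructing blocks closed and the buffers drained, the \rulename{sfence} precondition holds and the concluding synchronized step goes through.
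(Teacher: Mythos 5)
Your opening paragraph is exactly the argument the paper has in mind: the paper offers no explicit proof, listing this among propositions that ``easily follow from our definitions,'' the implicit justification being that \rulename{non-det-sfence} makes the program side of the synchronized step unconditionally available, while history-invisible \rulename{persist} steps drain the buffers until the guard $\forall \nvloc \ldotp \fotlabp{\tid} \nin \memstate.\lPbuff(\nvloc)$ of the memory-side \rulename{sfence} rule holds. But the obstruction you isolate is not a technicality that a tacit well-bracketing discipline dispels --- it is a counterexample to the proposition exactly as stated. Take $\Nthreads=1$, $\methodset=\set{\method}$ with $\prog(\method)$ consisting of the single instruction $\returnInst$, and let the main method of the unique thread $\tid$ be $\beginpbInst{\cnvloc{1}} \sep \writeInst{\cnvloc{1}}{1} \sep \foInst{\cnvloc{1}} \sep \callInst{\method}$. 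The history $\history=\tidlab{\tid}{\calllab{\method}{\phi_\Init}}$ is induced by the obvious trace; but in every trace inducing $\history$ the marker $\fotlabp{\tid}$ sits behind the block-tagged write entry in the buffer of $\cnvloc{1}$, no $\myendpbInst{\cdot}$ instruction occurs anywhere in the program, and \rulename{persist} can never propagate a prefix containing an entry of an active block. After the call, every available step either emits a history label (the return, a crash) or leaves the marker trapped, so no trace induces $\history \cdot \tidlab{\tid}{\sflab}$. The proposition is false as stated, the paper's ``easily follows'' is itself unsound on this point, and consequently no proof of the literal statement can succeed. You deserve real credit for spotting this.

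The genuine gap in your own write-up is the repair, which fails even under the well-bracketing assumption you propose to add. First, the block obstructing a marker $\fotlabp{\tid}$ need not belong to $\tid$: another thread $\tida$ can open a block on $\nvloc$ and write to $\nvloc$, after which $\tid$'s $\foInst{\nvloc}$ places $\tid$'s marker behind $\tida$'s block entry; only $\tida$ can close that block, and $\tida$ may sit at a program point from which every path to its $\myendpbInst{\cdot}$ passes through calls or returns. Second, even in the single-threaded, perfectly bracketed variant of the example above in which $\myendpbInst{\cnvloc{1}}$ is placed after $\callInst{\method}$, reaching the closing instruction requires first executing $\returnInst$, which contributes a return label and changes the history --- so ``executing the outstanding $\myendpbInst{\cdot}$ instructions'' is not history-invisible, and well-bracketing of the program text does not rescue the statement. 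A hypothesis that actually suffices (and that the paper's libraries, specifications, and MGCs all satisfy) is of a different shape: for instance, require that the variables mentioned in block instructions are disjoint from those targeted by $\foInst{\cdot}$. Then every marker lives in a buffer all of whose write entries are $\bot$-tagged, any prefix of such a buffer is persistable, and the drain-then-fence argument of your first paragraph goes through verbatim. As written, though, both your proof and the paper's claim need such a side condition added to the proposition.
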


\begin{proposition}
\label{prop:same_f}
Suppose that $\tup{\progstate, \memstate} \asteplab{\tr_1}{\cs{\prog}{\PSC}} \tup{\progstate_1, \memstate_1}$
and $\tup{\progstate, \memstate} \asteplab{\tr_2}{\cs{\prog'}{\PSC}} \tup{\progstate_2, \memstate_2}$.
If $\rsthistory{\methodset}{\tr_1}=\rsthistory{\methodset}{\tr_2}$, then
for every $\tid$ we have $\progstate_1(\tid).\lmethod \in \methodset \Longleftrightarrow  \progstate_2(\tid).\lmethod \in \methodset$.
\end{proposition}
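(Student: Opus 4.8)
The plan is to reduce the statement to a single purely structural observation: along any trace, whether a thread's active method lies in $\methodset$ is determined solely by the common starting state and the recorded history $\rsthistory{\methodset}{\tr}$. First I would note, by inspecting the transition rules, that the active-method component $\sprogstate.\lmethod$ of a thread is changed \emph{only} by the \rulename{call}, \rulename{return}, and \rulename{crash} transitions: the \rulename{normal} and \rulename{non-det-sfence} rules leave it intact, and the memory-internal $\persistlab$ steps do not affect the program state at all.

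Writing $P_\tid(\progstate) \defiff \progstate(\tid).\lmethod \in \methodset$, I would next examine how $P_\tid$ is affected by each method-structural transition and compare this with which labels are retained in the history. A \rulename{call} of $\tid$ to $\method$ moves its active method from $\main$ (so $P_\tid$ was false) to $\method$; when $\method\in\methodset$ this call is recorded and sets $P_\tid$ to true, whereas when $\method\nin\methodset$ it is \emph{not} recorded and $P_\tid$ stays false. A \rulename{return} of $\tid$ always leaves the active method equal to $\main$, hence $P_\tid$ false afterwards; it is recorded exactly when the returning method is in $\methodset$, in which case $P_\tid$ was true beforehand. A \rulename{crash} resets every thread to $\main$, so $P_\tid$ becomes false for all threads, and it is recorded. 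Finally, $\sflab$-labels are recorded but touch no active method. Consequently the value of $P_\tid$ at the end of $\tr$ equals the following read-off of $\rsthistory{\methodset}{\tr}$: it is true if the last label of $\rsthistory{\methodset}{\tr}$ that is either a crash or a call/return of $\tid$ is a call, false if it is a return or a crash, and $P_\tid(\progstate)$ if no such label occurs. I would make this precise by a straightforward induction on $|\tr|$, the only slightly delicate case being calls/returns of $\tid$ to/from methods \emph{outside} $\methodset$: these change the active method but, since a call issues only from $\main$ and a return from such a method leaves $P_\tid$ already false, they keep $P_\tid$ false while being invisible in the history, exactly matching the unchanged read-off.

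With this characterization in hand the proposition is immediate. Applying it to $\tr_1$ (a trace of $\cs{\prog}{\PSC}$) and to $\tr_2$ (a trace of $\cs{\prog'}{\PSC}$), both starting from the common state $\tup{\progstate,\memstate}$, the read-off depends only on $P_\tid(\progstate)$ and on $\rsthistory{\methodset}{\tr_i}$. Since the two traces share the initial value $P_\tid(\progstate)$ and satisfy $\rsthistory{\methodset}{\tr_1}=\rsthistory{\methodset}{\tr_2}$, the two final values of $P_\tid$ coincide, which is exactly $\progstate_1(\tid).\lmethod \in \methodset \Longleftrightarrow \progstate_2(\tid).\lmethod \in \methodset$. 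The main (and only) obstacle is the bookkeeping in the induction verifying that precisely the history-recorded transitions can flip $P_\tid$; once the role of calls originating only from $\main$ is noted, no genuine difficulty remains, and the distinctness of the programs $\prog,\prog'$ plays no role beyond sharing the start state.
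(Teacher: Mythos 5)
Your proof is correct. The paper gives no explicit proof of this proposition---it is one of several facts stated as ``easily follow from our definitions''---and your induction over the trace, resting on the two structural observations that \rulename{call} transitions are issued only from $\main$ (so unrecorded calls/returns can never flip the predicate) and that a \rulename{return} label always names the current active method (so leaving a method of $\methodset$ is always recorded), is exactly the argument the authors leave implicit.
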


The following properties all assume a library $\lib$ that is safe for a program $\prog$.

\begin{proposition}
\label{prop:P_or_L1}
If $\progstate \asteptidlab{\tid}{\lab_\epsl}{\client{\prog}{\lib}} \progstate'$
and $\progstate(\tid).\lmethod \nin \dom{\lib}$,
then $\progstate \asteptidlab{\tid}{\lab_\epsl}{\prog} \progstate'$.
\end{proposition}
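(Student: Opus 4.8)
The plan is to peel off the top-level concurrent-program transition, reduce the claim to the sequential-program level, and then proceed by a case analysis on the rule that produced that sequential transition. First I would observe that since the transition label has the form $\tidlab{\tid}{\lab_\epsl}$ rather than $\crash$, the only rule applicable at the concurrent-program level is \rulename{normal}; hence there is a sequential-program transition $\progstate(\tid) \asteplab{\lab_\epsl}{\client{\prog}{\lib}(\tid)} \sprogstate'$ with $\progstate' = \progstate[\tid \mapsto \sprogstate']$. It then suffices to establish $\progstate(\tid) \asteplab{\lab_\epsl}{\prog(\tid)} \sprogstate'$, because re-applying \rulename{normal} at the concurrent level yields exactly $\progstate \asteptidlab{\tid}{\lab_\epsl}{\prog} \progstate'$.

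The key fact I would record once and reuse is the \emph{locality} of the library substitution. By \cref{def:safe}, $\client{\prog}{\lib}$ is obtained from $\prog$ by replacing $\prog(\tid)(\method)$ with $\lib(\method)$ only for $\method\in\dom{\lib}\suq\MethodNames$. Consequently $\client{\prog}{\lib}(\tid)(\method) = \prog(\tid)(\method)$ for every $\method \nin \dom{\lib}$, and in particular for $\method=\main$ (since $\main\nin\MethodNames$).

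Then I would case-split on the rule deriving the sequential transition. For \rulename{normal}, the premise consults the instruction sequence $\client{\prog}{\lib}(\tid)(\method)$ for $\method = \progstate(\tid).\lmethod$; by hypothesis $\method\nin\dom{\lib}$, so this instruction sequence equals $\prog(\tid)(\method)$ and the very same instruction-sequence transition witnesses the step under $\prog(\tid)$. For \rulename{return}, the pre-state active method is again $\progstate(\tid).\lmethod\nin\dom{\lib}$, so the looked-up $\returnInst$ at $\client{\prog}{\lib}(\tid)(\method)(\pc)$ coincides with the one in $\prog(\tid)(\method)(\pc)$, and the transition carries over unchanged. For \rulename{call}, the pre-state active method is $\main\nin\dom{\lib}$, so $\client{\prog}{\lib}(\tid)(\main)(\pc) = \prog(\tid)(\main)(\pc)$ is the same $\callInst{\method}$ instruction in both programs. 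Finally, \rulename{non-det-sfence} produces $\lab_\epsl=\sflab$ and is entirely program-independent, hence available verbatim under $\prog(\tid)$.

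I do not expect a genuine obstacle, as this is a routine locality argument, but the step demanding the most care is the \rulename{call} case: one must resist worrying about the callee's body (which may well lie in $\dom{\lib}$) and instead notice that the hypothesis $\progstate(\tid).\lmethod\nin\dom{\lib}$ constrains only the \emph{active} caller method, which is $\main$, and that the \rulename{call} rule inspects only the unchanged caller code while switching the active method and storing the return address—its pre- and post-states are independent of the callee's implementation. One should also confirm that under the hypothesis \rulename{return} never fires from an active method inside $\dom{\lib}$, which is immediate since the hypothesis excludes exactly that possibility.
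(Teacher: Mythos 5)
Your proof is correct and is exactly the routine argument the paper has in mind: the paper gives no explicit proof of \cref{prop:P_or_L1}, stating only that it (like the neighboring propositions) ``easily follows from our definitions.'' Your reduction to the sequential level via the concurrent \rulename{normal} rule, the locality observation that $\client{\prog}{\lib}(\tid)(\method)=\prog(\tid)(\method)$ for every $\method\nin\dom{\lib}$ (in particular $\main$), and the exhaustive case analysis over \rulename{normal}, \rulename{call}, \rulename{return}, and \rulename{non-det-sfence} --- including the correct observation that \rulename{call} never inspects the callee's body, so it is immaterial whether the callee lies in $\dom{\lib}$ --- supply precisely the details the paper elides.
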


\begin{proposition}
\label{prop:prog_lib_separates}
For every state $\tup{\progstate,\memstate}$
reachable in $\cs{\client{\prog}{\lib}}{\PSC}$,
we have that both $\usedlocs{\lib} \cap\NVLoc$
and $\usedclientlocs{\dom{\lib}}{\prog} \cap\NVLoc$
separate $\memstate$.
\end{proposition}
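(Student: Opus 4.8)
The plan is to proceed by induction on the length of the trace $\tr$ witnessing $\tup{\progstate_\Init,\memstate_\Init} \asteplab{\tr}{\cs{\client{\prog}{\lib}}{\PSC}} \tup{\progstate,\memstate}$, proving the claim simultaneously for both sets $X \defeq \usedlocs{\lib} \cap \NVLoc$ and $Y \defeq \usedclientlocs{\dom{\lib}}{\prog} \cap \NVLoc$. The key observation is that whether a set separates a state is a condition \emph{purely} on the component $\memstate.\lpbidSet$, and that $\lpbidSet$ is affected by only two kinds of transitions: it is extended by a single pair in the \rulename{beginPB} transition and reset to $\emptyset$ by \rulename{crash}, while every other transition (including \rulename{endPB} and \rulename{persist}) leaves $\lpbidSet$ untouched. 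Hence the argument reduces to controlling which pairs $\tup{\pbid,\nvlocseta}$ ever enter $\lpbidSet$.

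For the base case, $\memstate_\Init.\lpbidSet = \emptyset$, so both $X$ and $Y$ vacuously separate $\memstate_\Init$; likewise a \rulename{crash} transition resets $\lpbidSet$ to $\emptyset$, trivially preserving the property. For every transition other than \rulename{beginPB}, the component $\lpbidSet$ is unchanged, so the claim follows directly from the induction hypothesis. The only real work is the \rulename{beginPB} case, where the pair added is $\tup{\pbid,\nvlocset}$ with $\nvlocset \suq \NVLoc$ the argument of the executed $\beginpbInst{\nvlocset}$ instruction.

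The core step is to argue that this freshly added $\nvlocset$ lies entirely inside $X$ or entirely inside $\NVLoc \setminus X$, and symmetrically for $Y$. I would trace the synchronized transition back to the program: a \rulename{beginPB} memory step synchronizes (via the \rulename{synchronized} rule) with a program label $\tidlab{\tid}{\beginpblab{\nvlocset}}$, which can only be produced by some thread $\tid$ executing $\beginpbInst{\nvlocset}$ inside its currently active method $\method = \progstate(\tid).\lmethod$, so that the instruction sequence being run is $\client{\prog}{\lib}(\tid)(\method)$. If $\method \in \dom{\lib}$, this sequence is $\lib(\method)$, whence $\nvlocset \suq \usedlocs{\lib(\method)} \suq \usedlocs{\lib}$ and thus $\nvlocset \suq X$; if $\method \in \set{\main} \cup (\MethodNames \setminus \dom{\lib})$, the sequence is the client code from $\prog$, whence $\nvlocset \suq \usedclientlocs{\dom{\lib}}{\prog}$ and thus $\nvlocset \suq Y$. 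Safety of $\lib$ for $\prog$, namely $\usedlocs{\lib} \cap \usedclientlocs{\dom{\lib}}{\prog} = \emptyset$, gives $X \cap Y = \emptyset$; so in the first case $\nvlocset \suq X$ and $\nvlocset \suq \NVLoc \setminus Y$, while in the second $\nvlocset \suq Y$ and $\nvlocset \suq \NVLoc \setminus X$. Either way both $X$ and $Y$ separate the new pair, and combined with the induction hypothesis for the previously present pairs, both separate $\memstate$.

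The main (mild) obstacle is the bookkeeping in this last step: correctly matching the memory-level \rulename{beginPB} label to the underlying program instruction through the synchronization rules, and confirming that the active-method component $\progstate(\tid).\lmethod$ faithfully identifies whether the instruction originates from $\lib$ or from the client (using that in $\client{\prog}{\lib}$ the client methods and $\main$ are unchanged from $\prog$, and that $\usedlocs{\iseq}$ counts variables appearing inside block arguments). No \PSC-specific reasoning beyond reading off the transition definitions is needed.
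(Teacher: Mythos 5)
Your proposal is correct. The paper itself states this proposition without proof (it is listed among the propositions that ``easily follow from our definitions''), and your argument---induction on the trace, observing that $\lpbidSet$ is modified only by \rulename{beginPB} (extended by one pair whose location set comes from a single component's code) and by \rulename{crash} (reset to $\emptyset$), with safety of $\lib$ for $\prog$ giving the disjointness needed in the \rulename{beginPB} case---is precisely the routine argument the authors intend.
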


\begin{proposition}
\label{prop:local}
The following hold whenever $\progstate \asteptidlab{\tid}{\lab}{\client{\prog}{\lib}} \progstate'$:
\begin{itemize}
\item
If $\progstate(\tid).\lmethod \in \dom{\lib}$,
then $\lLOCSET(\lab) \subseteq \usedlocs{\lib}$.
\item
If $\progstate(\tid).\lmethod \nin \dom{\lib}$,
then $\lLOCSET(\lab) \subseteq \usedclientlocs{\dom{\lib}}{\prog}$.
\end{itemize}
\end{proposition}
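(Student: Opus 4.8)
The plan is to unfold the transition $\progstate \asteptidlab{\tid}{\lab}{\client{\prog}{\lib}} \progstate'$ through the two layers of the operational semantics and observe that the active-method component of the state exactly determines which instruction sequence emits the label $\lab$. Since $\lab$ is an action label (and not $\crash$), the only concurrent-program rule that applies is \rulename{normal}, so the transition is lifted from a sequential-program transition $\progstate(\tid) \asteplab{\lab}{\client{\prog}{\lib}(\tid)} \sprogstate'$. I would then case-split on which of the four sequential-program rules of \cref{def:lts_sprog} fires.

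For the \rulename{call}, \rulename{return}, and \rulename{non-det-sfence} rules, the emitted label is a call, return, or sfence label respectively, none of which names a shared variable; hence $\lLOCSET(\lab) = \emptyset$ and both implications hold trivially. The only rule left to treat is the sequential-program \rulename{normal} rule, for which the active method $\method = \progstate(\tid).\lmethod$ is preserved and the step is lifted from an instruction-sequence transition of $\client{\prog}{\lib}(\tid)(\method)$ at some program counter $\pc$.

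The core observation is then local to \cref{def:lts_iseq}: an instruction-sequence step carrying a non-silent label $\lab$ arises only from a write, a read, or an explicit persist / persistence-block instruction, and in each of these rules the variables in $\lLOCSET(\lab)$ are precisely those named by the instruction at $\pc$ (for the block and local-fence instructions this is the entire declared set $\nvlocset$, all of whose members are mentioned in the instruction and hence lie in $\usedlocs{\iseq}$). Therefore $\lLOCSET(\lab) \subseteq \usedlocs{\client{\prog}{\lib}(\tid)(\method)}$, and it remains only to identify this sequence from $\method$. If $\method \in \dom{\lib}$ then by \cref{def:safe} the sequence is $\lib(\method)$, so $\lLOCSET(\lab) \subseteq \usedlocs{\lib(\method)} \subseteq \usedlocs{\lib}$. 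If $\method \nin \dom{\lib}$ then $\method \in \set{\main} \cup (\MethodNames \setminus \dom{\lib})$ and $\client{\prog}{\lib}$ agrees with $\prog$ on $\method$, so the sequence is $\prog(\tid)(\main)$ or $\prog(\method)$; in either case its used locations form one of the sets unioned in the definition of $\usedclientlocs{\dom{\lib}}{\prog}$, yielding $\lLOCSET(\lab) \subseteq \usedclientlocs{\dom{\lib}}{\prog}$.

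This is essentially a definition chase, so I do not expect a genuinely hard step; the only points that require care are the bookkeeping tying the state's active-method field to the instruction sequence actually running, and the routine verification that every label-emitting instruction-sequence rule satisfies $\lLOCSET(\lab) \subseteq \usedlocs{\iseq}$ — in particular for the persistence-block and local-store-fence instructions, whose labels carry an entire variable set that is nevertheless fully named by the instruction.
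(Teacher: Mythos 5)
Your proof is correct and matches the paper's intent: the paper gives no explicit proof of this proposition, stating only that it (and the neighboring propositions) ``easily follow from our definitions,'' and your argument is precisely that definition chase spelled out. The layer-by-layer unfolding (concurrent program $\to$ sequential program $\to$ instruction sequence), the trivial handling of call/return/sfence labels, and the observation that block and local-fence labels carry exactly the variable set named in the instruction (hence contained in $\usedlocs{\iseq}$) are all exactly what the paper's remark presupposes.
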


The following propositions easily follow from the definitions in \cref{sec:psc}.

\begin{proposition}
\label{prop:separates_complement}
A set $\nvlocset \suq \NVLoc$ separates $\memstate$
iff $\NVLoc\setminus \nvlocset$ separates $\memstate$.
\end{proposition}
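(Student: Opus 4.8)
The plan is to unfold the definition of ``separates'' and observe that it is manifestly symmetric under replacing $\nvlocset$ by its complement. Recall that $\nvlocset$ separates $\memstate$ means that for every $\tup{\pbid,\nvlocseta}\in\memstate.\lpbidSet$, either $\nvlocseta\suq\nvlocset$ or $\nvlocseta\suq\NVLoc\setminus\nvlocset$. Since the whole claim is an ``iff'' between two such universally-quantified conditions, it suffices to show that the per-element predicate induced by $\nvlocset$ coincides with the one induced by $\NVLoc\setminus\nvlocset$.

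First I would fix an arbitrary $\tup{\pbid,\nvlocseta}\in\memstate.\lpbidSet$ and compare the two per-element conditions. Applying the definition with the set $\NVLoc\setminus\nvlocset$ in place of $\nvlocset$, the condition becomes ``$\nvlocseta\suq\NVLoc\setminus\nvlocset$ or $\nvlocseta\suq\NVLoc\setminus(\NVLoc\setminus\nvlocset)$''. The only computation needed is the double-complement identity $\NVLoc\setminus(\NVLoc\setminus\nvlocset)=\nvlocset$, which holds precisely because $\nvlocset\suq\NVLoc$. Hence the condition coming from $\NVLoc\setminus\nvlocset$ reads ``$\nvlocseta\suq\NVLoc\setminus\nvlocset$ or $\nvlocseta\suq\nvlocset$'', which is exactly the disjunction defining separation by $\nvlocset$, with its two disjuncts swapped.

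Since disjunction is commutative, these two per-element predicates are logically equivalent, and as the equivalence holds for every $\tup{\pbid,\nvlocseta}\in\memstate.\lpbidSet$, the two universally quantified statements ``$\nvlocset$ separates $\memstate$'' and ``$\NVLoc\setminus\nvlocset$ separates $\memstate$'' coincide, yielding both directions of the ``iff'' simultaneously. I do not expect any real obstacle here: the statement is a purely set-theoretic symmetry of the definition, and the single point worth flagging is that the double-complement simplification relies on the standing assumption $\nvlocset\suq\NVLoc$ that is built into the notion of a set of non-volatile variables.
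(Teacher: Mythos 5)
Your proof is correct and matches the paper's treatment: the paper gives no explicit argument for this proposition, stating only that it ``easily follows from the definitions'' of \cref{sec:psc}, and your unfolding of the definition of \emph{separates}, combined with the double-complement identity $\NVLoc\setminus(\NVLoc\setminus\nvlocset)=\nvlocset$ and commutativity of disjunction, is precisely that routine verification.
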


\begin{proposition}
\label{prop:seperates_preserved}
If $\nvlocset\suq\NVLoc$ separates $\memstate_1$
and $\memstate_1 \asteplab{\alpha}{\PSC} \memstate_2$
with $\lLOCSET(\alpha)\suq \nvlocset$,
then $\nvlocset$ separates $\memstate_2$.
\end{proposition}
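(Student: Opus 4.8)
The plan is to observe that the predicate ``$\nvlocset$ separates $\memstate$'' depends solely on the block-identifiers component $\memstate.\lpbidSet$: by \cref{def:restriction_psc}'s preceding definition, separation holds iff every $\tup{\pbid,\nvlocseta} \in \memstate.\lpbidSet$ satisfies $\nvlocseta \suq \nvlocset$ or $\nvlocseta \suq \NVLoc\setminus\nvlocset$. Hence it suffices to track how a single \PSC transition modifies $\lpbidSet$, and I would proceed by a case analysis over the rule justifying $\memstate_1 \asteplab{\alpha}{\PSC} \memstate_2$, comparing $\memstate_2.\lpbidSet$ with $\memstate_1.\lpbidSet$ in each case and invoking the hypothesis that $\nvlocset$ separates $\memstate_1$.

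For the bulk of the rules in \cref{fig:PSC} and \cref{sec:psc_ext}---namely \rulename{v-write}, \rulename{nv-write}, \rulename{read}, \rulename{flush}, \rulename{flush-opt}, \rulename{sfence}, \rulename{local sfence}, \rulename{endPB}, and \rulename{persist}---inspection shows that $\memstate_2.\lpbidSet = \memstate_1.\lpbidSet$. Thus every pair in $\memstate_2.\lpbidSet$ already lies in $\memstate_1.\lpbidSet$ and satisfies the separation disjunction by the assumption on $\memstate_1$, so $\nvlocset$ separates $\memstate_2$. The \rulename{crash} rule is even simpler: its target is $\memstate_\Init[\lmem\mapsto\memstate_1.\lmem]$ and ${\pbidSet}_\Init = \emptyset$, so $\memstate_2.\lpbidSet = \emptyset$ and the separation condition holds vacuously. (For \rulename{persist} and \rulename{crash} the side condition $\lLOCSET(\alpha)\suq\nvlocset$ plays no role, as $\lpbidSet$ is respectively unchanged or emptied.)

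The only rule that genuinely enlarges $\lpbidSet$ is \rulename{beginPB}, and this is the one place where the side condition $\lLOCSET(\alpha)\suq\nvlocset$ is essential. Here $\alpha = \tidlab{\tid}{\beginpblab{\nvlocset'}}$ for some thread $\tid$ and set $\nvlocset'\suq\NVLoc$, and the rule yields $\memstate_2.\lpbidSet = \memstate_1.\lpbidSet \cup \set{\tup{\pbid,\nvlocset'}}$ for a fresh identifier $\pbid$. Since $\lLOCSET(\alpha) = \nvlocset'$ and the hypothesis gives $\lLOCSET(\alpha)\suq\nvlocset$, the freshly inserted entry satisfies $\nvlocset' \suq \nvlocset$, which is precisely the first disjunct of the separation condition; the inherited entries are handled as in the previous paragraph. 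This closes the case analysis. I do not expect a real obstacle: the argument turns entirely on the single observation that $\lLOCSET(\alpha)$ bounds the one set that \rulename{beginPB} can add to $\lpbidSet$, and the only care required is correctly reading off each rule's effect on that component.
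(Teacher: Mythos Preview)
Your proposal is correct and matches what the paper intends: the paper does not spell out a proof at all, merely stating that the proposition ``easily follows from the definitions in \cref{sec:psc},'' and your case analysis over the \PSC rules is exactly the routine verification this remark points to. Your identification of \rulename{beginPB} as the only rule that enlarges $\lpbidSet$, and of the hypothesis $\lLOCSET(\alpha)\suq\nvlocset$ as precisely what is needed there, is spot on.
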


Under the conditions of \cref{def:compose_psc}, we always have the following properties:

\begin{lemma}\label{lem:merge}
Suppose $\locset_1,\locset_2\suq\Loc$ is such that
$\locset_1 \cap \locset_2 = \emptyset$. Then
\begin{enumerate}[(a)]
\item\label{item:merge1}
$\mergemem{\memstate_1}{\locset_1}{\memstate_2}{\locset_2}
=
\mergemem{\memstate_2}{\locset_2}{\memstate_1}{\locset_1}$.
\item\label{item:merge2}
$\mergemem{\memstate_1}{\locset_1}{\memstate_2}{\locset_2}
=
\mergemem{\memstate_1\rst{\locset_1}}{\locset_1}{\memstate_2}{\locset_2}$.
\item\label{item:merge3}
$(\mergemem{\memstate_1}{\locset_1}{\memstate_2}{\locset_2})\rst{Y}
=
\memstate_1\rst{\locset_1}$, for any $Y$ such that $\locset_1 \subseteq Y \subseteq \Loc\setminus\locset_2$.
\end{enumerate}
\end{lemma}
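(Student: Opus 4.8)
The plan is to prove all three identities by unfolding \cref{def:compose_psc} and \cref{def:restriction_psc} and checking equality componentwise across the five components of a \PSC state, namely $\lmem$, $\lvmem$, $\lPbuff$, $\lpblock$, and $\lpbidSet$. Throughout, wherever the restriction operator $\rst{\cdot}$ appears I tacitly assume the corresponding separation hypothesis from \cref{def:restriction_psc} (e.g.\ that $\locset_1 \cap \NVLoc$ separates $\memstate_1$), which is what makes that restriction well-defined. The first four components are handled uniformly, since restriction and merge treat $\lmem$, $\lvmem$, $\lPbuff$, and $\lpblock$ in the same pointwise ``keep inside / reset to the default ($0$, $\epsl$, or $\bot$) outside'' fashion; only $\lpbidSet$, whose definition filters by the inclusion test $\nvlocseta \suq \cdot$, needs separate attention.

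For commutativity (part~\ref{item:merge1}) the claim is immediate: since $\locset_1 \cap \locset_2 = \emptyset$, every location lies in at most one of $\locset_1$, $\locset_2$, so the three-way case split defining each of $\lmem$, $\lvmem$, $\lPbuff$, $\lpblock$ yields the same value whether one reads $(\memstate_1,\locset_1)$ or $(\memstate_2,\locset_2)$ first, while the defining union for $\lpbidSet$ is symmetric. For the first-argument restriction identity (part~\ref{item:merge2}), the point is that the merge consults $\memstate_1$ only at locations of $\locset_1$: for $\nvloc,\vloc \in \locset_1$ the state $\memstate_1\rst{\locset_1}$ agrees with $\memstate_1$ on $\lmem$, $\lvmem$, $\lPbuff$, $\lpblock$, so substituting $\memstate_1\rst{\locset_1}$ for $\memstate_1$ leaves those components of the merge unchanged; and for $\lpbidSet$, the set $\set{\tup{\pbid,\nvlocseta} \in \memstate_1\rst{\locset_1}.\lpbidSet \mid \nvlocseta \suq \locset_1}$ equals $\set{\tup{\pbid,\nvlocseta} \in \memstate_1.\lpbidSet \mid \nvlocseta \suq \locset_1}$, because $\memstate_1\rst{\locset_1}.\lpbidSet$ is already exactly that filtered set and re-filtering by $\nvlocseta \suq \locset_1$ is idempotent.

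Part~\ref{item:merge3} is the interesting case. Writing $\memstate = \mergemem{\memstate_1}{\locset_1}{\memstate_2}{\locset_2}$ and assuming $\locset_1 \suq Y \suq \Loc \setminus \locset_2$, I would first observe that $Y \cap \NVLoc$ separates $\memstate$ ``for free'': every $\tup{\pbid,\nvlocseta} \in \memstate.\lpbidSet$ has either $\nvlocseta \suq \locset_1 \suq Y$, or $\nvlocseta \suq \locset_2$ with $\locset_2 \cap Y = \emptyset$ giving $\nvlocseta \suq \NVLoc \setminus Y$, so in both cases the separation condition holds and $\memstate\rst{Y}$ is well-defined. Then, for a non-volatile location $\nvloc$: if $\nvloc \in \locset_1$ then $\nvloc \in Y$ and $\memstate\rst{Y}.\lmem(\nvloc) = \memstate.\lmem(\nvloc) = \memstate_1.\lmem(\nvloc) = \memstate_1\rst{\locset_1}.\lmem(\nvloc)$; if $\nvloc \in Y \setminus \locset_1$ then $\nvloc \nin \locset_2$, so both sides equal $0$; and if $\nvloc \nin Y$ then $\locset_1 \suq Y$ forces $\nvloc \nin \locset_1$, so again both sides give the default. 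The identical three-way argument applies verbatim to $\lvmem$, $\lPbuff$, and $\lpblock$.

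The main obstacle, and the only place the reasoning is not purely mechanical, is the $\lpbidSet$ component of part~\ref{item:merge3}. Filtering $\memstate.\lpbidSet$ by $\nvlocseta \suq Y$ retains the entire $\memstate_1$-contribution (its entries already satisfy $\nvlocseta \suq \locset_1 \suq Y$), yielding exactly $\memstate_1\rst{\locset_1}.\lpbidSet$; what must be ruled out is that an entry $\tup{\pbid,\nvlocseta}$ from the $\memstate_2$-contribution (with $\nvlocseta \suq \locset_2$) survives the $Y$-filter. Since $\locset_2 \cap Y = \emptyset$, this can only happen when $\nvlocseta = \emptyset$, so the argument closes cleanly under the standing (and natural) convention that every persistence block protects a \emph{nonempty} variable set --- the $\beginpblab{\cdot}$ transition only ever records a block together with the locations it demarcates. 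I would state this nonemptiness invariant explicitly (it holds for reachable \PSC states, in particular those to which the lemma is applied in \cref{lem:composition}), after which the $\lpbidSet$ equality, and hence all of \cref{lem:merge}, follows.
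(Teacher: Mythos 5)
Your proof is correct and is exactly the argument the paper has in mind: the paper states \cref{lem:merge} without any proof (it is offered in the appendix as a property holding ``under the conditions of \cref{def:compose_psc}''), and the intended justification is precisely your componentwise unfolding of \cref{def:compose_psc,def:restriction_psc}. Your one non-mechanical observation---the $\lpbidSet$ component of part~(c)---is a genuine subtlety that the paper glosses over: an entry $\tup{\pbid,\emptyset} \in \memstate_2.\lpbidSet$ passes both filters $\emptyset \suq \locset_2$ and $\emptyset \suq Y$, so it survives into $(\mergemem{\memstate_1}{\locset_1}{\memstate_2}{\locset_2})\rst{Y}$ yet need not belong to $\memstate_1\rst{\locset_1}$; hence part~(c), read over arbitrary states, really does require the nonemptiness assumption you identify. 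Be careful, though, with how you discharge that assumption: the \rulename{beginPB} rule does not by itself guarantee it, since $\beginpbInst{\emptyset}$ is syntactically and semantically permitted and would record $\tup{\pbid,\emptyset}$ in $\lpbidSet$ even in a reachable state, so the invariant must be imposed as an explicit (and entirely harmless) convention that persistence-block instructions mention nonempty variable sets, rather than derived from the operational semantics. With that convention stated up front, your proof is complete and matches the paper's (implicit) one.
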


\composition*

\begin{proof}
Consider two libraries $\lib$ and $\lib'$ implementing the same method set
$\methodset$, both safe for a program $\prog$, with $\lib$ being also safe for
a program $\prog'$. For traces $\tr_\lcl$ and $\tr_\llib$, let
$\scLEM(\tr_\lcl, \tr_\llib)$ denote the rest of the statement of the lemma.
We prove $(\forall \tr_\lcl, \tr_\llib \ldotp \scLEM(\tr_\lcl,
\tr_\llib))$ by induction on the sum of lengths of $\tr_\lcl$ and $\tr_\llib$.

The base of induction is to show $\scLEM(\tr_\lcl, \tr_\llib)$ when
$\size{\tr_\lcl}+\size{\tr_\llib}=0$; then
$\tup{\progstate,\memstate}=\tup{\progstate_\Init,\memstate_\Init}$, and we
can simply take $\tr=\epsilon$.

The step of induction is to show $\scLEM(\tr_\lcl, \tr_\llib)$, assuming that
$\scLEM(\tr'_\lcl, \tr'_\llib)$ holds for every $\tr'_\lcl$ and $\tr'_\llib$
with $\size{\tr'_\lcl}+\size{\tr'_\llib} < \size{\tr_\lcl}+\size{\tr_\llib}$.
We split the rest of the proof into the following cases:
\begin{itemize}
\item[(I)] $\tr_\llib$ is non-empty and ends with a label $\alpha_\llib$ that
does not contribute to $\rsthistory{\methodset}{\tr_\llib}$, \ie one of the
following holds:
\begin{itemize}
	\item $\alpha_\llib = \persistlab$
	\item $\alpha_\llib \in \Tid \times \set{\epsilon}$
	\item $\alpha_\llib \in \Tid \times
		\set{\calllab{\method}{\phi}
			, \retlab{\method}{\phi} \in \Lab
			\mid \method \nin \methodset
		}$
	\item $\alpha_\llib \in \Tid \times \set{\lab \in \Lab \mid \lTYP(\lab)\nin\set{\lCALL,\lRET} \land \lab \nin \SFLab}$
\end{itemize}
\item[(II)] $\tr_\lcl$ is non-empty and ends with a label $\alpha_\lcl$ that
does not contribute to $\rsthistory{\methodset}{\tr_\lcl}$;
\begin{itemize}
	\item $\alpha_\lcl = \persistlab$
	\item $\alpha_\lcl \in \Tid \times \set{\epsilon}$
	\item $\alpha_\lcl \in \Tid \times
		\set{\calllab{\method}{\phi}
			, \retlab{\method}{\phi} \in \Lab
			\mid \method \nin \methodset
		}$
	\item $\alpha_\lcl \in \Tid \times \set{\lab \in \Lab \mid \lTYP(\lab)\nin\set{\lCALL,\lRET} \land \lab \nin \SFLab}$
\end{itemize}
\item[(III)] both $\tr_\lcl$ and $\tr_\llib$ are non-empty and end with
labels $\alpha_\lcl$ and $\alpha_\llib$ contributing to histories
$\rsthistory{\methodset}{\tr_\lcl}$ and $\rsthistory{\methodset}{\tr_\llib}$,
\ie one of the following holds:
\begin{itemize}
	\item $\alpha_\lcl = \alpha_\llib \in
		\Tid \times \set{\calllab{\method}{\phi} \in \Lab
											\mid \method \in \methodset}$
	\item $\alpha_\lcl = \alpha_\llib \in
		\Tid \times \set{\retlab{\method}{\phi} \in \Lab
											\mid \method \in \methodset}$
	\item $\alpha_\lcl = \alpha_\llib = \crash$
	\item $\alpha_\lcl, \alpha_\llib \in \Tid \times \SFLab$
\end{itemize}
\end{itemize}
It is easy to see that these three cases exhaust all possibilities for
$\tr_\llib$ and $\tr_\lcl$. For instance, suppose that $\tr_\llib$ is
non-empty, but ends with a label corresponding to a history label. Let $\tr_\llib = \_
\cdot \alpha_\llib$ and $\rsthistory{\methodset}{\tr_\llib} = \_ \cdot
\rsthistory{\methodset}{\alpha_\llib} $. By the lemma's premise,
$\rsthistory{\methodset}{\tr_\lcl} = \rsthistory{\methodset}{\tr_\llib}$.
Therefore, it must be that $\tr_\lcl = \_ \cdot \alpha_\lcl \cdot \tr'_\lcl$
and $\rsthistory{\methodset}{\tr_\lcl} = \_ \cdot
\rsthistory{\methodset}{\alpha_\lcl}$. However, when $\tr'_\lcl$ is non-empty,
such a possibility is already covered by Case II, and when $\tr'_\lcl$ is
empty, such a possibility is already covered by Case III.

\noindent\underline{\textsc{Case I.}}
Suppose that $\tr_\llib$ is non-empty and ends with a label $\alpha_\llib$ not
corresponding to a history label. Let $\tr_\llib = \tr'_\llib \cdot
\alpha_\llib$, and consider any state
$\tup{\progstate'_\llib,\memstate'_\llib}$ for which there are the following
transitions:
\begin{equation}\label{eq:comp1}\tag{I}
\tup{\progstate_\Init,\memstate_\Init}
{\asteplab{\tr'_\llib}{\cs{\client{\prog'}{\lib}}{\PSC}}}
\tup{\progstate'_\llib,\memstate'_\llib}
\asteplab{\alpha_\llib}{\cs{\client{\prog'}{\lib}}{\PSC}}
\tup{\progstate_\llib,\memstate_\llib}
\end{equation}
In the following, we consider differently various cases for $\alpha_\llib$ in
order to construct $\tr$.

{Suppose $\alpha_\llib = \persistlab$.}
The last transition of \cref{eq:comp1} is memory-internal, so
$\progstate'_\llib = \progstate_\llib$. Then $\progstate = \progstate'$ holds
by construction.
We deduce from \cref{eq:comp1} that $\memstate'_\llib
\asteplab{\persistlab}{\PSC}
\memstate_\llib$ holds. By \cref{prop:prog_lib_separates}, since $\lib$ is safe for $\prog'$, $\usedlocs{\lib} \cap \NVLoc$ separates $\memstate'_\llib$, which allows us to deduce $\memstate'_\llib\rst{\usedlocs{\lib}}
\asteplab{\persistlab}{\PSC}
\memstate_\llib\rst{\usedlocs{\lib}}$
by \cref{lem:memory_disjointness}(\ref{item:persist}). From the induction
hypothesis $\scLEM(\tr_\lcl, \tr'_\llib)$, we know that
$\memstate'=\nmergemem{\memstate_\lcl}{\usedclientlocs{\methodset}{\prog}}{\memstate'_\llib}{\usedlocs{\lib}}$,
so overall we obtain $\memstate'\rst{\usedlocs{\lib}}
\asteplab{\persistlab}{\PSC} \memstate\rst{\usedlocs{\lib}}$.
Also, $\memstate'_\lcl\rst{\usedclientlocs{\methodset}{\prog}}
\asteplab{\persistlab}{\PSC}
\memstate'_\lcl\rst{\usedclientlocs{\methodset}{\prog}}$
holds trivially. Note that by \cref{lem:merge}\ref{item:merge3},
$\memstate'\rst{\unusedlocs{\lib}} = \memstate\rst{\unusedlocs{\lib}} =
\memstate_\lcl\rst{\usedclientlocs{\methodset}{\prog}}$. Therefore, we get
$\memstate'\rst{\unusedlocs{\lib}}
\asteplab{\persistlab}{\PSC}
\memstate\rst{\unusedlocs{\lib}}$.
Since $\tup{\progstate',
\memstate'}$ is reachable, by \cref{prop:prog_lib_separates},
$\usedlocs{\lib} \cap \NVLoc$ separates $\memstate'$. Then, by
\cref{lem:memory_disjointness}(\ref{item:persist}), we obtain $\memstate'
\asteplab{\persistlab}{\PSC} \memstate$. The transition is memory-internal, so we get
$\tup{\progstate',\memstate'}
{\asteplab{\persistlab}{\cs{\client{\prog}{\lib}}{\PSC}}}
\tup{\progstate,\memstate}$,

By the induction hypothesis $\scLEM(\tr_\lcl,
\tr'_\llib)$, for $\tup{\progstate_\lcl,\memstate_\lcl}$,
$\tup{\progstate'_\llib,\memstate'_\llib}$ there there exists $\tr'$ such that
$\fhistory{\tr'} =
\fhistory{\tr_\lcl}$ and
$\tup{\progstate_\Init,\memstate_\Init}
{\asteplab{\tr'}{\cs{\client{\prog}{\lib}}{\PSC}}}
\tup{\progstate',\memstate'}$. It is easy to see that
$\fhistory{\tr' \cdot \persistlab}
= \fhistory{\tr_\lcl}$,
and we have shown that:
\[
\tup{\progstate_\Init,\memstate_\Init}
{\asteplab{\tr'}{\cs{\client{\prog}{\lib}}{\PSC}}}
\tup{\progstate',\memstate'}
{\asteplab{\persistlab}{\cs{\client{\prog}{\lib}}{\PSC}}}
\tup{\progstate,\memstate}
\]
To conclude the proof for this case, we let $\tr = \tr' \cdot \persistlab$.

{Suppose $\alpha_\llib = \tidlab{\tid}{\epsilon}$.}
The transition is program-internal,
so $\memstate'_\llib = \memstate_\llib$. Therefore, by construction,
$\memstate' = \memstate$. The histories of $\tr_\lcl$, $\tr'_\llib$ and
$\tr_\llib$ coincide, so there are two possibilities: either the execution in
all of them is outside of a method of $\lib$ (and then
$\progstate_\lcl(\tid).\lmethod, \progstate'_\llib(\tid).\lmethod,
\progstate_\llib(\tid).\lmethod \notin \methodset$ holds) or inside of the
same method (and then $\progstate_\lcl(\tid).\lmethod,
\progstate'_\llib(\tid).\lmethod,
\progstate_\llib(\tid).\lmethod \in \methodset$ holds).
We first assume that $\progstate_\lcl(\tid).\lmethod,
\progstate'_\llib(\tid).\lmethod,
\progstate_\llib(\tid).\lmethod \nin \methodset$.
Then $\progstate(\tid) =
\progstate'(\tid) = \progstate_\lcl(\tid)$ holds. We let $\tr =
\tr'$; then the induction step immediately follows from the induction
hypothesis $\scLEM(\tr_\lcl, \tr'_\llib)$.
We now consider the possibility of $\progstate_\lcl(\tid).\lmethod,
\progstate'_\llib(\tid).\lmethod,
\progstate_\llib(\tid).\lmethod \in \methodset$.
From \cref{eq:comp1} we deduce that $\tup{\progstate'_\llib(\tid).\lpc,
\progstate'_\llib(\tid).\lphi}
\asteplab{\epsilon}{\lib(\progstate_\llib(\tid).\lmethod)}
\tup{\progstate_\llib(\tid).\lpc, \progstate_\llib(\tid).\lphi}$ holds. By
construction of $\progstate$ and $\progstate'$, and the concurrent program
transition rules we obtain that $\progstate' \asteptidlab{\tid}{\epsilon}{\prog}
\progstate$ holds. Since the latter is a program-internal transition, we get
$\tup{\progstate',\memstate'}
{\asteptidlab{\tid}{\epsilon}{\cs{\client{\prog}{\lib}}{\PSC}}}
\tup{\progstate,\memstate}$.

By the induction hypothesis $\scLEM(\tr_\lcl,
\tr'_\llib)$, for $\tup{\progstate_\lcl,\memstate_\lcl}$,
$\tup{\progstate'_\llib,\memstate'_\llib}$ there there exists $\tr'$ such that
$\fhistory{\tr'} = \fhistory{\tr_\lcl}$ and
$\tup{\progstate_\Init,\memstate_\Init}
{\asteplab{\tr'}{\cs{\client{\prog}{\lib}}{\PSC}}}
\tup{\progstate',\memstate'}$. It is easy to see that
$\fhistory{\tr' \cdot \tidlab{\tid}{\epsilon}}
= \fhistory{\tr_\lcl}$,
and we have shown that:
\[
\tup{\progstate_\Init,\memstate_\Init}
{\asteplab{\tr'}{\cs{\client{\prog}{\lib}}{\PSC}}}
\tup{\progstate',\memstate'}
{\asteptidlab{\tid}{\epsilon}{\cs{\client{\prog}{\lib}}{\PSC}}}
\tup{\progstate,\memstate}
\]
To conclude the proof for this case, we let $\tr = \tr' \cdot \tidlab{\tid}{\epsilon}$.

{Suppose $\alpha_\llib = \tidlab{\tid}{\calllab{\method}{\phi}}$ (or
$\alpha_\llib = \tidlab{\tid}{\retlab{\method}{\phi}}$) and $\method \nin
\methodset$.} The transition is program-internal, so $\memstate'_\llib =
\memstate_\llib$. Therefore, by construction, $\memstate' = \memstate$. It is
also a call (or return) transition, so necessarily
$\progstate'_\llib(\tid).\lmethod = \main \notin \methodset$ and
$\progstate_\llib(\tid).\lmethod = \method \notin \methodset$ (or
$\progstate'_\llib(\tid).\lmethod = \method \notin \methodset$ and
$\progstate_\llib(\tid).\lmethod = \main \notin \methodset$). {By the
premise of the induction step, $\rsthistory{\methodset}{\tr_\lcl} =
\rsthistory{\methodset}{\tr_\llib}$}, so it can only be that
$\progstate_\lcl(\tid).\lmethod \notin \methodset$ holds. Then, by
construction, $\progstate(\tid) = \progstate'(\tid) = \progstate_\lcl(\tid)$
holds. We let $\tr = \tr'$; then the induction step immediately follows from
the induction hypothesis $\scLEM(\tr_\lcl, \tr'_\llib)$.

{Suppose $\alpha_\llib = \tidlab{\tid}{\lab}$ and $\lTYP(\lab)\nin\set{\lCALL,\lRET} \land \lab \nin \SFLab$.}

{By the
premise of the induction, $\rsthistory{\methodset}{\tr_\lcl} =
\rsthistory{\methodset}{\tr_\llib}$}. Also, $\rsthistory{\methodset}{\tr_\llib} =
\rsthistory{\methodset}{\tr'_\llib}$. Hence, either the execution in
all of the traces is outside of a method of $\lib$ (and then
$\progstate_\lcl(\tid).\lmethod, \progstate'_\llib(\tid).\lmethod,
\progstate_\llib(\tid).\lmethod \notin \methodset$ holds) or inside of the
same method (and then $\progstate_\lcl(\tid).\lmethod,
\progstate'_\llib(\tid).\lmethod,
\progstate_\llib(\tid).\lmethod \in \methodset$ holds).
We first assume that $\progstate_\lcl(\tid).\lmethod,
\progstate'_\llib(\tid).\lmethod,
\progstate_\llib(\tid).\lmethod \nin \methodset$.
Then $\progstate = \progstate' = \progstate_\lcl$ holds. By \cref{prop:local},
$\lLOCSET(\lab) \suq \unusedlocs{\lib}$. From \cref{eq:comp1} we deduce
that $\memstate'_\llib
\asteptidlab{\tid}{\lab}{\PSC}
\memstate_\llib$ holds.
Since $\tup{\progstate'_\llib, \memstate'_\llib}$ is reachable, by
\cref{prop:prog_lib_separates}, we have that $\usedlocs{\lib}$ separates
$\memstate'$.
By \cref{lem:memory_disjointness}(\ref{item:other}),
$\memstate'_\llib\rst{\unusedlocs{\lib}}
\asteptidlab{\tid}{\lab}{\PSC}
\memstate_\llib\rst{\unusedlocs{\lib}}$ and
$\memstate'_\llib\rst{\usedlocs{\lib}} =
\memstate_\llib\rst{\usedlocs{\lib}}$. The latter in particular implies that,
by construction, $\memstate' = \memstate$.
We let $\tr = \tr'$; then the induction step immediately follows from the induction hypothesis $\scLEM(\tr_\lcl, \tr'_\llib)$.

We now consider the possibility of $\progstate_\lcl(\tid).\lmethod, \progstate'_\llib(\tid).\lmethod,
\progstate_\llib(\tid).\lmethod \in \methodset$.
From \cref{eq:comp1} we deduce that $\tup{\progstate'_\llib(\tid).\lpc,
\progstate'_\llib(\tid).\lphi}
\asteplab{\lab}{\lib(\progstate_\llib(\tid).\lmethod)}
\tup{\progstate_\llib(\tid).\lpc, \progstate_\llib(\tid).\lphi}$ holds. By
construction of $\progstate$ and $\progstate'$, and the concurrent program
transition rules we obtain that $\progstate' \asteptidlab{\tid}{\lab}{\prog}
\progstate$ holds.
Secondly, by $\scLEM(\tr_\lcl,
\tr'_\llib)$,
 $\memstate'=\nmergemem{\memstate_\lcl}{\usedclientlocs{\methodset}{\prog}}{\memstate'_\llib}{\usedlocs{\lib}}$. By \cref{prop:local},
$\lLOCSET(\lab) \suq \unusedlocs{\lib}$.
We deduce from \cref{eq:comp1} $\memstate'_\llib
\asteptidlab{\tid}{\lab}{\PSC}
\memstate_\llib$. By \cref{prop:prog_lib_separates}, since $\lib$ is safe for $\prog'$, $\usedlocs{\lib} \cap \NVLoc$ separates $\memstate'_\llib$, which allows us to deduce $\memstate'_\llib\rst{\usedlocs{\lib}}
\asteptidlab{\tid}{\lab}{\PSC}
\memstate_\llib\rst{\usedlocs{\lib}}$
by \cref{lem:memory_disjointness}(\ref{item:persist}). From the induction
hypothesis $\scLEM(\tr_\lcl, \tr'_\llib)$, we know that
$\memstate'=\nmergemem{\memstate_\lcl}{\usedclientlocs{\methodset}{\prog}}{\memstate'_\llib}{\usedlocs{\lib}}$,
so overall we obtain $\memstate'\rst{\usedlocs{\lib}}
\asteptidlab{\tid}{\lab}{\PSC} \memstate\rst{\usedlocs{\lib}}$.
Also, note that by \cref{lem:merge}\ref{item:merge3} and by contruction of the
merges, $\memstate'\rst{\unusedlocs{\lib}} = \memstate\rst{\unusedlocs{\lib}}
= \memstate_\lcl\rst{\usedclientlocs{\methodset}{\prog}}$. Since
$\tup{\progstate', \memstate'}$ is reachable, by
\cref{prop:prog_lib_separates}, we have that $\usedlocs{\lib}$ separates
$\memstate'$. Then, by \cref{lem:memory_disjointness}(\ref{item:other}), we
have $\memstate'
\asteptidlab{\tid}{\lab}{\PSC} \memstate$. By synchronizing the latter with
$\progstate' \asteptidlab{\tid}{\lab}{\prog}
\progstate$, we get:
$\tup{\progstate',\memstate'}
{\asteptidlab{\tid}{\lab}{\cs{\client{\prog}{\lib}}{\PSC}}}
\tup{\progstate,\memstate}$.

By the induction hypothesis $\scLEM(\tr_\lcl,
\tr'_\llib)$, for $\tup{\progstate_\lcl,\memstate_\lcl}$,
$\tup{\progstate'_\llib,\memstate'_\llib}$ there there exists $\tr'$ such that
$\fhistory{\tr'} =
\fhistory{\tr_\lcl}$ and
$\tup{\progstate_\Init,\memstate_\Init}
{\asteplab{\tr'}{\cs{\client{\prog}{\lib}}{\PSC}}}
\tup{\progstate',\memstate'}$. It is easy to see that
$\fhistory{\tr' \cdot \tidlab{\tid}{\lab}}
= \fhistory{\tr_\lcl}$,
and we have shown that:
\[
\tup{\progstate_\Init,\memstate_\Init}
{\asteplab{\tr'}{\cs{\client{\prog}{\lib}}{\PSC}}}
\tup{\progstate',\memstate'}
{\asteptidlab{\tid}{\lab}{\cs{\client{\prog}{\lib}}{\PSC}}}
\tup{\progstate,\memstate}
\]
To conclude the proof for this case, we let $\tr = \tr' \cdot \tidlab{\tid}{\lab}$.

\noindent\underline{\textsc{Case II.}}
Suppose that $\tr_\lcl$ is non-empty and ends with a label $\alpha_\lcl$ not
corresponding to a history label. Let $\tr_\lcl = \tr'_\lcl \cdot
\alpha_\lcl$, and consider any state
$\tup{\progstate'_\lcl,\memstate'_\lcl}$ for which there are the following
transitions:
\begin{equation}\label{eq:comp2}\tag{II}
\tup{\progstate_\Init,\memstate_\Init}
{\asteplab{\tr'_\lcl}{\cs{\client{\prog}{\lib'}}{\PSC}}}
\tup{\progstate'_\lcl,\memstate'_\lcl}
\asteplab{\alpha_\lcl}{\cs{\client{\prog}{\lib'}}{\PSC}}
\tup{\progstate_\lcl,\memstate_\lcl}
\end{equation}
Like in Case I, we consider separately various cases for $\alpha_\lcl$ in
order to construct $\tr$. We only give a proof for the case of call and return labels here, since the other cases are analogous to Case I.

{Suppose $\alpha_\lcl = \tidlab{\tid}{\calllab{\method}{\phi}}$
(or $\alpha_\lcl = \tidlab{\tid}{\retlab{\method}{\phi}}$) and $\method \nin \methodset$.}
The transition is program-internal,
so $\memstate'_\lcl = \memstate_\lcl$. It is also a call transition into a method not in $\methodset$, so
$\progstate'_\lcl(\tid).\lmethod = \main \notin \methodset$ and
$\progstate'_\lcl(\tid).\lmethod \notin \methodset$.
Then, by construction, $\progstate'(\tid) = \progstate'_\lcl$ and
$\progstate(\tid) = \progstate_\lcl(\tid)$.
We deduce from \cref{eq:comp2} that
$\progstate'_\lcl
{\asteptidlab{\tid}{\calllab{\method}{\phi}}{\prog}}
\progstate_\lcl$ and, therefore,
$\progstate'
{\asteptidlab{\tid}{\calllab{\method}{\phi}}{\prog}}
\progstate$.
Since the transition is program-internal,  we have
$\tup{\progstate',\memstate'}
{\asteptidlab{\tid}{\calllab{\method}{\phi}}{\cs{\client{\prog}{\lib}}{\PSC}}}
\tup{\progstate,\memstate}$.

By the induction hypothesis $\scLEM(\tr'_\lcl,
\tr_\llib)$, for $\tup{\progstate'_\lcl,\memstate'_\lcl}$,
$\tup{\progstate_\llib,\memstate_\llib}$ there exists $\tr'$ such that
$\fhistory{\tr'} = \fhistory{\tr'_\lcl}$ and
$\tup{\progstate_\Init,\memstate_\Init}
{\asteplab{\tr'}{\cs{\client{\prog}{\lib}}{\PSC}}}
\tup{\progstate',\memstate'}$. It is easy to see that
$\fhistory{\tr' \cdot \tidlab{\tid}{\calllab{\method}{\phi}}}
= \fhistory{\tr'_\lcl \cdot \tidlab{\tid}{\calllab{\method}{\phi}}}
= \fhistory{\tr_\lcl}$,
and we have shown that:
\[
\tup{\progstate_\Init,\memstate_\Init}
{\asteplab{\tr'}{\cs{\client{\prog}{\lib}}{\PSC}}}
\tup{\progstate',\memstate'}
{\asteptidlab{\tid}{\calllab{\method}{\phi}}{\cs{\client{\prog}{\lib}}{\PSC}}}
\tup{\progstate,\memstate}
\]
To conclude the proof for this case, we let $\tr = \tr' \cdot \tidlab{\tid}{\calllab{\method}{\phi}}$.

{Suppose $\alpha_\lcl = \tidlab{\tid}{\retlab{\method}{\phi}}$ and $\method \nin \methodset$.}
The transition is program-internal,
so $\memstate'_\lcl = \memstate_\lcl$. It is also a return transition from a method not in $\methodset$, so
$\progstate'_\lcl(\tid).\lmethod \notin \methodset$ and
$\progstate'_\lcl(\tid).\lmethod = \main \notin \methodset$.
Then, by construction, $\progstate'(\tid) = \progstate'_\lcl$ and
$\progstate(\tid) = \progstate_\lcl(\tid)$.
We deduce from \cref{eq:comp2} that
$\progstate'_\lcl
{\asteptidlab{\tid}{\retlab{\method}{\phi}}{\prog}}
\progstate_\lcl$ and, therefore,
$\progstate'
{\asteptidlab{\tid}{\retlab{\method}{\phi}}{\prog}}
\progstate$.
Since the transition is program-internal,
$\tup{\progstate',\memstate'}
{\asteptidlab{\tid}{\retlab{\method}{\phi}}{\cs{\client{\prog}{\lib}}{\PSC}}}
\tup{\progstate,\memstate}$.

By the induction hypothesis $\scLEM(\tr'_\lcl,
\tr_\llib)$, for $\tup{\progstate'_\lcl,\memstate'_\lcl}$,
$\tup{\progstate_\llib,\memstate_\llib}$ there exists $\tr'$ such that
$\fhistory{\tr'} = \fhistory{\tr'_\lcl}$ and
$\tup{\progstate_\Init,\memstate_\Init}
{\asteplab{\tr'}{\cs{\client{\prog}{\lib}}{\PSC}}}
\tup{\progstate',\memstate'}$. It is easy to see that
$\fhistory{\tr' \cdot \tidlab{\tid}{\retlab{\method}{\phi}}}
= \fhistory{\tr'_\lcl \cdot \tidlab{\tid}{\retlab{\method}{\phi}}}
= \fhistory{\tr_\lcl}$,
and we have shown that:
\[
\tup{\progstate_\Init,\memstate_\Init}
{\asteplab{\tr'}{\cs{\client{\prog}{\lib}}{\PSC}}}
\tup{\progstate',\memstate'}
{\asteptidlab{\tid}{\retlab{\method}{\phi}}{\cs{\client{\prog}{\lib}}{\PSC}}}
\tup{\progstate,\memstate}
\]
To conclude the proof for this case, we let $\tr = \tr' \cdot \tidlab{\tid}{\retlab{\method}{\phi}}$.

\noindent\underline{\textsc{Case III.}} Suppose both $\tr_\lcl$ and $\tr_\llib$ are
non-empty and end with a label corresponding to a history label. Let $\tr_\lcl
= \tr'_\lcl \cdot \alpha_\lcl$ and $\tr_\llib = \tr'_\llib
\cdot \alpha_\llib$. By the premise of the induction step,
$\rsthistory{\methodset}{\tr_\lcl} = \rsthistory{\methodset}{\tr_\llib}$
holds; hence, $\rsthistory{\methodset}{\alpha_\lcl} =
\rsthistory{\methodset}{\alpha_\llib}$, and we refer to that history action
label as $\alpha$.
Let $\tup{\progstate'_\llib,\memstate'_\llib}$ and
$\tup{\progstate'_\lcl,\memstate'_\lcl}$ be any states for which there are the
following transitions:
\begin{equation}\label{eq:comp3}\tag{III}
\begin{array}{c}
\tup{\progstate_\Init,\memstate_\Init}
{\asteplab{\tr'_\llib}{\cs{\client{\prog'}{\lib}}{\PSC}}}
\tup{\progstate'_\llib,\memstate'_\llib}
\asteplab{\alpha_\llib}{\cs{\client{\prog'}{\lib}}{\PSC}}
\tup{\progstate_\llib,\memstate_\llib}
\\
\tup{\progstate_\Init,\memstate_\Init}
{\asteplab{\tr'_\lcl}{\cs{\client{\prog}{\lib'}}{\PSC}}}
\tup{\progstate'_\lcl,\memstate'_\lcl}
\asteplab{\alpha_\lcl}{\cs{\client{\prog}{\lib'}}{\PSC}}
\tup{\progstate_\lcl,\memstate_\lcl}
\end{array}
\end{equation}
In the following, we consider different combinations of $\alpha_\lcl$ and
$\alpha_\llib$ in order to construct $\tr$.

{Suppose $\alpha = \alpha_\lcl = \alpha_\llib = \tidlab{\tid}{\calllab{\method}{\phi}}$.}
Let $\progstate'_\lcl(\tid).\lpc = \pc$. The
transition is program-internal, so $\memstate'_\llib = \memstate_\llib$ and
$\memstate'_\lcl = \memstate_\lcl$; therefore, by construction, $\memstate' =
\memstate$. It is a call transition, so
$\prog(\tid)(\main)(\pc) = \callInst{\method}$,
$\progstate'_\lcl(\tid).\lphi = \phi$,
$\progstate'_\lcl(\tid).\lpcs = \bot$ and
$\progstate'_\lcl(\tid).\lmethod =
\main$, and also
$\progstate_\llib(\tid).\lpc = 0$ and
$\progstate_\llib(\tid).\lphi = \phi$,
$\progstate_\lcl(\tid).\lpcs = \pc+1$ and
$\progstate_\lcl(\tid).\lmethod = \method$.
By construction, $\progstate'(\tid) = \tup{\pc, \phi, \bot, \main}$ and
$\progstate(\tid) = \tup{0, \phi, \pc+1, \method}$.
Then
$\progstate'(\tid)
{\asteplab{\calllab{\method}{\phi}}{\prog(\tid)}}
\progstate(\tid)$ and, therefore,
$\progstate'
{\asteptidlab{\tid}{\calllab{\method}{\phi}}{\prog}}
\progstate$.
Since the transition is program-internal,
we have
$\tup{\progstate',\memstate'}
{\asteptidlab{\tid}{\calllab{\method}{\phi}}{\cs{\client{\prog}{\lib}}{\PSC}}}
\tup{\progstate,\memstate}$.

By the induction hypothesis $\scLEM(\tr'_\lcl,
\tr'_\llib)$, for $\tup{\progstate'_\lcl,\memstate'_\lcl}$,
$\tup{\progstate'_\llib,\memstate'_\llib}$ there there exists $\tr'$ such that
$\fhistory{\tr'} =
\fhistory{\tr'_\lcl}$ and
$\tup{\progstate_\Init,\memstate_\Init}
{\asteplab{\tr'}{\cs{\client{\prog}{\lib}}{\PSC}}}
\tup{\progstate',\memstate'}$. It is easy to see that
$\fhistory{\tr' \cdot \tidlab{\tid}{\calllab{\method}{\phi}}}
= \fhistory{\tr'_\lcl \cdot \tidlab{\tid}{\calllab{\method}{\phi}}}
= \fhistory{\tr_\lcl}$,
and we have shown that:
\[
\tup{\progstate_\Init,\memstate_\Init}
{\asteplab{\tr'}{\cs{\client{\prog}{\lib}}{\PSC}}}
\tup{\progstate',\memstate'}
{\asteptidlab{\tid}{\calllab{\method}{\phi}}{\cs{\client{\prog}{\lib}}{\PSC}}}
\tup{\progstate,\memstate}
\]
To conclude the proof for this case, we let $\tr = \tr' \cdot \tidlab{\tid}{\calllab{\method}{\phi}}$.

{Suppose $\alpha = \alpha_\lcl = \alpha_\llib = \tidlab{\tid}{\retlab{\method}{\phi}}$.}
Let $\progstate'_\llib(\tid).\lpc = \pc_\llib$ and $\progstate'_\lcl(\tid).\lpcs = \pcs$. The transition is program-internal,
so $\memstate'_\llib = \memstate_\llib$ and $\memstate'_\lcl = \memstate_\lcl$; therefore, by construction, $\memstate' = \memstate$.
It is a return transition, so
$\prog(\tid)(\method)(\pc_\llib) = \returnInst$,
$\progstate'_\llib(\tid).\lphi = \phi$ and
$\progstate'_\lcl(\tid).\lmethod = \method$,
and also
$\progstate_\lcl(\tid).\lpc = \pcs$ and
$\progstate_\lcl(\tid).\lphi = \phi$,
$\progstate_\lcl(\tid).\lpcs = \bot$ and
$\progstate_\lcl(\tid).\lmethod = \main$.
By construction, $\progstate'(\tid) = \tup{\pc_\llib, \phi, \pcs, \method}$ and $\progstate(\tid) = \tup{
\pcs,
\phi,
\bot,
\main}$. Then
$\progstate'(\tid)
{\asteplab{\retlab{\method}{\phi}}{\prog(\tid)}}
\progstate(\tid)$, therefore,
$\progstate'
{\asteptidlab{\tid}{\retlab{\method}{\phi}}{\prog}}
\progstate$.
Since the transition is program-internal,
$\tup{\progstate',\memstate'}
{\asteptidlab{\tid}{\retlab{\method}{\phi}}{\cs{\client{\prog}{\lib}}{\PSC}}}
\tup{\progstate,\memstate}$.

By the induction hypothesis $\scLEM(\tr'_\lcl,
\tr'_\llib)$, for $\tup{\progstate'_\lcl,\memstate'_\lcl}$,
$\tup{\progstate'_\llib,\memstate'_\llib}$ there there exists $\tr'$ such that
$\fhistory{\tr'} =
\fhistory{\tr'_\lcl} = \rsthistory{\methodset}{\tr'_\llib}$ and
$\tup{\progstate_\Init,\memstate_\Init}
{\asteplab{\tr'}{\cs{\client{\prog}{\lib}}{\PSC}}}
\tup{\progstate',\memstate'}$. It is easy to see that
$\fhistory{\tr' \cdot \tidlab{\tid}{\retlab{\method}{\phi}}}
= \fhistory{\tr'_\lcl \cdot \tidlab{\tid}{\retlab{\method}{\phi}}}
= \fhistory{\tr_\lcl}$,
and we have shown that:
\[
\tup{\progstate_\Init,\memstate_\Init}
{\asteplab{\tr'}{\cs{\client{\prog}{\lib}}{\PSC}}}
\tup{\progstate',\memstate'}
{\asteptidlab{\tid}{\retlab{\method}{\phi}}{\cs{\client{\prog}{\lib}}{\PSC}}}
\tup{\progstate,\memstate}
\]
To conclude the proof for this case, we let $\tr = \tr' \cdot \tidlab{\tid}{\retlab{\method}{\phi}}$.

{Suppose $\alpha_\lcl = \alpha_\llib = \alpha = \crash$.}
From \cref{eq:comp3} we deduce that $\progstate_\lcl = \progstate_\llib =
\progstate_\Init$. By construction, $\progstate = \progstate_\Init$. We also deduce that $\memstate'_\lcl \asteplab{\crash}{\PSC}
\memstate_\lcl$ and $\memstate'_\llib \asteplab{\crash}{\PSC}
\memstate_\llib$ hold.
We consider $\memstate'_\lcl \asteplab{\crash}{\PSC}
\memstate_\lcl$ first.
By \cref{prop:prog_lib_separates}, since $\lib'$ is safe for $\prog$, $\usedclientlocs{\methodset}{\prog} \cap \NVLoc$ separates $\memstate'_\lcl$, which allows us to deduce $\memstate'_\lcl\rst{\usedclientlocs{\methodset}{\prog}} \asteplab{\crash}{\PSC}
\memstate_\lcl\rst{\usedclientlocs{\methodset}{\prog}}$ by
\cref{lem:memory_disjointness}(\ref{item:crash}).
From the induction
hypothesis $\scLEM(\tr'_\lcl, \tr'_\llib)$, we know that
$\memstate'=\nmergemem{\memstate'_\lcl}{\usedclientlocs{\methodset}{\prog}}
{\memstate'_\llib}{\usedlocs{\lib}}$
Note that by \cref{lem:merge}\ref{item:merge3},
$\memstate'\rst{\unusedlocs{\lib}} = \memstate\rst{\unusedlocs{\lib}} =
\memstate_\lcl\rst{\usedclientlocs{\methodset}{\prog}}$. Therefore, we get
$\memstate'\rst{\unusedlocs{\lib}}
\asteplab{\persistlab}{\PSC}
\memstate\rst{\unusedlocs{\lib}}$.
We consider $\memstate'_\llib \asteplab{\crash}{\PSC}
\memstate_\llib$ now. By \cref{prop:prog_lib_separates}, since $\lib$ is safe for $\prog'$, $\usedlocs{\lib} \cap \NVLoc$ separates $\memstate'_\llib$, which allows us to deduce $\memstate'_\llib\rst{\usedlocs{\lib}}
\asteplab{\crash}{\PSC}
\memstate_\llib\rst{\usedlocs{\lib}}$
by \cref{lem:memory_disjointness}(\ref{item:crash}). From the induction
hypothesis $\scLEM(\tr'_\lcl, \tr'_\llib)$, we know that
$\memstate'=\nmergemem{\memstate'_\lcl}{\usedclientlocs{\methodset}{\prog}}
{\memstate'_\llib}{\usedlocs{\lib}}$,
so we obtain $\memstate'\rst{\usedlocs{\lib}}
\asteplab{\crash}{\PSC} \memstate\rst{\usedlocs{\lib}}$.
Overall, we have $\memstate'\rst{\unusedlocs{\prog}}
\asteplab{\crash}{\PSC}
\memstate\rst{\unusedlocs{\prog}}$ and $\memstate'\rst{\usedlocs{\lib}}
\asteplab{\crash}{\PSC}
\memstate\rst{\usedlocs{\lib}}$ hold.
By \cref{lem:memory_disjointness}(\ref{item:crash}), $\memstate'
\asteplab{\crash}{\PSC} \memstate$, which gives us $\tup{\progstate',\memstate'}
{\asteplab{\crash}{\cs{\client{\prog}{\lib}}{\PSC}}}
\tup{\progstate,\memstate}$

By the induction hypothesis $\scLEM(\tr'_\lcl,
\tr'_\llib)$, for $\tup{\progstate'_\lcl,\memstate'_\lcl}$,
$\tup{\progstate'_\llib,\memstate'_\llib}$ there there exists $\tr'$ such that
$\fhistory{\tr'} =
\fhistory{\tr'_\lcl}$ and
$\tup{\progstate_\Init,\memstate_\Init}
{\asteplab{\tr'}{\cs{\client{\prog}{\lib}}{\PSC}}}
\tup{\progstate',\memstate'}$. It is easy to see that
$\fhistory{\tr' \cdot \crash}
= \fhistory{\tr'_\lcl \cdot \crash}
= \fhistory{\tr_\lcl}$,
and we have shown that:
\[
\tup{\progstate_\Init,\memstate_\Init}
{\asteplab{\tr'}{\cs{\client{\prog}{\lib}}{\PSC}}}
\tup{\progstate',\memstate'}
{\asteplab{\crash}{\cs{\client{\prog}{\lib}}{\PSC}}}
\tup{\progstate,\memstate}
\]
To conclude the proof for this case, we let $\tr = \tr' \cdot \crash$.

{Suppose $\alpha_\lcl = \tidlab{\tid}{\lab_\lcl}$,
$\alpha_\llib = \tidlab{\tid}{\lab_\llib}$, and $\lab_\lcl,\lab_\llib \in \SFLab$.}
Let us assume that $\progstate'(\tid).\lmethod \in \methodset$ (the case when $\progstate'(\tid).\lmethod \nin \methodset$ is analogous).
We deduce from \cref{eq:comp3} that $\tup{\progstate'_\llib(\tid).\lpc,
\progstate'_\llib(\tid).\lphi}
\asteplab{\lab_\llib}{\lib(\progstate'_\llib(\tid).\lmethod)}
\tup{\progstate_\llib(\tid).\lpc, \progstate_\llib(\tid).\lphi}$ holds.
By
construction of $\progstate$ and $\progstate'$, and the concurrent program
transition rules we obtain that $\progstate' \asteplab{\alpha_\llib}{\prog}
\progstate$ holds. We also deduce from \cref{eq:comp3} that
$\memstate'_\lcl \asteplab{\alpha_\lcl}{\PSC}
\memstate_\lcl$ and $\memstate'_\llib \asteplab{\alpha_\llib}{\PSC}
\memstate_\llib$.
We first consider
$\memstate'_\lcl \asteplab{\alpha_\lcl}{\PSC}
\memstate_\lcl$.
By \cref{prop:local},
$\lLOCSET(\alpha_\lcl) \suq \usedlocs{\lib'}$.
Since $\lib'$ is safe for $\prog$, firstly,
$\lLOCSET(\alpha_\lcl) \suq \Loc\setminus\usedclientlocs{\methodset}{\prog}$, and secondly,
by \cref{prop:prog_lib_separates,prop:separates_complement}, $(\Loc\setminus\usedclientlocs{\methodset}{\prog}) \cap \NVLoc$ separates $\memstate'_\lcl$, which allows us to deduce $\memstate'_\lcl\rst{\usedclientlocs{\methodset}{\prog}} \asteptidlab{\tid}{\lSF}{\PSC}
\memstate_\lcl\rst{\usedclientlocs{\methodset}{\prog}}$ by
\cref{lem:memory_disjointness}(\ref{item:sflab}).
From the induction
hypothesis $\scLEM(\tr'_\lcl, \tr'_\llib)$, we know that
$\memstate'=\nmergemem{\memstate'_\lcl}{\usedclientlocs{\methodset}{\prog}}
{\memstate'_\llib}{\usedlocs{\lib}}$.
Note that by \cref{lem:merge}\ref{item:merge3},
$\memstate'\rst{\unusedlocs{\lib}} = \memstate\rst{\unusedlocs{\lib}} =
\memstate_\lcl\rst{\usedclientlocs{\methodset}{\prog}}$. Therefore, we get
$\memstate'\rst{\unusedlocs{\lib}}
\asteptidlab{\tid}{\lSF}{\PSC}
\memstate\rst{\unusedlocs{\lib}}$.
We now consider
$\memstate'_\llib \asteplab{\alpha_\llib}{\PSC}
\memstate_\llib$.
By \cref{prop:prog_lib_separates}, since $\lib$ is safe for $\prog'$, $\usedlocs{\lib} \cap \NVLoc$ separates $\memstate'_\llib$, which allows us to deduce $\memstate'_\llib\rst{\usedlocs{\lib}}
\asteplab{\alpha_\llib}{\PSC}
\memstate_\llib\rst{\usedlocs{\lib}}$
by \cref{lem:memory_disjointness}(\ref{item:sflab}). From the induction
hypothesis $\scLEM(\tr'_\lcl, \tr'_\llib)$, we know that
$\memstate'=\nmergemem{\memstate'_\lcl}{\usedclientlocs{\methodset}{\prog}}
{\memstate'_\llib}{\usedlocs{\lib}}$,
so we obtain $\memstate'\rst{\usedlocs{\lib}}
\asteplab{\alpha_\llib}{\PSC} \memstate\rst{\usedlocs{\lib}}$.
Overall,
$\memstate'\rst{\unusedlocs{\lib}}
\asteptidlab{\tid}{\lSF}{\PSC}
\memstate\rst{\unusedlocs{\lib}}$
and $\memstate'\rst{\usedlocs{\lib}}
\asteplab{\alpha_\llib}{\PSC}
\memstate\rst{\usedlocs{\lib}}$. Since $\tup{\progstate', \memstate'}$ is
reachable, by \cref{prop:prog_lib_separates}, we have that $\usedlocs{\lib}$
separates $\memstate'$. By
\cref{lem:memory_disjointness} (\ref{item:sflab}), $\memstate'
\asteplab{\alpha}{\PSC} \memstate$, which gives us
$\tup{\progstate',\memstate'}
{\asteplab{\alpha_\llib}{\cs{\client{\prog}{\lib}}{\PSC}}}
\tup{\progstate,\memstate}$.

By the induction hypothesis $\scLEM(\tr'_\lcl,
\tr'_\llib)$, for $\tup{\progstate'_\lcl,\memstate'_\lcl}$,
$\tup{\progstate'_\llib,\memstate'_\llib}$ there there exists $\tr'$ such that
$\fhistory{\tr'} =
\fhistory{\tr'_\lcl}$ and
$\tup{\progstate_\Init,\memstate_\Init}
{\asteplab{\tr'}{\cs{\client{\prog}{\lib}}{\PSC}}}
\tup{\progstate',\memstate'}$. Recalling that $\rsthistory{\methodset}{\alpha_\lcl} =
\rsthistory{\methodset}{\alpha_\llib}$, it is easy to see that
$\fhistory{\tr' \cdot \alpha_\llib}
= \fhistory{\tr'_\lcl \cdot \alpha_\lcl}
= \fhistory{\tr_\lcl}$,
and we have shown that:
\[
\tup{\progstate_\Init,\memstate_\Init}
{\asteplab{\tr'}{\cs{\client{\prog}{\lib}}{\PSC}}}
\tup{\progstate',\memstate'}
{\asteplab{\alpha_\llib}{\cs{\client{\prog}{\lib}}{\PSC}}}
\tup{\progstate,\memstate}
\]
To conclude the proof for this case, we let $\tr = \tr' \cdot \alpha_\llib$.
\end{proof}

\abstraction*

\begin{proof}
Let $\methodset=\dom{\lib}$.
It suffices to show $\fhistory{\client{\prog}{\lib}} \suq \fhistory{\client{\prog}{\lib^\sharp}}$.
Then, the claim follows using \cref{lem:composition} (applied with $\lib:=\lib^\sharp$, $\lib':=\lib$, $\prog:=\prog$, and $\prog':=\prog$).
Suppose otherwise, and let $\history$ be a shortest history in
$\fhistory{\client{\prog}{\lib}} \setminus \fhistory{\client{\prog}{\lib^\sharp}}$.
Let $\tr$ be a shortest trace in $\traces{\cs{\client{\prog}{\lib}}{\PSC}}$ with $\fhistory{\tr}=\history$.
Let $\tup{\progstate,\memstate}$ such that
$\tup{\progstate_\Init,\memstate_\Init} \asteplab{\tr}{\cs{\client{\prog}{\lib}}{\PSC}} \tup{\progstate,\memstate}$.
Clearly, $\tr$ cannot be empty (since the empty history is a history of any program).
Consider the last transition in $\tr$,
and let $\tr'$, $\alpha$, and $\tup{\progstate',\memstate'}$,
such that $\tr = \tr' \cdot \alpha$
and $\tup{\progstate_\Init,\memstate_\Init} \asteplab{\tr'}{\cs{\client{\prog}{\lib}}{\PSC}} \tup{\progstate',\memstate'}
\asteplab{\alpha}{\cs{\client{\prog}{\lib}}{\PSC}} \tup{\progstate,\memstate}$.
Let $\history'=\fhistory{\tr'}$.
The minimality of $\tr$ ensures that $\history'$ is a proper prefix of $\history$,
and thus $\alpha$ must correspond to a history label.
In turn, using \cref{prop:h_prefix}, the minimality of $\history$ ensures that
$\history' \in \fhistory{\client{\prog}{\lib^\sharp}}$.

Now, if $\alpha=\crash$, then using \cref{prop:h_crash},
we would have $\history = \fhistory{\tr} = \fhistory{\tr' \cdot\alpha} =
\history' \cdot \crash \in \fhistory{\client{\prog}{\lib^\sharp}}$.
Similarly, if $\alpha=\tidlab{\tid}{\lab}$ for some $\tid\in\Tid$ and $\lab\in\SFLab$, then using \cref{prop:h_sf},
we would have $\history = \fhistory{\tr} = \fhistory{\tr' \cdot\alpha} =
\history' \cdot \tidlab{\tid}{\sflab} \in \fhistory{\client{\prog}{\lib^\sharp}}$.

Hence, we have $\alpha = \tidlab{\tid}{\calllab{\method}{\phi}}$ or
$\alpha = \tidlab{\tid}{\retlab{\method}{\phi}}$
for some $\tid\in\Tid$, $\method\in\MethodNames$, and $\phi:\Reg\to\Val$.
It also follows that
$\progstate'\asteplab{\alpha}{\client{\prog}{\lib}} \progstate$
(since $\tup{\progstate',\memstate'}
\asteplab{\alpha}{\cs{\client{\prog}{\lib}}{\PSC}} \tup{\progstate,\memstate}$
and $\alpha\neq \persistlab$).

We claim that $\progstate'(\tid).\lmethod \in \methodset$ (and so, it must be the case that
$\alpha = \tidlab{\tid}{\retlab{\method}{\phi}}$ for $\method\in\methodset$).
Indeed, suppose otherwise.
Let $\tr'_\sharp$ and $\tup{\progstate'_\sharp,\memstate'_\sharp}$ such that
$\fhistory{\tr'_\sharp}=\history'$
and $\tup{\progstate_\Init,\memstate_\Init} \asteplab{\tr'_\sharp}{\cs{\client{\prog}{\lib^\sharp}}{\PSC}} \tup{\progstate'_\sharp,\memstate'_\sharp}$.
Using \cref{lem:composition} (applied with $\lib:=\lib^\sharp$, $\lib':=\lib$, $\prog:=\prog$, and $\prog':=\prog$),
there exist $\tr''_\sharp$
and $\tup{\progstate''_\sharp,\memstate''_\sharp}$
such that
$\fhistory{\tr''_\sharp} = \history'$,
$\tup{\progstate_\Init,\memstate_\Init}
\asteplab{\tr''_\sharp}{\cs{\client{\prog}{\lib^\sharp}}{\PSC}} \tup{\progstate''_\sharp,\memstate''_\sharp}$,
and $\progstate''_\sharp(\tida) =\progstate'(\tida)$
for every $\tida$ such that
$\progstate'(\tida).\lmethod \nin \methodset$.
Now, since $\progstate'\asteplab{\alpha}{\client{\prog}{\lib}} \progstate$
and $\progstate'(\tid).\lmethod \nin \methodset$,
by \cref{prop:P_or_L1}, we have that $\progstate'\asteplab{\alpha}{\client{\prog}{\lib^\sharp}} \progstate$.
In addition, since $\progstate'(\tid).\lmethod \nin \methodset$,
we also have $\progstate''_\sharp(\tid)=\progstate'_\sharp(\tid)$.
Hence, $\alpha$ is enabled in $\progstate''_\sharp$ (in the LTS $\client{\prog}{\lib^\sharp}$),
and so it is also enabled in $\tup{\progstate''_\sharp,\memstate''_\sharp}$
(in the LTS $\cs{\client{\prog}{\lib^\sharp}}{\PSC}$).
It follows that $\tr''_\sharp\cdot\alpha\in \traces{\cs{\client{\prog}{\lib^\sharp}}{\PSC}}$,
but since $\fhistory{\tr''_\sharp\cdot\alpha} = \history$,
this contradicts the fact that
$\history \nin \fhistory{\client{\prog}{\lib^\sharp}}$.

Since $\prog$ correctly calls $\lib^\sharp$ \wrt $\MGCn$, we have
$\rsthistory{\methodset}{\tr'_\sharp} \in \rsthistory{\methodset}{\MGC{\lib^\sharp}}$.
Let $\tr^*_\sharp$ and $\tup{\progstate^*_\sharp,\memstate^*_\sharp}$ such that
$\rsthistory{\methodset}{\tr^*_\sharp}=\rsthistory{\methodset}{\tr'_\sharp}$
and $\tup{\progstate_\Init,\memstate_\Init} \asteplab{\tr^*_\sharp}{\cs{\MGC{\lib^\sharp}}{\PSC}} \tup{\progstate^*_\sharp,\memstate^*_\sharp}$.
Using \cref{lem:composition} (applied with $\lib:=\lib$, $\lib':=\lib^\sharp$, $\prog:=\MGCn$, and $\prog':=\prog$),
there exist $\tr^*$ and $\tup{\progstate^*,\memstate^*}$
such that
$\rsthistory{\methodset}{\tr^*} = \rsthistory{\methodset}{\tr'_\sharp}$,
$\tup{\progstate_\Init,\memstate_\Init}
\asteplab{\tr^*}{\cs{\MGC{\lib}}{\PSC}} \tup{\progstate^*,\memstate^*}$,
and $\progstate^*(\tida) =
\tup{
\progstate'(\tida).\lpc,
\progstate'(\tida).\lphi,
\progstate^*_\sharp(\tida).\lpcs,
\progstate^*_\sharp(\tida).\lmethod}$
for every $\tida$ such that
$\progstate^*_\sharp(\tida).\lmethod \in \methodset$.
Since $\rsthistory{\methodset}{\tr^*} = \rsthistory{\methodset}{\tr'_\sharp}= \rsthistory{\methodset}{\tr'}$
and $\progstate'(\tid).\lmethod \in \methodset$,
by \cref{prop:same_f}, we have  $\progstate^*_\sharp(\tid).\lmethod \in \methodset$,
and so $\progstate^*(\tid)= \tup{
\progstate'(\tid).\lpc,
\progstate'(\tid).\lphi,
\progstate^*_\sharp(\tid).\lpcs,
\progstate^*_\sharp(\tid).\lmethod}$.
Since $\progstate'\asteplab{\alpha}{\client{\prog}{\lib}} \progstate$,
it follows that $\alpha$ is enabled in $\progstate^*$ (in the LTS $\MGC{\lib}$),
and so it is also enabled in $\tup{\progstate^*,\memstate^*}$
(in the LTS $\cs{\MGC{\lib}}{\PSC}$).

Therefore, we have
$\tr^* \cdot \alpha \in \traces{\cs{\MGC{\lib}}{\PSC}}$,
and so $\rsthistory{\methodset}{\tr} = \rsthistory{\methodset}{\tr'} \cdot \alpha = \rsthistory{\methodset}{\tr^*} \cdot \alpha =
\rsthistory{\methodset}{\tr^* \cdot \alpha} \in \rsthistory{\methodset}{\MGC{\lib}}$.
Then, the assumption that $\lib \sqsubseteq_\MGCn \lib^\sharp$, ensures that
$\rsthistory{\methodset}{\tr}\in \rsthistory{\methodset}{\MGC{\lib^\sharp}}$.

Finally, using \cref{lem:composition} (applied with $\lib:=\lib^\sharp$, $\lib':=\lib$, $\prog:=\prog$, and $\prog':=\MGCn$),
we obtain that $\history=\fhistory{\tr}\in \fhistory{\client{\prog}{\lib^\sharp}}$,
which contradicts our assumption.
\end{proof}

\compositionality*

\begin{proof}
We prove the claim by induction on $n$.
For $n=1$, the claim trivially follows.

For the induction step, let $\lib_1 \til \lib_n, \lib_1^\sharp \til \lib_n^\sharp$ be libraries,
and let $\MGCn$ be a program satisfying the required conditions.
For $\MGCn'=\MGC{\lib_n^\sharp}$, we have that
$$\usedlocs{\lib_1} \til \usedlocs{\lib_{n-1}},
\usedlocs{\lib_1^\sharp} \til  \usedlocs{\lib_{n-1}^\sharp},
\usedclientlocs{\dom{\lib_1 \uplustil \lib_{n-1}}}{\MGCn'}$$ are
pairwise disjoint.
In addition, for every $1\leq i\leq n-1$,
\begin{multline*}
\client{\MGCn'}
{\lib_1^\sharp \uplustil \lib_{i-1}^\sharp \uplus \lib_{i+1}^\sharp\uplustil \lib_{n-1}^\sharp}= \\
\client{\MGC{\lib_n^\sharp}}
{\lib_1^\sharp \uplustil \lib_{i-1}^\sharp \uplus \lib_{i+1}^\sharp\uplustil \lib_{n-1}^\sharp} = \\
\MGC{\lib_1^\sharp \uplustil \lib_{i-1}^\sharp \uplus \lib_{i+1}^\sharp
\uplustil \lib_{n}^\sharp}
\end{multline*}
Hence, for every $1\leq i\leq n-1$, we have
$\lib_i \sqsubseteq_{\MGCn_i} \lib_i^\sharp$
 for
 $$\MGCn_i=\client{\MGCn'}{\lib_1^\sharp \uplustil \lib_{i-1}^\sharp \uplus \lib_{i+1}^\sharp
\uplustil \lib_{n-1}^\sharp}.$$
By the induction hypothesis, it follows that
$\lib_1 \uplustil \lib_{n-1} \sqsubseteq_{\MGCn'} \lib_1^\sharp \uplustil \lib_{n-1}^\sharp$.

Let $\lib = \lib_1 \uplustil \lib_{n-1}$
and $\lib^\sharp = \lib_1^\sharp \uplustil \lib_{n-1}^\sharp$.
Then, we have
$\lib \sqsubseteq_{\MGCn'} \lib^\sharp$,
which implies that
$\fhistory{\MGC{\lib \uplus \lib_n^\sharp}}
\suq \fhistory{\MGC{\lib^\sharp \uplus \lib_n^\sharp}}$.
The latter implies that $\MGC{\lib}$ correctly calls $\lib_n$
\wrt $\MGC{\lib^\sharp}$.
In addition, by assumption we have
$\lib_n \sqsubseteq_{\MGCn_n} \lib_n^\sharp$
 for $\MGCn_n=\MGC{\lib^\sharp}$.
Hence, the abstraction theorem ensures that
 $\fhistory{\MGC{\lib \uplus \lib_n}} \suq
 \fhistory{\MGC{\lib \uplus \lib_n^\sharp}}$.
Together with the fact that
$\fhistory{\MGC{\lib \uplus \lib_n^\sharp}}
\suq \fhistory{\MGC{\lib^\sharp \uplus \lib_n^\sharp}}$,
we obtain that
 $\fhistory{\MGC{\lib \uplus \lib_n}} \suq
 \fhistory{\MGC{\lib^\sharp \uplus \lib_n^\sharp}}$,
 which implies that $\lib \uplus \lib_n \sqsubseteq_{\MGCn} \lib^\sharp \uplus \lib_n^\sharp$,
 and concludes our proof.
\end{proof}

\sim*

 \begin{proof}[Outline]
Let $\history \in \fhistory{\MGC{\lib}}$.
Let $\history_1 \til \history_n$ be \emph{crashless} histories such that
$\history = \history_1 \cdot \crash \cdottil \crash \cdot \history_n$.
Let  $\tr_1  \til \tr_n$ be crashless traces of $\cs{\MGC{\lib}}{\PSC}$,
such that $\fhistory{\tr_i} = \history_i$ for every $1\leq i\leq n$.
Let $\mem_0 \til \mem_n$ be non-volatile memories
such that each $\tr_i$ is $\mem_{i-1}$-to-$\mem_i$.
By repeatedly applying the assumption of the lemma (formally, inducting on $n$),
we obtain a sequence of crashless traces $\tr_1^\sharp  \til \tr_n^\sharp$
of $\cs{\MGC{\lib^\sharp}}{\PSC}$
and non-volatile memories $\mem_0^\sharp \til \mem_n^\sharp$
such that each $\tr_i^\sharp$ is $\mem_{i-1}^\sharp$-to-$\mem_i^\sharp$
and satisfies $\fhistory{\tr_i^\sharp} = \history_i$.
Then, it follows that $\history = \fhistory{\tr_1^\sharp} \cdot \crash \cdottil \crash \cdot \fhistory{\tr_n^\sharp}
\in \fhistory{\MGC{\lib^\sharp}}$.
\end{proof}

\pair*
 \begin{proof}[Sketch]
We use \cref{lem:sim} to prove the claim.
We let $\tup{\mem,\mem^\sharp}\in R$ iff the following hold:
\begin{itemize}
\item If $\mem(\xc)$ is even, then $\mem(\xone)= \mem^\sharp(\xone)$ and $\mem(\xtwo)= \mem^\sharp(\xtwo)$.
\item If $\mem(\xc)$ is odd,  then $\mem(\xonenew)= \mem^\sharp(\xone)$ and $\mem(\xtwonew)= \mem^\sharp(\xtwo)$.
\end{itemize}
Clearly, we have $\tup{\mem_\Init,\mem_\Init} \in R$.
Suppose that $\tup{\mem_0,\mem^\sharp_0} \in R$.
Let $\tr$ be an $\mem_0$-to-$\mem$ crashless trace of $\cs{\client{\MGCrec}{\lib_\mpair}}{\PSC}$.
We show that there exist a non-volatile memory $\mem^\sharp$ and an
$\mem^\sharp_0$-to-$\mem^\sharp$ crashless trace $\tr^\sharp$ of $\cs{\client{\MGCrec}{\lib_\mpair^\sharp}}{\PSC}$,
such that
$\tup{\mem,\mem^\sharp} \in R$
and
$\fhistory{\tr} = \fhistory{\tr^\sharp}$.
First, if $\tr$ ends during the execution of the recovery method, then we obtain $\tr^\sharp$ by executing the call of the recovery method,
and take $\mem^\sharp=\mem^\sharp_0$.
Otherwise, if recovery has completed, then after its completion, the invariant ensures that
$\memstate(\xone) = \memstate^\sharp(\xone)$,
$\memstate(\xtwo) = \memstate^\sharp(\xtwo)$,
and $\memstate(\xc) =0$.
Now, when the states are matching, by reusing the standard linearizability proof for the seqlock algorithm (see \cite{Burckhardt_2012_library_tso}),
we can obtain a trace of $\cs{\client{\MGCrec}{\lib_\mpair^\sharp}}{\PSC}$ with the same history as $\tr$.
It remains to handle persistency related steps,
\ie to decide when persist the block in the run of $\lib^\sharp$,
in a way that establishes the required relation on the non-volatile memories in the end of the trace.
For all complete executions of the write method, we persist the specification block just before the step in which the $\flInst{\xone}$ is executed.
For the incomplete invocations of the write method, we first note that at most one of them may manage to acquire the lock
and persist an odd value of $\xc$ (the rest are waiting in the busy loop, and have nothing to persist).
For that invocation, we persist the block at the point corresponding to the step in which the implementation persists the
odd value of $\xc$. (Note that this mean that we may need to exclude the $\flInst{\xone}$-step from the specification trace,
and we can do so since the invocation did not complete.)
This construction ensures that
$R$ holds for the non-volatile memories in the end of the trace.
\qed\vspace{0.5ex}
 \end{proof}

\bpair*

 \begin{proof}[Sketch]
We use \cref{lem:sim} to prove the claim.
We let $\tup{\mem,\mem^\sharp}\in R$ iff the following hold:
\begin{itemize}
\item If $\mem(\xflag)=0$, then $\mem(\xonenext)= \mem^\sharp(\xone)$ and $\mem(\xtwonext)= \mem^\sharp(\xtwo)$.
\item If $\mem(\xflag)=1$,  then $\mem(\xoneprev)= \mem^\sharp(\xone)$ and $\mem(\xtwoprev)= \mem^\sharp(\xtwo)$.
\end{itemize}
Clearly, we have $\tup{\mem_\Init,\mem_\Init} \in R$.
Suppose that $\tup{\mem_0,\mem^\sharp_0} \in R$.
Let $\tr$ be an $\mem_0$-to-$\mem$ crashless trace of $\cs{\client{\MGCrec}{\lib_\mbpair}}{\PSC}$.
We show that there exist a non-volatile memory $\mem^\sharp$ and
an $\mem^\sharp_0$-to-$\mem^\sharp$ crashless trace $\tr^\sharp$ of $\cs{\client{\MGCrec}{\lib_\mbpair^\sharp}}{\PSC}$,
such that
$\tup{\mem,\mem^\sharp} \in R$
and
$\fhistory{\tr} = \fhistory{\tr^\sharp}$.
First, if $\tr$ ends during the execution of the recovery method, then we obtain $\tr^\sharp$ by executing the call of the recovery method,
and take $\mem^\sharp=\mem^\sharp_0$.
Otherwise, if recovery has completed, then after its completion, the invariant ensures that
$\memstate(\vxone) = \memstate^\sharp(\xone)$ and
$\memstate(\vxtwo) = \memstate^\sharp(\xtwo)$.
In addition, since $\vxc$ is volatile, we also have $\memstate(\vxc) =0$.
Now, when the states are matching, by reusing the standard linearizability proof for the seqlock algorithm (see \cite{Burckhardt_2012_library_tso}),
we can obtain a trace of $\cs{\client{\MGCrec}{\lib^\sharp_\mbpair}}{\PSC}$ with the same history as $\tr$
(in particular, note that the read an flush methods do not interfere whatsoever).
It remains to handle persistency related steps,
\ie to decide when persist the block in the run of $\lib^\sharp$,
in a way that establishes the required relation on the non-volatile memories in the end of the trace.
Our construction performs all these persists just before the $\flInst{\xone}$-step from the specification trace (when the flush method is executed).
If there are incomplete invocations of the flush method in $\tr$, we first note that at most one of them may manage to acquire the lock
and persist $0$ for $\xflag$ (the rest are waiting in the busy loop, and have nothing to persist).
For that invocation, we persist the block at the point corresponding to the step in which the implementation persists $0$ for $\xflag$.
(Note that this mean that we may need to exclude the $\flInst{\xone}$-step from the specification trace,
and we can do so since the invocation did not complete.)
This construction ensures that $R$ holds for the non-volatile memories in the end of the trace.
To show this, one shows that $R$ is in fact an invariant of this construction that holds whenever the lock is not held ($\memstate(\xflag)=0$).
\end{proof}

\else

\vfill

{\small\medskip\noindent{\bf Open Access} This chapter is licensed under the terms of the Creative Commons\break Attribution 4.0 International License (\url{http://creativecommons.org/licenses/by/4.0/}), which permits use, sharing, adaptation, distribution and reproduction in any medium or format, as long as you give appropriate credit to the original author(s) and the source, provide a link to the Creative Commons license and indicate if changes were made.}

{\small \spaceskip .28em plus .1em minus .1em The images or other
third party material in this chapter are included in the\break
chapter's Creative Commons license, unless indicated otherwise in a
credit line to the\break material.~If material is not included in
the chapter's Creative Commons license and\break your intended use
is not permitted by statutory regulation or exceeds the
permitted\break use, you will need to obtain permission directly
from the copyright holder.}

\medskip\noindent\includegraphics{cc_by_4-0.eps}

\fi

\end{document}